\documentclass[oneside,11pt]{article}
\usepackage{graphicx, subfigure}
\usepackage[top=1in,bottom=1in,left=1in,right=1in]{geometry}
\usepackage{amsfonts, amsmath, amssymb, amsthm, bbm, booktabs}
\usepackage{constants}
\usepackage{enumitem}
\usepackage{multirow}
\usepackage{hyperref}
\usepackage[mathscr]{euscript}
\usepackage{pifont}
\usepackage{mathtools}
\usepackage{natbib}
\usepackage{prodint}
\usepackage[normalem]{ulem}
\usepackage{comment}
\usepackage{etoc} 
\usepackage{mathrsfs}

\usepackage{color}
\definecolor{darkred}{RGB}{100,0,0}
\definecolor{darkgreen}{RGB}{0,100,0}
\definecolor{darkblue}{RGB}{0,0,150}

\usepackage{hyperref}
\hypersetup{colorlinks=true, linkcolor=darkred, citecolor=darkgreen, urlcolor=darkblue}
\usepackage{url}
\usepackage[dvipsnames]{xcolor}

\usepackage{cancel}
\usepackage{xcolor}

\usepackage{amsfonts, amsmath, amssymb, amsthm, bbm}
\usepackage{constants}
\usepackage{enumitem}
\usepackage{multirow}
\usepackage[mathscr]{euscript}
\usepackage{pifont}
\usepackage{mathtools}
\usepackage{natbib}
\usepackage{prodint}
\usepackage{mathrsfs}

\newtheorem{thm}{Theorem}
\newtheorem*{thm*}{Theorem}

\newtheorem{lem}{Lemma}
\newtheorem{cor}{Corollary}
\newtheorem{assump}{Assumption}
\newtheorem*{def*}{Definition}

\theoremstyle{remark}

\newtheorem{rem}{Remark}

\def\beqn{\begin{eqnarray*}}
\def\eeqn{\end{eqnarray*}}
\def\Bitem{\begin{itemize}\setlength{\itemsep}{.2in}}
\def\bitem{\begin{itemize}\setlength{\itemsep}{.05in}}
\def\eitem{\end{itemize}}
\def\Benum{\begin{enumerate}\setlength{\itemsep}{.2in}}
\def\benum{\begin{enumerate}\setlength{\itemsep}{.05in}}
\def\eenum{\end{enumerate}}
\def\bmult{\begin{multline*}}
\def\emult{\end{multline*}}
\def\bcenter{\begin{center}}
\def\ecenter{\end{center}}
\def\bframe{\begin{frame}}
\def\eframe{\end{frame}}

\newcommand{\thmref}[1]{Theorem~\ref{thm:#1}}
\newcommand{\assumpref}[1]{Assumption~\ref{assump:#1}}

\newcommand{\corref}[1]{Corollary~\ref{cor:#1}}
\newcommand{\lemref}[1]{Lemma~\ref{lem:#1}}
\newcommand{\secref}[1]{Section~\ref{sec:#1}}
\newcommand{\figref}[1]{Figure~\ref{fig:#1}}

\def\cD{\mathcal{D}}

\def\cF{\mathcal{F}}

\def\cI{\mathcal{I}}

\def\cL{\mathcal{L}}

\def\cN{\mathcal{N}}

\def\cX{\mathcal{X}}

\def\bbR{\mathbb{R}}

\renewcommand{\P}{\operatorname{\mathcal{P}}}
\newcommand{\p}{\operatorname{\mathrm{pr}}}
\newcommand{\var}{\operatorname{\mathrm{var}}}

\def\Unif{\text{Unif}}

\def\1{\mathbbm{1}}

\def\TV{\mathrm{TV}}
\def\f{f_\gamma}
\def\d{\frac{d}{d \gamma}}
\def\opr{o_{p}(n^{-1/2})}
\def\cf{\mathrm{cf}}
\newcommand{\abs}[1]{\left| #1 \right|}
\newcommand{\cur}[1]{\left\{ #1 \right\}}
\newcommand{\squ}[1]{\left[ #1 \right]}
\newcommand{\bra}[1]{\left( #1 \right)}

\newcommand{\nsuptheta}[1]{\left\| #1 \right\|_{\Theta}}

\def\supt{\sup_{t \in [0,\tau]}}

\def\supth{\sup_{\theta \in \cD}}
\def\suptl{\sup_{t,l}}
\def\suptheta{\sup_{\theta \in \Theta}}

\def\suplam{\sup_{t\in[0,\tau]} \left|\hat\Lambda(t|L)-\Lambda_0(t|L)\right|}
\def\supmu{\sup_{t\in[0,\tau]} \left|\hat\mu(t,L)-\mu_0(t,L)\right|}
\def\suph{\sup_{t\in[0,\tau]} |h(t,L)|}
\def\suplaml{\sup_{t\in[0,\tau]} \left|\hat\Lambda(t|l)-\Lambda_0(t|l)\right|}

\def\suplami{\sup_{t\in[0,\tau]} \left|\hat\Lambda(t|L_i)-\Lambda_0(t|L_i)\right|}
\def\supmui{\sup_{t\in[0,\tau]} \left|\hat\mu(t,L_i)-\mu_0(t,L_i)\right|}

\def\suplamk{\sup_{t\in[0,\tau]} \left|\hat\Lambda_{-k}(t|L)-\Lambda_0(t|L)\right|}
\def\supmuk{\sup_{t\in[0,\tau]} \left|\hat\mu_{-k}(t,L)-\mu_0(t,L)\right|}

\def\nsuplam{\left\| \hat \Lambda - \Lambda_0 \right\|_{\sup,2}}
\def\nsupmu{\left\| \hat \mu - \mu_0 \right\|_{\sup,2}}

\def\edg{E_\dagger}

\def\dgsuplam{\sup_{t\in[0,\tau]} |\hat\Lambda(t|L_\dagger)-\Lambda_0(t|L_\dagger)|}

\def\dgsupmu{\sup_{t\in[0,\tau]} |\hat\mu(t,L_\dagger)-\mu_0(t,L_\dagger)|}

\def\dgsupnlam{\| \hat \Lambda - \Lambda_0 \|_{\dagger,\sup,2}}

\def\dgsupnmu{\| \hat \mu - \mu_0 \|_{\dagger,\sup,2}}

\def\dgsupnmuk{\| \hat \mu_{-k} - \mu_0 \|_{\dagger,\sup,2}}
\def\dgsupnlamk{\| \hat \Lambda_{-k} - \Lambda_0 \|_{\dagger,\sup,2}}

\newcommand{\suppformatappendix}{
    \setcounter{section}{0}
    \setcounter{subsection}{0}
    \setcounter{equation}{0}
    \setcounter{figure}{0}
    \setcounter{table}{0}
    
    \renewcommand{\thesection}{S\arabic{section}}  
    \renewcommand{\thesubsection}{\thesection.\arabic{subsection}} 
    \renewcommand{\theequation}{S\arabic{equation}} 
    \renewcommand{\thefigure}{S\arabic{figure}}
    \renewcommand{\thetable}{S\arabic{table}}

    \renewcommand{\theHsection}{appendix.section.\thesection}
    \renewcommand{\theHsubsection}{appendix.subsection.\thesubsection}
    \renewcommand{\theHfigure}{appendix.figure.\thefigure}
    \renewcommand{\theHtable}{appendix.table.\thetable}
    \renewcommand{\theHequation}{appendix.equation.\theequation}
}

\title{\bf{Efficient Inference for Incremental Causal Effects of Time to Treatment}}

\author{
Zhichen Zhao$^{\dagger}$ \and
Andrew Ying \and
Ronghui Xu$^{\dagger,\ddagger,\S}$
}

\date{\small
$^{\dagger}$Department of Mathematics \\
$^{\ddagger}$Herbert Wertheim School of Public Health \& Human Longevity Science \\
$^{\S}$Hal{\i}c{\i}o{\u{g}}lu Data Science Institute \\
University of California San Diego
}

\begin{document}


\maketitle

\begin{abstract}
We consider continuous time to treatment initiation. This can commonly occur in preventive medicine, such as disease screening and vaccination; it can also occur with non-fatal health conditions such as HIV infection without the onset of AIDS. While traditional causal inference focused on `when to treat' and its effects, 
we consider the incremental causal effect when the intensity of time to treatment initiation is intervened upon. We derive the efficient influence function for this estimand and develop an estimation framework that accommodates flexible machine learning methods while achieving fast convergence rates. 
Valid confidence bands are obtained leveraging empirical process theory. 
We illustrate our approach via simulation, 
and apply it to cervical cancer screening data to study the incremental effect of time to subsequent HPV testing on cervical intraepithelial neoplasia detection.
\end{abstract}

{\it Keywords: Event study; Incremental intervention; Machine learning; Positivity.}

\section{Introduction} 

Causal inference studies “what happens” under different types of interventions. Early work focused on \emph{static deterministic} interventions \citep{rubin1974estimating, hernan2020causal}, where treatment is fixed uniformly across individuals; for example, the average treatment effect (ATE) compares outcomes under regimes in which everyone is treated versus everyone is untreated. Here, “static” contrasts with “dynamic” interventions, which allow treatment to depend on covariates and evolving histories \citep{robins1986new, murphy2003optimal, robins2004optimal, moodie2007demystifying, young2011comparative, haneuse2013estimation, rytgaard2022continuous, ying2024causality2}, and “deterministic” contrasts with “stochastic” interventions, which assign treatment according to a probability distribution \citep{cain2010start, diaz2012population, van2018stochastic, diaz2020causal}.

Stochastic dynamic interventions have received more attention in the literature recently.  
A stochastic dynamic intervention, which shift the odds of receiving a binary treatment, has been referred to as an \emph{incremental intervention} \citep{kennedy2019nonparametric, naimi2021incremental, kim2021incremental, sarvet2023longitudinal, bonvini2023incremental}. 
This can be seen as `interpolates' between the more extreme scenarios of ``everyone is treated'' and ``everyone is untreated.''
The corresponding causal estimand under an incremental intervention is called the incremental causal effect.  
Unlike the ATE which helps to answer questions for an individual subject like ``what happens if I get the treatment versus not getting the treatment,'' the incremental causal effect 
 can answer policy related question such as, if we provide incentive or improve accessibility to health services, how that might affect the outcomes in a certain population \citep{bonvini2023incremental}.

While the literature has begun to explore these incremental effects in discrete-time settings \citep{kennedy2019nonparametric, sarvet2023longitudinal}, many processes of interest occur in continuous time.
\cite{ying2025incremental}. first generalized the incremental intervention framework to allow for continuous time to treatment initialization, established its identification 
and proposed an estimation framework using inverse probability weighting (IPW).

As a motivating example, consider cervical cancer screening in Norway \citep{roysland2025graphical}, where 
  women aged 25–69 have been advised since 1995 to have the cytology exam every three years. The goal of the exam is to identify  those with cervical intraepithelial lesion grade 2 or 3 (CIN2+) which can then be treated. Some of the cytology exams yield inconclusive results, and since 2005 human papillomavirus (HPV)  testing has been used to guide future treatment strategies. 
  Between 2005 and 2010, the three most commonly used HPV tests in Norway were Amplicor, Hybrid Capture 2 (HC2), and PreTectProofer \citep{nygard2014comparative}. 
When used following an inconclusive cytology finding, negative results from the PreTectProofer test have been observed to be more frequently followed by eventual detection of CIN2+ compared to its competitors, suggesting a higher false-negative rate \citep{haldorsen2011secondaryscreening, nygard2014comparative, roysland2025graphical}. However, individuals in the PreTectProofer group were also subject to more  subsequent testing \citep{nygard2014comparative}, as shown in \figref{figures/hpv_KM}. 
\begin{figure}[h]
\centering
\includegraphics[width=0.5 \textwidth]{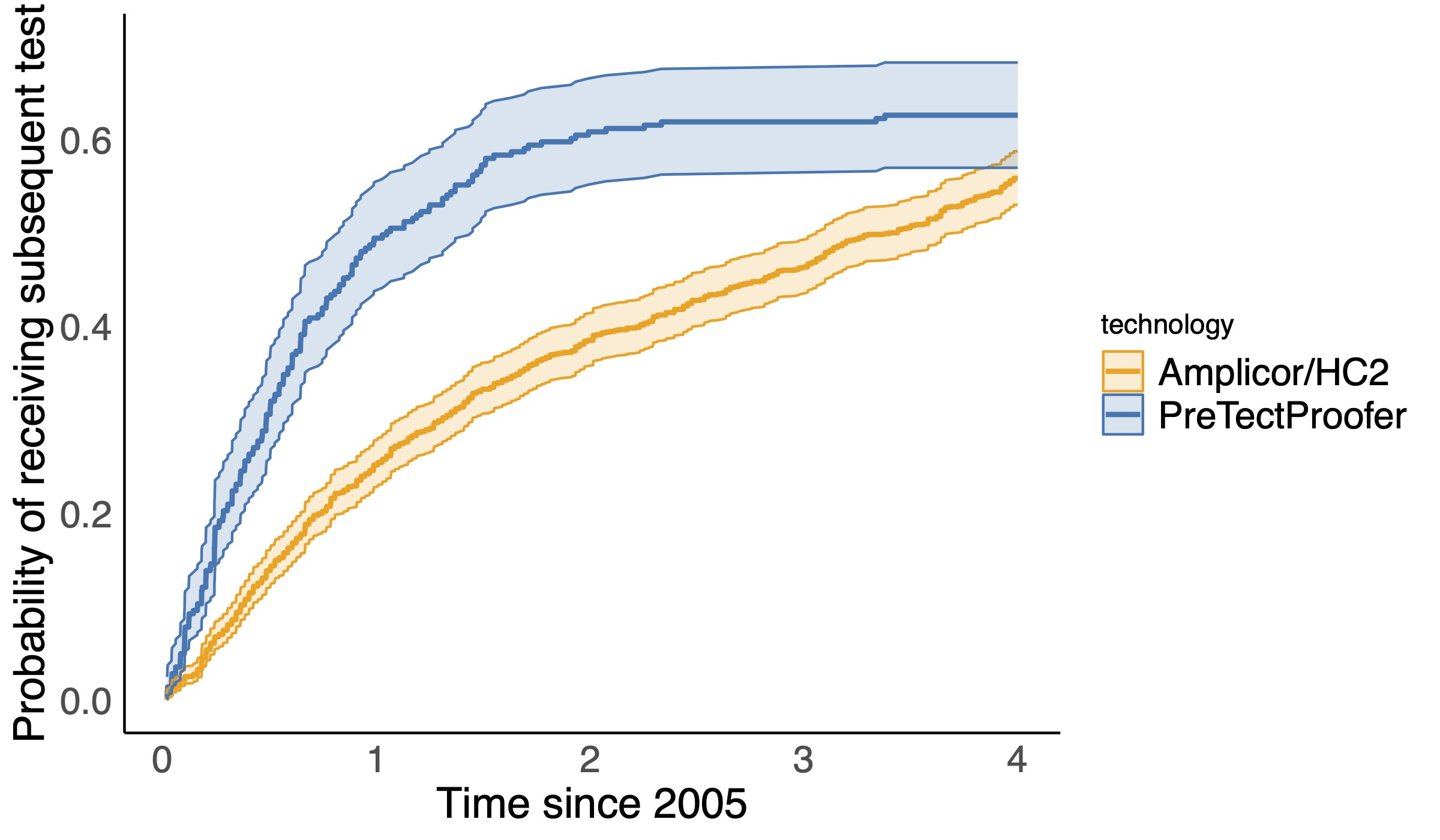}
\caption{Kaplan-Meier estimate of probability of receiving subsequent testing 
with pointwise 95\% confidence intervals.}
\label{fig:figures/hpv_KM}
\end{figure}
A natural question is whether the higher detection rate is due to more frequent testing, and what would the CIN2+ detection rates look like if women from the PreTectProofer group were subject to the same frequency of follow-up testing as the other groups?

In addition to its apparent different interpretation, 
another feature of  the incremental causal effect is that it does not require positivity for identification from the observed data distribution. 
Positivity is necessary in order to identify 
traditional causal estimands such as the ATE, which requires that every individual have a positive probability of receiving each treatment level. 
In practice positivity can be unrealistic, in particular with multiple time points or in continuous time. 
Incremental  interventions, on the other hand, does not assign treatment to different levels but instead perturbs the  treatment distribution. As a result     
{positivity} assumption is not needed in order to identify the incremental causal effects 
\citep{kennedy2019nonparametric}.

While \cite{ying2025incremental} developed the IPW approach, it is not optimal in the senses that for the nuisance parameters that are involved, it requires correct (semi-)parametric model specification in order to guarantee consistency of estimation, and 
 modern flexible nonparametric or machine learning (ML) methods cannot be applied for nuisance estimation while still having $\sqrt{n}$-inference. 
In contrast, the efficient influence function which possesses Neyman orthogonality \citep{tsiatis2006semiparametric, chernozhukov2018double}, can provide $\sqrt{n}$-consistent and asymptotically normal estimators under weaker conditions on the nuisance estimation \citep{chernozhukov2018double, kennedy2019nonparametric}.
In this paper, building on \cite{ying2025incremental}, we derive the efficient influence function for the incremental effect of time to treatment. We study the resulting estimators that allow both (semi)parametric modeling and flexible nonparametric and machine learning approaches for nuisance estimation. We further develop asymptotically valid uniform confidence bands under any arbitrary classes of incremental interventions, with minor conditions using empirical process theory. The estimators are applied to the Norwagian cervical cancer screening data introduced above. 

\section{
Incremental intervention and effect} \label{sec:iden}


Let $T$ be time to (initializing) a certain treatment, and $Y$ is an outcome of interest measured at time $\tau$. 
Here the treatment `process' is a 0-1 one: an individual is not treated until time $T$, when they become treated. In discrete time this is often referred to as event study. 
We observe $Y$, $ U = T \wedge \tau$, and baseline covariates $L$. 
Let $\Delta  = 1(T < \tau) = 1(U < \tau)$. 
We assume that $T$ is absolutely continuous with support $[0, \infty)$,  and $L$ has support $\cL$. The conditional hazard function of $T$ at time $t$ given $L = l$ is $\lambda(t|l) = \lim_{h \to 0}\p(t \leq T < t + h| T \geq t, L = l)/h$, and 
$\Lambda(t|l) = \int_0^t\lambda(s|l)ds$.
Note that the hazard function is a natural quantity  to describe the intensity of time to treatment \citep{yang2018modeling, yang2020semiparametric}.

Before discussing the incremental intervention and effect under the above time to treatment setting, as contrast we briefly review  the methodology  and common applications of deterministic (i.e.~non-stochastic) intervention and their causal effects. Later in this section we also contrast the identification of the causal effects under these two different types of interventions, in particular with regard to the positivity assumption. 
The deterministic  intervention for time to treatment  falls under the general dynamic treatment regime pioneered in \cite{murphy2001marginal},  
and was considered under the marginal structure model setup for continuous time to treatment using semiparametric Cox regression type modeling in  \cite{johnson2005semiparametric} and \cite{yang2018modeling}. Meanwhile matching approaches under this setting have also been considered in \cite{li:rosen} and \cite{lu:2005}. In addition, \cite{yang2020semiparametric} considered structural nested accelerated failure time model, and \cite{picciotto2012structural} considered structural nested cumulative failure time models with the concept of a `blip' function. Finally optimal treatment regimen under this setting was considered in \cite{nie2021learning}. 
All these deterministic intervention methodologies are applicable to time to treatment initialization or discontinuation, and in our own work on  vaccination during pregnancy \citep{cham:2013, xu:luo:glyn, cham:2016}. They answer the types of questions for an individual patient or pregnant woman about ever treated versus never treated, or about the effect of timing of the treatment. 

In this paper 
as motivated by the HPV testing example in the Introduction, 
 we would like to answer a different type of health \emph{policy relevant} questions like,  ``what if the subsequent testing intensity of women from the PreTectProofer group were reduce to that of the Amplicor/HC2 groups?''
In general, we may consider modifying the hazard $\lambda(t|l)$  by any multiplicative factor $\theta(t, l) > 0$. 
Note that a constant $\theta$ may give the simplest interpretation,  and can be termed a \emph{proportional hazards} intervention.
More generally though, we would want to allow, from a policy point of view, recommendation of intensified disease screening for certain high risk groups described via $l$, or after a certain age $t$. 

We define the potential outcome \citep{neyman1923applications, rubin1974estimating, holland1986statistics} $Y_t$ if the treatment occurs at time $t$. 
Since the outcome is measured at time $\tau$, for $t \geq \tau$ we denote $Y_t = Y_{\tau} = Y_\infty$ for someone who has not been treated by the time $\tau$. 
A static deterministic intervention with a target estimand like the ATE would contrast $E(Y_t)$ versus $E(Y_s)$ for some $t\neq s$. On the other hand, 
in this paper we consider $T(\theta)$  a random draw from the distribution with hazard function $\theta(t,l)\cdot\lambda(t|l)$. 
The corresponding potential outcome under $T(\theta)$ is then 
$Y_{T(\theta)}$, and the incremental causal effect index by $\theta$ is defined as
\begin{equation}\label{eq:para}
    \psi(\theta) = E\{Y_{T(\theta)}\}.
\end{equation}
In particular when $\theta\equiv 1$, $\psi(1) = E\{Y_{T(1)}\} = E(Y)$ corresponds to the expectation under the factual distribution of $T$ that we have observed.

\cite{ying2025incremental} showed that under Assumptions \ref{assump:cons} and \ref{assump:sra} below, 
the incremental causal effect curve $\psi(\theta)$ can be identified by the observed data distribution via
\begin{equation}\label{eq:iden}
    \psi(\theta) = E\{Y_{T(\theta)}\} = E\left[ Y \theta(T,L)^{\Delta}e^{-\int_0^{T \wedge \tau}\{\theta(t,L) - 1\}d\Lambda(t|L)} \right].
\end{equation}
\begin{assump}[Consistency] \label{assump:cons}
$    Y_{T \wedge \tau} = Y$.
\end{assump}
\begin{assump}[No Unmeasured Confounding]\label{assump:sra}
$  T \perp Y_t~|~L$, for all $t \in [0, \tau] $.
\end{assump}

\begin{rem}
We note that Assumptions \ref{assump:cons} and \ref{assump:sra} are commonly assumed for causal inference. On the other hand, positivity is not required. This can be understood from the heuristics below, and again we contrast with deterministic interventions. 
The observed data likelihood for a single observation is
\begin{align*}
    P = \p(L) \lambda(T|L)^{\Delta} e^{-\int_0^{T \wedge \tau} d\Lambda(t|L)} \p(Y|T\wedge\tau, L).
\end{align*}
Traditional non-stochastic dynamic interventions 
typically operate by replacing 
$\lambda(T|L)^{\Delta} e^{-\int_0^{T \wedge \tau} d\Lambda(t|L)}$ in the above by some pre-specified conditional distribution of $T$ given $L$. This necessitates a positivity assumption: the support of the intervention distribution must be a subset of the support of the observed distribution, in order for the Radon-Nikodym derivative to exist. 
In contrast, the target likelihood under the incremental intervention is
\begin{align*}
    P_{\theta} = \p(L)\{\theta(T,L) \lambda(T|L)\}^{\Delta} e^{-\int_0^{T \wedge \tau} \theta(t,L) d\Lambda(t|L)} \p(Y|T\wedge\tau, L).
\end{align*}
Crucially, one can see that when $P = 0$, $P_\theta = 0$: hence $P_\theta$ is absolutely continuous with respect to $P$ without any positivity assumption. The Radon-Nikodym derivative that gives rise to \eqref{eq:iden} is
\begin{equation*} 
    \frac{\textup{d}P_{\theta}}{\textup{d}P} = \frac{\{\theta(T,L) \lambda(T|L)\}^{\Delta} e^{-\int_0^{T \wedge \tau} \theta(t,L) d\Lambda(t|L)}} 
    {\lambda(T|L)^{\Delta} e^{-\int_0^{T \wedge \tau} d\Lambda(t|L)}} = {\theta(T,L)}^{\Delta} e^{-\int_0^{T \wedge \tau} \{\theta(t,L)-1\} d\Lambda(t|L)}.
\end{equation*}
\end{rem}

The identification in \eqref{eq:iden} naturally leads to an IPW estimator in  \cite{ying2025incremental}:
$$\hat \psi_{\text{ipw}}(\theta) = \frac{1}{n}\sum_{i = 1}^n Y_i \theta(T_i, L_i)^{\Delta_i}e^{-\int_0^{T_i \wedge \tau}\{\theta(t, L_i) - 1\}d\hat \Lambda(t|L_i)},
$$ 
where $\hat\Lambda(t|l)$ is  estimated from data.
In order to guarantee $\sqrt{n}$-inference, (semi-)parametric models have to be used for $\hat\Lambda(t|l)$, which is subject to model misspecification. 
On the other hand, efficient influence function (EIF) possesses the property of \textit{Neyman orthogonality} \citep{tsiatis2006semiparametric, chernozhukov2018double}, and as a result flexible nonparametric or machine learning estimators can be used for nuisance estimation while maintaining $\sqrt{n}$-inference for $\psi(\theta)$ with cross-fitting. This is what we will pursue below.

\section{Efficient influence function}


As discussed above, the efficient influence function plays a central role in constructing estimators that admit valid $\sqrt{n}$-inference while allowing the nuisance functions to be estimated by machine learning (ML) methods. It is a core concept in semiparametric theory \citep{bickel1993efficient, van2000asymptotic, tsiatis2006semiparametric, kosorok2008introduction, kennedy2016semiparametric, ying2026deepening}, which studies efficient estimation of a target parameter in infinite-dimensional models, including the nonparametric model considered here. Intuitively, the influence function identifies the component of variation in the observed data distribution that is most directly informative about the target parameter, while being orthogonal to directions that primarily change nuisance features of the model. This is closely related to the Neyman orthogonality property discussed earlier: the resulting estimating function inspired by the influence function, is locally insensitive, to the first order, to errors in nuisance estimation. Consequently, estimation errors in the nuisance functions can affect the target parameter estimator only through second-order remainder terms. This orthogonality is what permits a broad class of ML methods to be used for nuisance estimation without invalidating $\sqrt{n}$-inference, provided the nuisance estimators converge sufficiently fast.
Under a nonparametric model such as ours, the influence function of the estimand is unique and therefore coincides with the efficient influence function, which attains the semiparametric efficiency bound. 
 \citep{tsiatis2006semiparametric}. 
In this section we derive the EIF for the incremental causal effect  $\psi(\theta)$ following the approach in \cite{hines2022demystifying}.

Denote $\mu(t, l) = E(Y|U = t, L = l)$ for $t \in [0, \tau]$ and $l \in \cL$. 
For a real-valued function $f(t,l)$ with $t \in [0, \tau]$ and $l \in \cL$, the total variation of $f(\cdot,l)$ on the interval $[0, \tau]$ is $\TV\{f(\cdot,l)\} = \sup_{\Pi}\sum_{j=0}^{J-1}|f(x_{j+1},l)-f(x_{j},l)|$, 
where $\Pi$ denotes the set of all possible partitions $0=x_{0}<x_{1}<...<x_{J}=\tau$ of $[0, \tau]$. 
In the following subscript `0' denotes true value of the parameter under which the data is generated. 

\begin{assump}[Boundedness] \label{assump:bound}
Assume:
\begin{enumerate}[label=\theassump\alph*., ref=\theassump\alph*]
\item \label{assump:bound:y}
$E(Y^4) < \infty$. 
\item \label{assump:bound:theta}
There exist constants $0 < c \leq C < \infty$ such that $c \leq \theta(t,l) \leq C$, for all $(t,l) \in [0,\tau] \times \cL$.
\item \label{assump:bound:nuisance}
The functions $\Lambda_0(t|l)$ and $\mu_0(t, l)$ are uniformly bounded over $t \in [0,\tau]$ and $l \in \cL$.
\item \label{assump:bound:tv}
$\TV\cur{\theta(\cdot, l)}$ and $\TV\cur{\mu_0(\cdot,l)}$ are bounded over $l \in \cL$.
\end{enumerate}
\end{assump}
Assumptions~\ref{assump:bound:y} - \ref{assump:bound:nuisance} are  regularity conditions to ensure that the EIF in \thmref{eifiden} below has finite variance.
\assumpref{bound:tv} 
 is a mild requirement that the functions are not too `jagged' over time $t$, and 
 satisfied by a broad class including Lipschitz and absolutely continuous functions.

\begin{thm}[Efficient Influence Function]\label{thm:eifiden}
Under Assumptions \ref{assump:cons} - \ref{assump:bound}, the efficient influence function for the incremental causal effect $\psi(\theta)$ is given by $\phi(\theta; \Lambda_0, \mu_0) -\psi(\theta)$, where
\begin{align}
\phi(\theta; \Lambda_0, \mu_0) =& Y \theta(U,L)^{\Delta} e^{-\int_0^U\{\theta(v,L) - 1\}d\Lambda_0(v|L)} \notag \\
& + \int_0^\tau \mu_0(u, L) \theta(u, L)^\delta e^{-\int_0^u\{\theta(v, L)-1\} d \Lambda_0(v | L)} \left[\int_0^{U \wedge u} \left\{\theta(v, L)-1\right\} d e^{\Lambda_0(v | L)} \right] d F_0(u|L) \notag \\
& - \frac{\theta(U, L) - 1}{e^{-\Lambda_0(U | L)}} \int_{U+}^\tau \mu_0(u, L) \theta(u, L)^\delta e^{-\int_0^u \{\theta(v, L)-1\} d \Lambda_0(v | L)} d F_0(u|L), \label{eq:org_eif}
\end{align}
 $\delta = 1(u < \tau)$ and $ F_0(u|l) = 1 -e^{- \Lambda_0(u|l)}$.
\end{thm}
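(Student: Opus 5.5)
Under \assumpref{cons} and \assumpref{sra}, \eqref{eq:iden} reduces the problem to computing the pathwise derivative of a statistical functional of the observed data law. It is convenient to rewrite that functional using only the observed-data nuisances. Conditioning on $(U,L)$ gives
\begin{equation*}
\psi(\theta)=E\Big[\int_0^\tau \mu(u,L)\,\theta(u,L)^{\delta}\,e^{-\int_0^u\{\theta(v,L)-1\}d\Lambda(v|L)}\,dF(u|L)\Big].
\end{equation*}
Here $dF(u|L)$ denotes the law of $U$ given $L$: it has density $e^{-\Lambda(u|L)}d\Lambda(u|L)$ on $[0,\tau)$ and an atom $e^{-\Lambda(\tau|L)}$ at $u=\tau$, which is why $\delta=1(u<\tau)$ appears. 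The functional therefore depends on the law of $L$, the hazard $\Lambda$ of $U$ given $L$, and $\mu$.

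Following \cite{hines2022demystifying}, I will take a regular parametric submodel $P_\epsilon$ through $P_0$ with score $S=S_L+S_{U|L}+S_{Y|U,L}$. I will compute $\partial_\epsilon\psi_\epsilon|_{\epsilon=0}$ and write it as $E\{(\phi-\psi)S\}$. The derivative splits into three pieces, one for each factor of the likelihood.

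\textbf{$L$-marginal.} Perturbing the law of $L$ contributes the centred integrand $g(L)-\psi(\theta)$, where $g(L)=\int_0^\tau \mu\,\theta^{\delta}e^{-\int\{\theta-1\}d\Lambda}\,dF$.

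\textbf{$Y$ given $(U,L)$.} Perturbing $\mu$ gives the standard term $\{Y-\mu(U,L)\}\,\theta(U,L)^{\Delta}e^{-\int_0^U\{\theta-1\}d\Lambda}$. Added to the $L$-piece, the $\mu(U,L)$ part will later combine with the $U$-piece.

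\textbf{$U$ given $L$.} This is the main step. It is cleanest to regard the whole integrand as a functional of $\Lambda$ through both the weight and $dF$. The identity
\begin{equation*}
\theta^{\delta}e^{-\int_0^u\{\theta-1\}d\Lambda}\,dF(u)=\theta(u)^{\delta}e^{-\int_0^u\theta\, d\Lambda}\{d\Lambda(u)\ \text{or atom}\}
\end{equation*}
shows that $g(L)$ is the mean of $\mu$ under the intervened law of $T(\theta)\wedge\tau$. So $g(L)=\int_0^\tau \mu(u)\,\theta(u)\,e^{-\int_0^u\theta\,d\Lambda}d\Lambda(u)+\mu(\tau)e^{-\int_0^\tau\theta\,d\Lambda}$.

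The EIF contribution of a functional of a conditional hazard is obtained by replacing $d\Lambda$ with the martingale increment $dM(v)=dN(v)-1(U\ge v)d\Lambda(v)$, divided by the at-risk probability $e^{-\Lambda(v)}$. I will therefore Gateaux-differentiate $g$ in the direction $h\,d\Lambda$ and then substitute $h\,d\Lambda(v)\to dM(v)/e^{-\Lambda(v|L)}$. Differentiating gives two parts. The first comes from the factor $\theta\,d\Lambda$ at $u$, namely $\mu(v)\theta(v)e^{-\int_0^v\theta\,d\Lambda}\,h(v)d\Lambda(v)$. The second comes from the exponent, namely $-\theta(v)h(v)d\Lambda(v)\times(\text{intervened mean of }\mu\text{ beyond }v)$.

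Substituting $dM$ and integrating, the $dN$ parts evaluate at $v=U$. In the original parametrisation they produce the single-point term
\begin{equation*}
-\frac{\theta(U,L)-1}{e^{-\Lambda(U|L)}}\int_{U+}^\tau\cdots dF(u|L)
\end{equation*}
together with pieces that cancel $-\mu(U,L)\theta^{\Delta}e^{\cdots}$ from the $Y$-piece. The compensator parts $\int_0^U\cdots d\Lambda(v)/e^{-\Lambda(v)}=\int_0^{U}\cdots\, d e^{\Lambda(v)}$ generate the double integral with inner integral $\int_0^{U\wedge u}\{\theta(v,L)-1\}\,de^{\Lambda_0(v|L)}$, after I swap the order of integration via Fubini.

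The main obstacle is bookkeeping rather than anything conceptual. I have to exchange the order of integration and merge the $\theta$ and the $1$ parts so that they combine into $\theta-1$. I also have to handle the atom at $\tau$ and the boundary $U+$ correctly, and show that the $\mu$-terms cancel and leave exactly the first line $Y\theta^{\Delta}e^{\cdots}$. To keep the signs under control I will verify the result by integration by parts on $[0,\tau]$. This is where \assumpref{bound:tv} enters: bounded variation of $\theta(\cdot,l)$ and $\mu_0(\cdot,l)$ makes the Stieltjes integrals and the integration by parts legitimate.

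Finally, \assumpref{bound:y}--\ref{assump:bound:nuisance} ensure that each term is square integrable, so $\phi-\psi\in L_2(P_0)$. A direct check shows that $E\{\phi(\theta;\Lambda_0,\mu_0)\}=\psi(\theta)$, because the two correction terms are mean-zero martingale integrals. Because the model is nonparametric, this influence function is the EIF.
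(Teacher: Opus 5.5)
Your route is correct but differs from the paper's. The paper differentiates the IPW representation $\psi(\theta)=E\{g(O,\P)\}$, with $g=Y\theta(U,L)^{\Delta}e^{-\int_0^{U-}\{\theta(v,L)-1\}d\Lambda_0(v|L)}$, using the point-mass identity of \cite{hines2022demystifying}. There, $g(\tilde o,\P)-\psi(\theta)$ gives the first line of \eqref{eq:org_eif} at once. The only other contribution is the perturbation of the hazard inside the weight: its compensator part ($T_1$) and jump part ($T_2$) become, after integrating out $Y$, exactly the double integral and the $\int_{U+}^\tau$ term. So $\mu_0$ is never differentiated and nothing has to cancel.

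You differentiate the g-formula representation instead. Write $S(v)=e^{-\Lambda_0(v|L)}$, $S_\theta(v)=e^{-\int_0^v\theta(s,L)d\Lambda_0(s|L)}$, $w(u,L)=\theta(u,L)^{\delta}S_\theta(u)/S(u)$, and $R(v)=E\{\mu_0(U',L)w(U',L)1(U'>v)\mid L\}$, with $U'$ an independent copy of $U$ given $L$. Your route then generates three pieces: $g(L)-\psi(\theta)$ (your $g(L)$), $\{Y-\mu_0(U,L)\}w(U,L)$, and $\int_0^\tau\theta(v,L)\{\mu_0(v,L)S_\theta(v)-R(v)\}S(v)^{-1}dM(v)$. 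Write $\theta R=R+(\theta-1)R$. Everything except $Yw(U,L)-\psi(\theta)-\int_0^\tau\{\theta(v,L)-1\}R(v)S(v)^{-1}dM(v)$ then cancels by the martingale representation $\mu_0(U,L)w(U,L)-g(L)=\int_0^\tau\{\mu_0(v,L)\theta(v,L)S_\theta(v)-R(v)\}S(v)^{-1}dM(v)$. This identity is the same thing as your integration by parts of $R/S$ over $[0,U)$. Splitting the remaining martingale integral into its $dN$ and compensator parts and applying Fubini gives \eqref{eq:org_eif}.

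Note that it is this identity, not the $dN$ part, that supplies the ``$-1$'' halves. On $\{\Delta=1\}$, the $dN$ part plus $-\mu_0 w$ leaves $-\theta(U,L)R(U)/S(U)$. The missing $+R(U)/S(U)$, the cancellation of $-\mu_0 w$ on $\{\Delta=0\}$, and the $-\int_0^U R\,S^{-1}d\Lambda_0$ half of the double integral all come from your $g(L)$ plus the compensator.

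Your route buys a genuine pathwise-derivative argument, rather than a Dirac perturbation of a continuous law. It also yields the compact form of $\phi$ above, which makes $E\{\phi\}=\psi(\theta)$ immediate and foreshadows the orthogonality used in Lemma~\ref{lem:cond_exp}. In that form, the two correction terms are mean zero jointly, not separately. The paper's route is shorter and reaches \eqref{eq:org_eif} in its displayed form with no cancellation bookkeeping.
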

The proof of \thmref{eifiden} is given in \ref{sec:pfeifiden} of the Supplementary Material. We can see that the EIF involves two nuisance parameters: $ \mu$ is the outcome regression, and $ \Lambda$ the treatment mechanism. 
The proof of \corref{eifiden} below is given in \secref{pfeifidencor} of the Supplementary Material.

\begin{cor}\label{cor:eifiden}
When 
$\theta(t, l) \equiv \theta\in \bbR$, the EIF for $\psi(\theta)$ in \thmref{eifiden} simplifies to:
\begin{align}\label{eq:eifconst}
& Y \theta^\Delta e^{-(\theta - 1) \Lambda_0(U | L)} 
 + (\theta - 1) \int_0^U \mu_0(u, L) \theta^\delta e^{-(\theta-2)\Lambda_0(u|L)} d F_0(u|L) \notag \\
& - (\theta - 1) \int_0^\tau \mu_0(u, L) \theta^\delta e^{-(\theta-1)\Lambda_0(u|L)} d F_0(u|L)
- \psi(\theta).
\end{align}
\end{cor}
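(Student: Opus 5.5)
The plan is to specialize the general expression \eqref{eq:org_eif} of \thmref{eifiden} to $\theta(t,l)\equiv\theta$ and simplify algebraically, so nothing beyond \thmref{eifiden} is needed. With a constant $\theta$, the exponent becomes $\int_0^u(\theta-1)\,d\Lambda_0(v|L)=(\theta-1)\Lambda_0(u|L)$, since $\Lambda_0(0|L)=0$. The first term of \eqref{eq:org_eif} is then immediately $Y\theta^\Delta e^{-(\theta-1)\Lambda_0(U|L)}$. The inner integral in the second term is
$\int_0^{U\wedge u}(\theta-1)\,de^{\Lambda_0(v|L)}=(\theta-1)\{e^{\Lambda_0(U\wedge u|L)}-1\}$.

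Next I will split the outer integral $\int_0^\tau\cdots dF_0(u|L)$ in the second term at $u=U$, into the ranges $u\le U$ and $U<u\le\tau$. Because $T$ is absolutely continuous, $F_0(\cdot|L)$ is continuous, so the single point $u=U$ carries no mass and the choice of endpoint convention is immaterial.

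On $\{U<u\le\tau\}$ the bracket equals $(\theta-1)\{e^{\Lambda_0(U|L)}-1\}$, which does not depend on $u$. This piece is therefore
$(\theta-1)\{e^{\Lambda_0(U|L)}-1\}\int_{U+}^\tau \mu_0\theta^\delta e^{-(\theta-1)\Lambda_0(u|L)}dF_0(u|L)$.
The third term of \eqref{eq:org_eif} is exactly $-(\theta-1)e^{\Lambda_0(U|L)}$ times the same integral. Adding the two, the $e^{\Lambda_0(U|L)}$ parts cancel and what remains is $-(\theta-1)\int_{U+}^\tau \mu_0\theta^\delta e^{-(\theta-1)\Lambda_0}dF_0$.

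On $\{u\le U\}$ the bracket equals $(\theta-1)\{e^{\Lambda_0(u|L)}-1\}$. Multiplying by $e^{-(\theta-1)\Lambda_0(u|L)}$ gives $e^{-(\theta-2)\Lambda_0(u|L)}-e^{-(\theta-1)\Lambda_0(u|L)}$. This piece therefore yields the middle term of \eqref{eq:eifconst}, namely $(\theta-1)\int_0^U\mu_0\theta^\delta e^{-(\theta-2)\Lambda_0}dF_0$, minus $(\theta-1)\int_0^U\mu_0\theta^\delta e^{-(\theta-1)\Lambda_0}dF_0$.

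The last step merges this subtracted integral over $[0,U]$ with the leftover integral over $(U,\tau]$. Together they give $-(\theta-1)\int_0^\tau\mu_0\theta^\delta e^{-(\theta-1)\Lambda_0}dF_0$, which is the third term in \eqref{eq:eifconst}. Subtracting $\psi(\theta)$ then gives the stated EIF.

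The only delicate point, which is bookkeeping rather than a real obstacle, is the boundary case $U=\tau$, i.e.\ $\Delta=0$. There the third term of \eqref{eq:org_eif} is an integral over the empty range $(\tau,\tau]$ and so vanishes. Likewise the range $U<u\le\tau$ in the split of the second term is empty, so the cancellation holds trivially in this case. I will also check that the indicator $\delta=1(u<\tau)$ stays attached to the integrand throughout without affecting any step, and that Assumption~\ref{assump:bound} keeps every integral finite.
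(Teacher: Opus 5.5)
Your proposal is correct and is essentially the paper's argument. The paper does the same computation in the conditional-expectation form with an independent copy $U'$. It writes $\int_0^{u\wedge U'}(\theta-1)S(v|L)^{-1}d\Lambda_0(v|L)=(\theta-1)\{1(u<U')/S(u|L)+1(u\ge U')/S(U'|L)-1\}$, which is exactly your split at $u=U$, your cancellation of the $u>U$ piece against the third term, and your merging of the leftover pieces into the final integral over $[0,\tau]$.

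One small caution concerns your claim that $F_0(\cdot|L)$ is continuous. Under the paper's convention (see \secref{eifeq}), $dF_0(\cdot|L)$ carries an atom of size $e^{-\Lambda_0(\tau|L)}$ at $u=\tau$; this is why $\theta^\delta$ appears. So $F_0(\cdot|L)$ is not continuous there. Your argument is unaffected, because you put $u=U$ in the piece $u\le U$ and treat $U=\tau$ separately.
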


\section{Estimation and Inference} \label{sec:est}

The efficient influence function  can be used to construct estimators with favorable asymptotic properties, such as fast convergence rates even when nuisance functions are estimated nonparametrically. 
Considering  a random sample $\{Y_i, U_i, L_i, \Delta_i\}_{i=1}^n$ from the distribution of $(Y, U, L, \Delta)$. 
According to \thmref{eifiden}, we have $\psi(\theta) = E \{\phi(\theta; \Lambda_0, \mu_0)\}$.
In the Supplementary Material (\secref{eifeq}) we show 
a slightly different algebraic form of \eqref{eq:org_eif}, which leads immediately to
the augmented inverse probability weighted (AIPW) estimator:
\begin{align} 
\hat \psi(\theta) =& \frac{1}{n}\sum_{i=1}^n \phi_i(\theta; \hat \Lambda, \hat \mu) \notag \\
=& \frac{1}{n} \sum_{i=1}^n \bigg[ Y_i \theta(U_i,L_i)^{\Delta_i} e^{-\int_0^{U_i}\{\theta(v,L_i) - 1\}d \hat \Lambda(v|L_i)} \notag \\
& \phantom{\frac{1}{n} \sum_{i=1}^n \Big[ } - \int_0^\tau \hat \mu (u, L_i) \left[\int_0^{U_i \wedge u} \{\theta(v, L_i)-1\} d e^{\hat \Lambda(v | L_i)} \right] d e^{-\int_0^u \theta(v, L_i) d \hat \Lambda(v | L_i)} \notag \\
& \phantom{\frac{1}{n} \sum_{i=1}^n \Big[ } + \frac{\theta(U_i, L_i) - 1}{e^{-\hat \Lambda(U_i | L_i)}} \int_{U_i+}^\tau \hat \mu (u, L_i) d e^{-\int_0^u \theta(v, L_i) d \hat \Lambda(v | L_i)} \bigg], \label{eq:hat_psi}
\end{align}
where $\hat \Lambda(u|l)$ and $\hat \mu(u,l)$ are estimators of the nuisance functions $\Lambda(u|l)$ and $\mu(u,l)$, respectively. 

In practice, the nuisance functions can be estimated using either parametric or semiparametric models, or flexible nonparametric machine learning methods. 
In the following we first show that as long as the nuisance estimators are uniformly consistent, $\hat \psi(\theta)$ defined in \eqref{eq:hat_psi} is consistent for our estimand $\psi(\theta)$.

Define
\begin{equation*}
\nsuplam^2 = E \cur{\suplam^2}, \ \ 
\nsupmu^2 = E \cur{\supmu^2}.
\end{equation*}
\begin{assump} \label{assump:nuiest} 
Assume:
\begin{enumerate}[label=\theassump\alph*., ref=\theassump\alph*]
    \item \label{assump:nuiest:bound}
    $\hat\Lambda(\tau|L)$ is bounded almost surely.
    
    \item \label{assump:nuiest:cons}
    $\nsuplam = o(1)$ and $\nsupmu = o(1)$.
\end{enumerate}
\end{assump}
\assumpref{nuiest:bound} is typically satisfied by any common estimate of the cumulative hazard function, at a finite time $\tau$. 
Uniform consistency in \assumpref{nuiest:cons}, implied by weak convergence to a zero-mean Gaussian process, is known to be satisfied by the nonparametric Nelson-Aalen 
estimator, (semi-)parametric regression models including the commonly used Cox regression  \citep{tsiatis1981large, andersen1982cox}, and generalized linear models on compact regressor supports \citep{fahrmeir1985consistency}. 
Similarly, nonparametric learners like random forests \citep{scornet2015consistency, athey2019generalized} achieve uniform consistency provided that their complexity is properly regularized relative to the sample size. These conditions  fail in the presence of severe model misspecification or extreme high-dimensionality lacking sparsity \citep{buhlmann2011statistics, wainwright2019high}.

\begin{thm}[Consistency]\label{thm:eifcons}
Under Assumptions \ref{assump:cons} - \ref{assump:nuiest}, $\hat \psi(\theta)$ converges to $\psi(\theta)$ in probability as $n \to \infty$.
\end{thm}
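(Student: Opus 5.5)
We split the error into an empirical-process part and a nuisance-perturbation part:
\[
\hat\psi(\theta)-\psi(\theta) = (\mathbb{P}_n - P)\phi(\theta;\Lambda_0,\mu_0) + \mathbb{P}_n\{\phi(\theta;\hat\Lambda,\hat\mu)-\phi(\theta;\Lambda_0,\mu_0)\}.
\]
Here $\mathbb{P}_n$ is the empirical measure, and $P\phi(\theta;\Lambda_0,\mu_0)=\psi(\theta)$ by \thmref{eifiden}.

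The first term tends to zero in probability by the weak law of large numbers. We only need $E|\phi(\theta;\Lambda_0,\mu_0)|<\infty$. This follows from Assumptions~\ref{assump:bound:y}--\ref{assump:bound:nuisance}: $\theta$ is bounded in $[c,C]$, and $\Lambda_0$ and $\mu_0$ are uniformly bounded. Hence all exponential weights $e^{\pm\Lambda_0}$ and $e^{-\int(\theta-1)d\Lambda_0}$ are bounded, and the integrals against $dF_0$ and $de^{\Lambda_0}$ are bounded. So $|\phi|\le K(1+|Y|)$.

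For the second term we want a pointwise bound of the form
\[
|\phi_i(\theta;\hat\Lambda,\hat\mu)-\phi_i(\theta;\Lambda_0,\mu_0)| \le K(1+|Y_i|)\Big\{\sup_t|\hat\Lambda(t|L_i)-\Lambda_0(t|L_i)| + \sup_t|\hat\mu(t,L_i)-\mu_0(t,L_i)|\Big\}.
\]
Given this bound, we condition on the nuisance estimates, which are trained on an independent fold in the cross-fitting scheme; without cross-fitting we instead take expectations directly. Cauchy--Schwarz then gives
\[
E|\mathbb{P}_n(\cdot)| \le K\{E(1+|Y|)^2\}^{1/2}\big(\|\hat\Lambda-\Lambda_0\|_{\sup,2}+\|\hat\mu-\mu_0\|_{\sup,2}\big)=o(1)
\]
by Assumption~\ref{assump:nuiest:cons}. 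Markov's inequality then finishes the argument. If the nuisances are not cross-fit, we would use $\mathbb{P}_n\{(1+|Y|)D(L)\}\le \{\mathbb{P}_n(1+|Y|)^2\}^{1/2}\{\mathbb{P}_n D(L)^2\}^{1/2}$ instead, and argue that the second factor is $o_p(1)$ because its expectation is $o(1)$.

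The main obstacle is proving the pointwise Lipschitz bound for the three terms of \eqref{eq:hat_psi}, since they involve Stieltjes integrals with respect to the estimated $\hat\Lambda$. Our plan is to write each term as a telescoping difference, changing one nuisance at a time.

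\emph{Exponential factors.} Differences such as $e^{-\int_0^U(\theta-1)d\hat\Lambda}-e^{-\int_0^U(\theta-1)d\Lambda_0}$ are handled by the mean value theorem. Boundedness of the exponents comes from Assumption~\ref{assump:nuiest:bound} and Assumption~\ref{assump:bound:nuisance}. To control $\int_0^U(\theta-1)\,d(\hat\Lambda-\Lambda_0)$ by $\sup|\hat\Lambda-\Lambda_0|$, we integrate by parts. This gives a bound of $\{2C+\TV(\theta)\}\sup_t|\hat\Lambda-\Lambda_0|$, and $\TV(\theta)$ is finite by Assumption~\ref{assump:bound:tv}.

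\emph{Inner integrals.} The integrals against $de^{\hat\Lambda}$ and $de^{-\int\theta d\hat\Lambda}$ are treated the same way. We integrate by parts, moving the differential onto $\theta$ or onto $\mu_0$, whose total variations are bounded. The difference $\hat\mu-\mu_0$ is left multiplying a measure of bounded total mass, such as $de^{-\int\theta d\hat\Lambda}$, so it contributes at most $K\sup|\hat\mu-\mu_0|$. The factor $e^{\hat\Lambda(U)}$ in the last term is bounded by Assumption~\ref{assump:nuiest:bound}.

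This is where Assumption~\ref{assump:bound:tv} is essential: $\hat\Lambda$ need not be smooth, so we can never integrate against $d(\hat\Lambda-\Lambda_0)$ without first transferring the differential onto a function of bounded variation.
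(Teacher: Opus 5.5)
Your proposal is correct and follows essentially the same route as the paper: the weak law of large numbers for $(P_n-P)\phi(\theta;\Lambda_0,\mu_0)$, then mean-value and integration-by-parts bounds (moving the differential onto $\theta$ and $\mu_0$ via \assumpref{bound:tv}, with $e^{\hat\Lambda(\tau|L)}$ controlled by \assumpref{nuiest:bound}), followed by Cauchy--Schwarz and Markov's inequality. The only difference is that the paper splits the nuisance term into a cross term and two single-nuisance terms, which produces product bounds that consistency does not need; you instead use one Lipschitz-type bound, essentially the bound $\sup_t|\hat\mu-\mu_0|+(1+|Y|)\sup_t|\hat\Lambda-\Lambda_0|$ that the paper itself invokes for $B_{n,1}$ in the proof of \thmref{cf}.
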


The proof of \thmref{eifcons} is given in \secref{pfeifcons} of the Supplementary Material.

For inference purposes 
 we can show that when the nuisance estimators are regular and asymptotically linear (RAL), which is typically the case when parametric or semiparametric models are used,  the proposed estimator 
$\hat\psi(\theta)$ in \eqref{eq:hat_psi} is asymptotically normal. 
The proof is similar to that in \cite{wang2024doubly}, and is availabe from the authors upon request. The RAL nuisance estimators tend  not to require a very large sample size, and $\hat\psi(\theta)$ does not require cross-fitting. 

More generally, we consider flexible nonparametric ML methods to estimate the nuisance parameters in the following, since   
parametric or semiparametric  models may be misspecified  
in practice. 
As in the literature \citep{klaassen1987consistent, schick1986on}, 
 we employ a cross-fitting procedure where we randomly partition the data indices into $K$ folds of roughly equal size, denoted by $\cI_1,\dots,\cI_K$, and construct the estimator as
\begin{equation}
    \hat \psi_{\text{cf}}(\theta) = \frac{1}{n} \sum_{k=1}^K \sum_{i\in \cI_k} \phi_i(\theta; \hat \Lambda_{-k}, \hat \mu_{-k}), \label{eq:psi_cf}
\end{equation}
where the nuisance estimators $\hat \Lambda_{-k}$ and $\hat \mu_{-k}$ are estimated using the out-of-$k$-fold data indexed by $\cI_{-k} = \{1,\dots,n\}\setminus \cI_k$.

For the theory with ML nuisance estimation and cross fitting, 
we introduce some norms here. Denote by $O$ the data used to obtain $\hat \Lambda$ and $\hat \mu$. Let $O_\dagger = (Y_\dagger, U_\dagger, L_\dagger)$ denote a copy of the data that is independent of, but from the same underlying distribution, as $O$. Let $E_\dagger$ denote the expectation taken with respect to $O_\dagger$ conditional on the data $O$. Define
\begin{align*}
& \|\hat\Lambda-\Lambda_0\|_{\dagger,\sup,p}^p = \edg\cur{\dgsuplam^p}, \text{ for } p=2,4, \\
&\dgsupnmu^2 = \edg\cur{\dgsupmu^2}.
\end{align*}

In the following, we consider $\theta(t,l) \in \Theta \coloneqq\cur{\theta: [0,\tau]\times\cL \to \bbR}$. 

\begin{assump} \label{assump:cf}
Assume
\begin{enumerate}[label=\theassump\alph*., ref=\theassump\alph*]
    \item \label{assump:cf:sigma}
    $\suptheta 1/\sigma(\theta) \leq C_{\sigma} < \infty$.

    \item \label{assump:cf:donsker}
    The class $\cur{\phi(\theta;\Lambda_0,\mu_0): \theta \in \Theta}$ is Donsker.

    \item \label{assump:cf:prod_rate}
    $\dgsupnmu \cdot \dgsupnlam + \dgsupnlam^2 + \|\hat\Lambda-\Lambda_0\|_{\dagger,\sup,4}^2 = \opr.$
\end{enumerate}
\end{assump}

\assumpref{cf:sigma} requires the variance function of the EIF to be bounded away from zero. \assumpref{cf:donsker} is used to establish weak convergence. 
In the Supplementary Material (\secref{verify}), we verify that these two assumptions hold when $\theta(t, l) \equiv \theta \in \cD \coloneqq [c,C]$, 
where $0 < c \leq C < \infty$. 
The result can be extended to a more general class of functions $\theta(t,l;\gamma)$ parameterized by a vector of parameters $\gamma \in \Gamma \subseteq \bbR^d$, where $1 \leq d < \infty$ and $\Gamma$ is compact, provided that
$\theta(t,l;\gamma)$ is continuous in $\gamma$ and its partial derivative with respect to each component of $\gamma$ exists and is uniformly bounded over $(t,l)$. This would include piecewise constant or piecewise polynomial $ \theta (t,l)$, for example.

\assumpref{cf:prod_rate} gives the nuisance rate condition  for $\sqrt{n}$-inference of our estimand. It requires the nuisance estimators to converge at rate $o_p(n^{-1/4})$, as typically required by a Neyman orthogonal score including the EIF. 
Naturally it is achieved under correctly specified parametric or semiparametric models, therefore the simultaneous inference below applies to RAL nuisance estimators. 
When $Y$ is bounded, \assumpref{cf:prod_rate} can be relaxed to 
$ \dgsupnmu \cdot \dgsupnlam + \dgsupnlam^2 = o_p(n^{-1/2})$. 

We show that $\hat \psi_{\cf}(\theta)$ converges to a process over $\theta$. 
Let 
$\hat{\sigma}_{\cf}^2(\theta) =  \sum_{k=1}^K \sum_{i\in \cI_k} \{\phi_i(\theta; \hat \Lambda_{-k}, \hat \mu_{-k}) - \hat{\psi}_{\cf}(\theta)\}^2 /n $ be an estimator of the variance function $\sigma^2(\theta) = E[\{\phi(\theta; \Lambda_0, \mu_0) - \psi(\theta)\}^2]$. 

\begin{thm}[Process Convergence with Cross-Fitting] \label{thm:cf}
Under Assumptions \ref{assump:cons} - \ref{assump:cf}, 
\begin{equation}
    \frac{\hat{\psi}_{\cf}(\theta) - \psi(\theta)}{\hat{\sigma}_{\cf}(\theta)/\sqrt{n}} \rightsquigarrow G(\theta), \label{eq:process_con}
\end{equation}
in $\ell^\infty(\Theta)$, where $G(\cdot)$ is a mean-zero Gaussian process with covariance process
$$
E\{G(\theta_1)G(\theta_2)\} = E\{\tilde\phi(\theta_1; \Lambda_0, \mu_0) \tilde\phi(\theta_2; \Lambda_0, \mu_0)\},
$$
with $\tilde\phi(\theta; \Lambda_0, \mu_0) \equiv \{\phi(\theta; \Lambda_0, \mu_0) - \psi(\theta)\}/\sigma(\theta)$, and $\ell^\infty(\Theta)$ is the set of all uniformly bounded real functions on $\Theta$. 
\end{thm}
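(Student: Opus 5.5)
We follow the standard decomposition for cross-fitted one-step estimators, carried out uniformly in $\theta\in\Theta$. Write $\mathbb{P}_n$ for the empirical measure, $\mathbb{P}_{n,k}$ for the empirical measure on fold $\cI_k$ (size $n_k\approx n/K$), and $P$ for the true distribution. For each fold,
\begin{align*}
\mathbb{P}_{n,k}\phi(\theta;\hat\Lambda_{-k},\hat\mu_{-k}) - \psi(\theta)
=& (\mathbb{P}_{n,k}-P)\phi(\theta;\Lambda_0,\mu_0) \\
&+ (\mathbb{P}_{n,k}-P)\{\phi(\theta;\hat\Lambda_{-k},\hat\mu_{-k})-\phi(\theta;\Lambda_0,\mu_0)\} \\
&+ R_k(\theta),
\end{align*}
where $R_k(\theta)=\edg\{\phi(\theta;\hat\Lambda_{-k},\hat\mu_{-k})\}-\psi(\theta)$. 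Summing over $k$ with weights $n_k/n$, the first terms give $(\mathbb{P}_n-P)\phi(\theta;\Lambda_0,\mu_0)$. The goal is to show the remaining two terms are $o_p(n^{-1/2})$ uniformly over $\theta\in\Theta$. Once that holds, $\sqrt n\{\hat\psi_{\cf}(\theta)-\psi(\theta)\}$ equals the empirical process $\sqrt n(\mathbb{P}_n-P)\phi(\theta;\Lambda_0,\mu_0)$ up to a uniformly negligible term. By \assumpref{cf:donsker} this process converges in $\ell^\infty(\Theta)$ to a tight Gaussian process with covariance $E[\{\phi(\theta_1)-\psi(\theta_1)\}\{\phi(\theta_2)-\psi(\theta_2)\}]$.

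\textbf{Empirical-process term.} Condition on the out-of-fold data $O_{-k}$, so that $\hat\Lambda_{-k},\hat\mu_{-k}$ are fixed. We then need $\suptheta|(\mathbb{P}_{n,k}-P)g_\theta|=o_p(n^{-1/2})$, where $g_\theta=\phi(\theta;\hat\Lambda_{-k},\hat\mu_{-k})-\phi(\theta;\Lambda_0,\mu_0)$. For a single $\theta$, Chebyshev's inequality suffices once $\edg(g_\theta^2)\to0$. We get this by bounding $|g_\theta|$ pointwise by $C(|Y|+1)\{\dgsuplam+\dgsupmu\}$, using \assumpref{bound:theta}, \assumpref{bound:nuisance}, \assumpref{nuiest:bound} and the mean value theorem on the exponentials. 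The integrals against $de^{\hat\Lambda}$ and $de^{-\int\theta d\hat\Lambda}$ are handled by integration by parts; this is where the total variation bounds of \assumpref{bound:tv} on $\theta$ and $\mu_0$ enter. Cauchy--Schwarz with $E(Y^4)<\infty$ and \assumpref{nuiest:cons} then gives $\edg(g_\theta^2)=o_p(1)$. To make the bound uniform in $\theta$, we use the fact that the shifted class $\{g_\theta\}$ is Donsker, being a difference of a class Lipschitz-parametrized like the Donsker class, together with an envelope whose $L_2$ norm is $o_p(1)$. A maximal inequality (e.g.\ van der Vaart--Wellner Lemma 3.4.2 type) conditional on $O_{-k}$ then yields uniform $o_p(1)$ after multiplying by $\sqrt{n_k}$.

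\textbf{Remainder term.} This is the main obstacle: showing $\suptheta|R_k(\theta)|\lesssim \dgsupnmu\dgsupnlam+\dgsupnlam^2+\|\hat\Lambda-\Lambda_0\|_{\dagger,\sup,4}^2$. The plan is as follows. First, compute $\edg\phi(\theta;\Lambda,\mu)$ by conditioning on $L$ and integrating over $U$ with $dF_0(\cdot|L)$; use Fubini to integrate the indicator $U\wedge u$ and the $U+$ lower limit. Then show that $\edg\phi(\theta;\Lambda_0,\mu)=\psi(\theta)$ for any $\mu$ (double robustness in $\mu$), so the $\mu$ error enters only multiplied by $\Lambda$ error. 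Next, expand $e^{-\int(\theta-1)d\hat\Lambda}$ and $e^{\hat\Lambda}$ around $\Lambda_0$ to second order. The first-order terms in $\hat\Lambda-\Lambda_0$ should cancel between the IPW term and the two augmentation terms; this cancellation is exactly the Neyman orthogonality guaranteed by \thmref{eifiden} and can be verified via integration by parts. The leftover terms are then quadratic. Those involving $Y$ become products with $\mu_0$ via conditioning, and squared exponent differences are controlled by $\|\cdot\|_{\dagger,\sup,4}^2$ after Cauchy--Schwarz against the fourth moment of $Y$. \assumpref{cf:prod_rate} then makes $\suptheta\sqrt n|R_k|=o_p(1)$.

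\textbf{Studentization.} We show $\suptheta|\hat\sigma_{\cf}^2(\theta)-\sigma^2(\theta)|=o_p(1)$. This follows from the $L_2$ bound on $g_\theta$ above, together with a Glivenko--Cantelli property of $\{\phi(\theta;\Lambda_0,\mu_0)^2\}$ (implied by the Donsker condition with a square-integrable envelope, using $E(Y^4)<\infty$) and uniform consistency of $\hat\psi_{\cf}$. Combined with \assumpref{cf:sigma}, this gives $\suptheta|\sigma(\theta)/\hat\sigma_{\cf}(\theta)-1|=o_p(1)$. Writing the ratio in \eqref{eq:process_con} as $\sqrt n(\hat\psi_{\cf}-\psi)/\sigma\cdot\sigma/\hat\sigma_{\cf}$, the process $\sqrt n(\mathbb{P}_n-P)\tilde\phi$ is Donsker because dividing by $\sigma$, which is bounded below, preserves the Donsker property. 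Slutsky's lemma in $\ell^\infty(\Theta)$ then yields convergence to $G$ with the stated covariance.
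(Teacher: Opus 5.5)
Your outline matches the paper's proof in all essentials. Both split $\sqrt n\{\hat\psi_{\cf}(\theta)-\psi(\theta)\}$ into $G_n\phi(\theta;\Lambda_0,\mu_0)$, a cross-fitted empirical-process term (the paper's $B_{n,1}$) and a conditional bias (the paper's $B_{n,2}$, handled by \lemref{cond_exp}). Both then finish with the same Slutsky step, using $\sup_{\theta\in\Theta}|\hat\sigma_{\cf}(\theta)/\sigma(\theta)-1|=o_p(1)$ and \assumpref{cf:sigma}.

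The step that does not follow from the stated hypotheses is your uniform-in-$\theta$ control of the empirical-process term. \assumpref{cf:donsker} concerns only $\{\phi(\theta;\Lambda_0,\mu_0):\theta\in\Theta\}$. Nothing in the theorem makes $\{g_\theta:\theta\in\Theta\}$, with $g_\theta=\phi(\theta;\hat\Lambda_{-k},\hat\mu_{-k})-\phi(\theta;\Lambda_0,\mu_0)$, Lipschitz-parametrized or Donsker.

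Even granting Donsker, an envelope with $L_2$ norm $o_p(1)$ is not enough by itself, because the class changes with $n$. Lemma 3.4.2 of van der Vaart and Wellner needs uniformly bounded functions, and here $Y$ is unbounded. The envelope version of the bracketing maximal inequality gives $E^*\sup_\theta|G_n^k g_\theta|\lesssim J_{[]}(1,\cF_n)\|F_n\|_{P,2}$, with brackets of size $\epsilon\|F_n\|_{P,2}$. So you need these relative bracketing numbers bounded uniformly in $n$.

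A sufficient condition is the following:
\begin{itemize}
\item $\Theta=\{\theta(\cdot,\cdot;\gamma):\gamma\in\Gamma\}$ with $\Gamma\subset\bbR^d$ compact;
\item $|g_{\gamma_1}-g_{\gamma_2}|\le|\gamma_1-\gamma_2|\,F_n$ for an envelope $F_n\lesssim\sup_t|\hat\mu_{-k}-\mu_0|+(|Y|+1)\sup_t|\hat\Lambda_{-k}-\Lambda_0|$.
\end{itemize}
Then the relative bracketing numbers are dominated by covering numbers of $\Gamma$, and the supremum is $O_p(\|F_n\|_{P,2})=o_p(1)$. This can be checked for the constant and parametric classes of \secref{verify} by applying your pointwise bounds to $\partial_\gamma\phi$. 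But it must be assumed or verified; it is not implied by \assumpref{cf:donsker}.

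A related small point: $Y$ should multiply only the $\Lambda$-error, as in the envelope above, since the $\mu$-dependent parts of $\phi$ do not involve $Y$. With $(|Y|+1)$ in front of $\sup_t|\hat\mu-\mu_0|$, your Cauchy--Schwarz step would need a fourth-moment rate for $\hat\mu$, which is not assumed.

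The paper's proof does not close the uniformity gap either. It applies Chebyshev's inequality for fixed $\theta$ and then asserts $\|B_{n,1}\|_\Theta=o_p(1)$. So you are not missing an argument the paper supplies, but neither argument, as written, proves the supremum statement for a general $\Theta$.

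For the bias, your direct second-order expansion with exact double robustness in $\mu$ is the same computation the paper packages as $D_1+D_2+D_3$. The paper uses the Gateaux/Vainberg identity $F(\hat\Lambda)-F(\Lambda_0)=\int_0^1\{DF(\Lambda_0+th)[h]-DF(\Lambda_0)[h]\}\,dt$, with $h=\hat\Lambda-\Lambda_0$ and $DF(\Lambda_0)[h]=0$ by orthogonality.

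Your conditioning step actually buys a little. $Y$ enters $\phi$ only through the IPW term, multiplied by a function of $(U,L)$ and the training data. Replacing $Y$ by the bounded $\mu_0(U,L)$ therefore bounds the bias by $\|\hat\mu-\mu_0\|_{\dagger,\sup,2}\|\hat\Lambda-\Lambda_0\|_{\dagger,\sup,2}+\|\hat\Lambda-\Lambda_0\|_{\dagger,\sup,2}^2$. This drops the $\|\hat\Lambda-\Lambda_0\|_{\dagger,\sup,4}^2$ term, which the paper incurs only by bounding $|Y|$ pointwise, and no Cauchy--Schwarz against moments of $Y$ is needed there.

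One caution in carrying out the expansion. After integrating out $U$, with constant $\theta$ and $h(u)=\hat\Lambda(u|L)-\Lambda_0(u|L)$, the leftover terms include terms proportional to $\int_0^\tau\mu_0(u,L)e^{-\theta\Lambda_0(u|L)}h(u)\,dh(u)$. A supremum-times-$\TV(h)$ bound on these gives only first order, because $\TV(h)$ is $O(1)$ rather than small. You need $h\,dh=\tfrac12 d(h^2)$ followed by integration by parts against $\mu_0$ (and $\theta$). That is where the total-variation bounds in \assumpref{bound:tv} are genuinely needed.
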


The proof of \thmref{cf} is given in \secref{cf_proof} of the Supplementary Material. 
It is inspired by the proof of Theorem 3 in \cite{kennedy2019nonparametric}, which uses bracketing numbers and 
properties of Donsker classes. In contrast to \cite{kennedy2019nonparametric}, however, 
part of the proof relies on a Gateaux derivative argument. This is because the EIF structure in the Kennedy paper is relatively simple, allowing the remainder term to be directly rewritten and bounded via products of nuisance estimation errors. 
On the other hand, the EIF here has a more complex integral form, 
and we have found the Gateaux derivative to be an effective way  to bound the relevant terms by products of estimation errors.

The result of \thmref{cf}  
enables uniform inference over a range of $\theta$ values. 
This can be useful because in practice as we are  likely to be interested in shifting the observed treatment distribution by a range of $\theta$ values, as opposed to a particular single value; this can be seen in our example later. 
In order to do so, we need to find a critical value $c_\alpha$ such that
\begin{equation}
    \p \left( \sup_{\theta \in \Theta} \left| \frac{\hat{\psi}_{\cf}(\theta) - \psi(\theta)}{\hat{\sigma}_{\cf}(\theta)/\sqrt{n}} \right| \leq c_\alpha \right) = 1 - \alpha + o(1). \label{eq:c_alpha}
\end{equation}
Following \cite{kennedy2019nonparametric}, a multiplier bootstrap \citep{van1996weak, belloni2018uniformly} can be used to obtain the above critical value $c_\alpha$.
In this approach, the distribution of the supremum in \eqref{eq:c_alpha} is approximated by the supremum of the multiplier process
\begin{equation}
    \frac{1}{\sqrt{n}} \sum_{k=1}^K \sum_{i\in \cI_k} \xi_i \cur{\frac{\phi_i(\theta; \hat \Lambda_{-k}, \hat \mu_{-k}) - \hat{\psi}_{\cf}(\theta)}{\hat \sigma_{\cf}(\theta)}}, \label{eq:multi_pro}
\end{equation}
where the multipliers $(\xi_1, \dots, \xi_n)$ are independent and identically distributed random variables with mean zero and unit variance that are independent of the observed data.
This procedure is computationally efficient because it only requires resampling the multipliers after the nuisance functions have been estimated. 
Here  we employ the Rademacher multipliers, which take values in $\{-1, 1\}$ with equal probability. Other valid choices for the multipliers include standard Gaussian random variables.
The following theorem establishes the theoretical validity of this procedure. 

\begin{thm} \label{thm:ui_const}
Let $\hat c_\alpha$ denote the $(1-\alpha)$ quantile (conditional on the sample) of the supremum of the multiplier bootstrap process, that is,
\begin{equation*}
\p\left(\left.\sup_{\theta \in \Theta} \abs{\frac{1}{\sqrt{n}} \sum_{k=1}^K \sum_{i\in \cI_k} \xi_i \cur{\frac{\phi_i(\theta; \hat \Lambda_{-k}, \hat \mu_{-k}) - \hat{\psi}_{\cf}(\theta)}{\hat \sigma_{\cf}(\theta)}}} \geq \hat c_\alpha \right| \{Y_i, U_i, L_i\}_{i=1}^n \right) = \alpha,
\end{equation*}
where $(\xi_1,\ldots,\xi_n)$ are i.i.d. Rademacher random variables independent of the sample $\{Y_i, U_i, L_i\}_{i=1}^n$.
Under the same conditions of \thmref{cf},
\begin{equation*}
\p\left(\hat \psi_{\cf}(\theta) - \frac{\hat c_\alpha \hat \sigma_{\cf}(\theta)}{\sqrt{n}}
\leq \psi(\theta) \leq \hat \psi_{\cf}(\theta) + \frac{\hat c_\alpha \hat \sigma_{\cf}(\theta)}{\sqrt{n}}, \forall \theta \in \Theta \right) = 1 - \alpha + o(1).
\end{equation*}
\end{thm}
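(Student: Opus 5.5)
The plan follows the strategy of Theorem 4 in \cite{kennedy2019nonparametric}, combining \thmref{cf} with multiplier bootstrap validity for Donsker classes. Write $\hat Z_n(\theta) = \sqrt{n}\{\hat\psi_{\cf}(\theta)-\psi(\theta)\}/\hat\sigma_{\cf}(\theta)$. The event in the display equals $\{\sup_{\theta\in\Theta}|\hat Z_n(\theta)|\le \hat c_\alpha\}$. It therefore suffices to show two things: $\sup_\theta|\hat Z_n(\theta)|\rightsquigarrow \sup_\theta|G(\theta)|$, and $\hat c_\alpha \to c_\alpha$ in probability, where $c_\alpha$ is the $(1-\alpha)$ quantile of $\sup_\theta|G(\theta)|$. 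The first follows from \thmref{cf} and the continuous mapping theorem, since $f\mapsto\sup|f|$ is continuous on $\ell^\infty(\Theta)$. Given the second, Slutsky's lemma yields coverage $1-\alpha+o(1)$, provided the law of $\sup|G|$ is continuous at $c_\alpha$. This continuity holds because $G$ is a tight Gaussian process with unit variance, by the normalization and \assumpref{cf:sigma}; by Tsirelson's theorem, the distribution of the supremum of such a process is continuous except possibly at the left endpoint of its support, and $c_\alpha$ lies above that point for $\alpha<1$.

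For the bootstrap, decompose the multiplier process $\hat M_n(\theta)$ in \eqref{eq:multi_pro} as $M_n(\theta) + R_n(\theta)$. The oracle part is
$$M_n(\theta)=n^{-1/2}\sum_i \xi_i\tilde\phi_i(\theta;\Lambda_0,\mu_0).$$
Because $\{\phi(\theta;\Lambda_0,\mu_0)\}$ is Donsker (\assumpref{cf:donsker}) and $1/\sigma$ is bounded, the normalized class $\{\tilde\phi(\theta)\}$ is also Donsker, provided $\sigma(\theta)$ is bounded above. This upper bound follows from Assumption \ref{assump:bound}, under which $\sup_\theta E\phi^2<\infty$, and we also need an envelope with finite second moment. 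The conditional multiplier central limit theorem (\citealp{van1996weak}, Theorem 2.9.6/2.9.7) then gives $M_n\rightsquigarrow G$ conditionally on the data, in outer probability, in the bounded-Lipschitz sense. Continuous mapping then transfers this to $\sup|M_n|$.

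The main step is to show $\sup_\theta|R_n(\theta)| = o_p(1)$. Here
$$R_n(\theta)=n^{-1/2}\sum_k\sum_{i\in\cI_k}\xi_i\{\hat{\tilde\phi}_{i,-k}(\theta)-\tilde\phi_i(\theta)\},$$
where $\hat{\tilde\phi}$ uses the estimated nuisances, the mean $\hat\psi_{\cf}$, and $\hat\sigma_{\cf}$. Conditional on the data, each fold sum is a Rademacher sum. By the symmetry of $\xi_i$ together with a maximal inequality conditional on the training folds, its supremum over $\theta$ is bounded by a quantity that vanishes when the following holds:
$$\sup_\theta n^{-1}\sum_{i\in\cI_k}\{\phi_i(\theta;\hat\Lambda_{-k},\hat\mu_{-k})-\phi_i(\theta;\Lambda_0,\mu_0)\}^2=o_p(1),$$
and the entropy of the difference class is controlled. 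That class is Donsker as well, given the same parametric structure in $\theta$ as verified in \secref{verify}. The $L_2$ bound would come from the Lipschitz-type dependence of $\phi$ on $(\Lambda,\mu)$ in sup norm: the bounds on $\theta$, $\hat\Lambda(\tau|L)$, and the total variation of $\theta$ and $\mu_0$ make the exponentials and Stieltjes integrals Lipschitz. Combining this with \assumpref{nuiest:cons}, or with \assumpref{cf:prod_rate}, gives terms of order $\dgsupnlamk+\dgsupnmuk=o_p(1)$. The same kind of estimates, already used in the proof of \thmref{cf}, give $\sup_\theta|\hat\sigma_{\cf}/\sigma-1|=o_p(1)$ and $\sup_\theta|\hat\psi_{\cf}-\psi|=O_p(n^{-1/2})$. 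The centering term $\bar\xi\sqrt{n}(\hat\psi_{\cf}-\psi)/\hat\sigma$ is then $O_p(n^{-1/2})$, and rescaling by $\sigma/\hat\sigma_{\cf}$ is harmless. I expect establishing this uniform-in-$\theta$ $L_2$ consistency of the plugged-in EIF to be the main technical obstacle; I would handle it term by term, using the three components of \eqref{eq:org_eif}.

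Finally, $\sup|\hat M_n|$ converges conditionally in distribution to $\sup|G|$, in probability. Continuity and strict monotonicity of the distribution of $\sup|G|$ near $c_\alpha$ then give $\hat c_\alpha\to c_\alpha$ in probability by the standard quantile argument (\citealp{belloni2018uniformly}). Combining this with the first step completes the proof.
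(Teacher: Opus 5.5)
Your proposal is correct and is essentially the paper's argument. Both reduce the claim to showing that the bootstrap process is uniformly within $o_p(1)$ of the oracle multiplier process built from $\tilde\phi(\theta;\Lambda_0,\mu_0)$, splitting the difference into the nuisance-error term, the centering term involving $\hat\psi_{\cf}(\theta)-\psi(\theta)$, and the $\sigma/\hat\sigma_{\cf}$ rescaling. The main difference is that you close the argument yourself (conditional multiplier CLT, Tsirelson-type continuity of the law of $\sup_\theta|G(\theta)|$, quantile consistency), whereas the paper defers to the proof of Theorem 4 in \cite{kennedy2019nonparametric} and the Gaussian approximation of \cite{chernozhukov2014gaussian}.

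One point is worth keeping explicit. Your conditional maximal inequality for the nuisance-error term needs entropy control of the estimated class $\{\phi(\theta;\hat\Lambda_{-k},\hat\mu_{-k})-\phi(\theta;\Lambda_0,\mu_0):\theta\in\Theta\}$. \assumpref{cf:donsker} does not supply this, and you import it from the parametric structure of \secref{verify}. The paper's pointwise Chebyshev step (reused from $B_{n,1}$) does not deliver that uniformity either, so your version is the more careful one, but it rests on this added hypothesis.
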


The proof of \thmref{ui_const} is in \secref{ui_const_proof} of the Supplementary Material, by using techniques developed
in \eqref{eq:process_con} \citep{chernozhukov2014gaussian}. Given the above uniform confidence band, denoted as $ (L(\theta, \alpha), U(\theta, \alpha)) $ we can also test for example the global null hypothesis of no incremental intervention effect, 
$$
H_0: \psi(\theta) = E(Y) \text{ for all } \theta \in \Theta. 
$$
Assuming that the no-intervention regime $\theta(t,l)\equiv 1$ belongs to $\Theta$, we can test the above by checking whether the $(1-\alpha)$ confidence band contains a constant function over $\Theta$.
Equivalently, the corresponding $p$-value is given by 
$     \inf \squ{ \alpha: \inf_{\theta \in \Theta} \{ 
     U(\theta, \alpha)\} 
    < \sup_{\theta \in \Theta} \{ 
    L(\theta, \alpha) \} }  $,
or equivalently, 
     $\sup \squ{ \alpha: \inf_{\theta \in \Theta} \{ 
     U(\theta, \alpha) \} 
    \ge \sup_{\theta \in \Theta} \{ 
    L(\theta, \alpha)\} } $.

\section{Simulation}\label{sec:simu}

In this section, we investigate the finite-sample performance of the proposed estimators. We set $\tau = 2$ and generate $n$ i.i.d. copies of $\{L_i, T_i, Y_i\}$ as follows:
$
L_i \sim \Unif (0, 2),
$
$
\p(T_i > t|L_i) = \exp\{-\exp(0.2 L_i)\cdot t\},
$
$
Y_i|L_i, T_i \wedge 2 \sim \cN(3 - 0.6 L_i - (2 - T_i \wedge 2), 0.5^2). 
$
We observe $\{L_i, U_i = T_i \wedge 2, Y_i, \Delta_i = 1(U_i < \tau)\}_{i=1}^n$. 

We considered three estimators in our simulation study. The inverse probability weighted estimator proposed in \cite{ying2025incremental} is:
\begin{equation*}
\hat \psi_{\text{ipw}}(\theta) = \frac{1}{n} \sum_{i=1}^n Y_i \theta(U_i,L_i)^{\Delta_i} e^{-\int_0^{U_i}\{\theta(v,L_i)-1\}d\hat\Lambda(v|L_i)}. 
\end{equation*}
We have two versions of the AIPW estimator: $\hat \psi(\theta)$ and  $\hat \psi_{\cf}(\theta)$.
For  $\hat \psi(\theta)$ 
we employ a \cite{cox1972regression} proportional hazards model for $\hat\Lambda$, 
and a linear regression model for $\hat\mu$. 
For $\hat \psi_{\cf}(\theta)$ 
we employ splines based hazard regression (HARE) \citep{kooperberg1995hazard}  for $\hat\Lambda$ and random forests for $\hat\mu$, implemented via the \texttt{polspline} and \texttt{ranger} packages in R, respectively.
In addition, for comparison purposes we also consider 
 an oracle estimator  $\hat \psi_{\mathrm{o}}(\theta)$,  constructed using the true nuisance functions $\Lambda_0$ and $\mu_0$.

We examine the performance of the four estimators 
across eight scenarios, where $\theta(t, l) = (at + b)\exp(\beta l)$ for $(a, b, \beta)$ in
\begin{alignat*}{4}
\{& (0.9, \ 0.3, -0.7), \  && (0.9, \ 0.5, -0.7), \  && (0.7, \ 0.3, -0.5), \  && (0.7, \ 0.5, -0.5), \\
  & (0.5, \ 0.1, -0.1), \  && (0.5, \ 0.1, -0.2), \  && (0.3, \ 0.1, \phantom{-}0.4), \  && (0.3, \ 0.1, \phantom{-}0.6) \}.
\end{alignat*}
We denote these eight scenarios using $\theta_1$ - $\theta_8$ in \figref{figures/compare}. 
These choices result in a censoring rate for $T$ of approximately 20\%-30\%.
All quantities below are computed from $R=1000$ simulated data sets of sizes $n\in \{200,1000,5000\}$.

In Figure \ref{fig:figures/compare} and Tables \ref{tab:simu_n_200} - \ref{tab:simu_n_5000}  in the Supplementary Material,
we report bias, \%Bias, standard deviation  (SD), average  standard error (SE), and coverage probabilities (CP) of 95\% confidence intervals. 
For $\hat\psi_{\text{ipw}}(\theta)$ the SE is obtained using $B = 200$ Bayesian bootstrap \citep{van1996weak, kosorok2008introduction}, 
and CP of the corresponding Wald type 95\% confidence intervals. 

In \figref{figures/compare} as the sample size $n$ increases, 
bias vanishes and coverage probabilities approach nominal levels for all estimators. 
The SD panel highlights the smaller standard deviations of $\hat\psi(\theta)$ compared to $\hat\psi_{\text{ipw}}(\theta)$, illustrating the efficiency gains achieved by the efficient influence function.
\begin{figure}[ht!]
\centering
\includegraphics[width=1.0\textwidth]{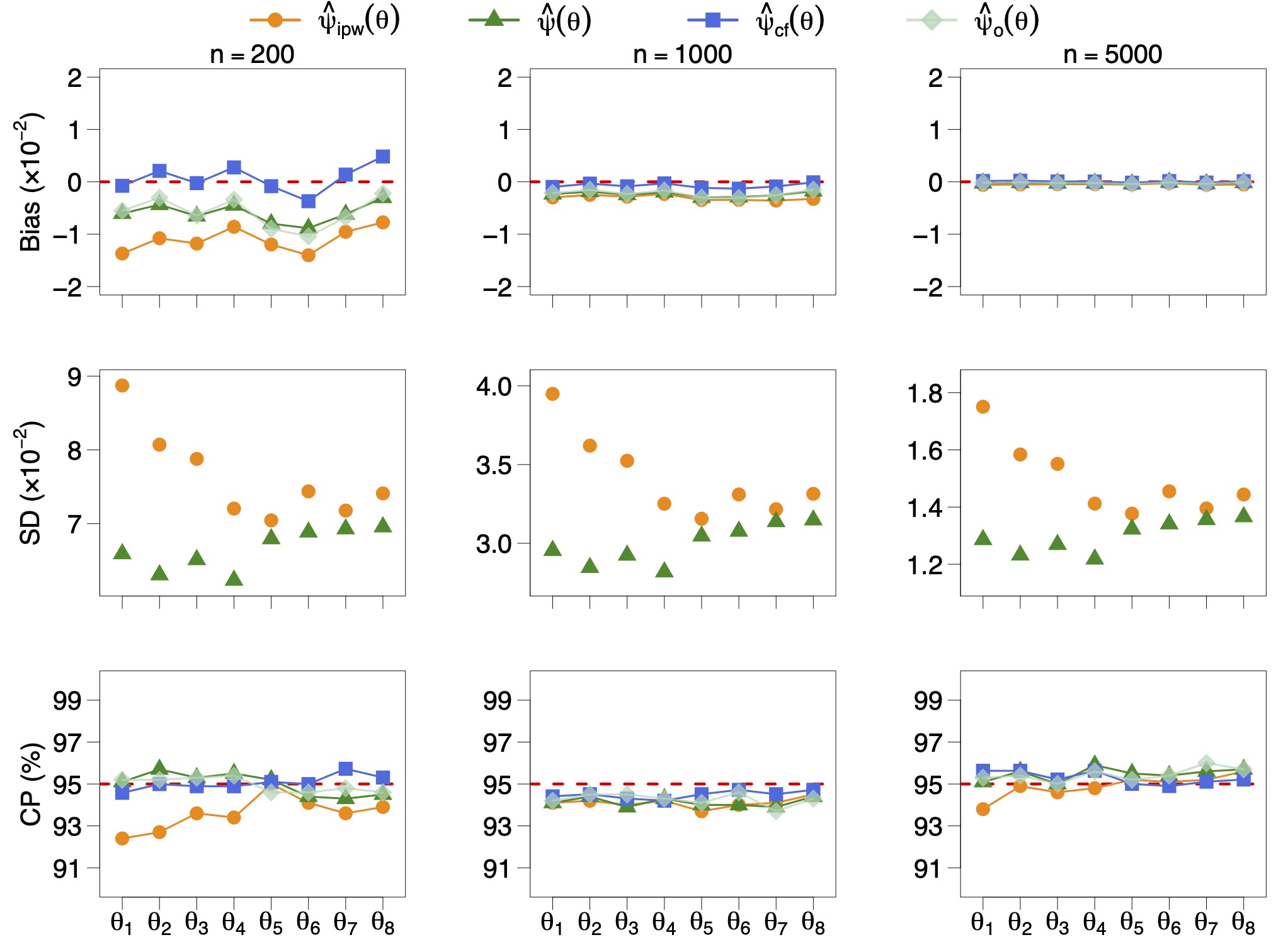}
\caption{Estimation accuracy across eight $\theta(t,l)$ specifications using simulated data. 
}
\label{fig:figures/compare}
\end{figure}

Finally we evaluate the uniform coverage performance of the proposed  confidence bands using $B=10,000$ multiplier bootstrap.
We consider $\theta(t,l) = (0.3t+0.1)\exp(\beta l)$ with $\beta \in [0.2, 0.7]$. 
In order to approximate the supremum in \eqref{eq:multi_pro} we use about 50 equally spaced grid points of $\beta$, with 0.1 distance in-between two adjacent points.
The coverage probabilities of the uniform 95\% confidence band are 94.7, 94.9 and 95.2 for sample sizes 200, 1000 and 5000, respectively. We see that they
 are satisfactory even for the smaller sample size. 

\section{Application}\label{sec:real}

We return to the motivating example of 
HPV testing among Norwegian women in the Introduction. 
For this data $T$ is the time between 2005 and 2010 from  the initial HPV negative test to the subsequent testing,
 $U$ is its censored version at the end of 2010, and the outcome $Y$ is an indicator of whether CIN2+ was detected by the end of 2010. 
\cite{roysland2025graphical} carefully examined the causal validity of such a stochastic intervention that imposes the same subsequent testing rate for all the HPV tests. They showed that  the assumption that any testing in itself does not affect the disease progression, and also  disease progression does not affect the testing regime, is likely to hold for this data. As a result  imposing the same subsequent testing rate for all the HPV tests, would make `subsequent testing' locally independent of `HPV-test type' (see their Figure 2). Hence under such a stochastic intervention any observed difference between the CN2+ detection rates can be attributed to the HPV tests with different false-negative rates. 
\cite{roysland2025graphical} performed a nonparametric IPW analysis, while in the following we apply the EIF based estimators. 

Since the original data set is not publicly available, 
our analysis uses a simulated data set from the companion website  \texttt{github.com/palryalen/paper-code} of \cite{roysland2025graphical}. 
It has features identical to those of the real data, 
and consists of 1428 individuals, with 1147 in the combined Amplicor/HC2 group and 281 in the PreTectProofer group.
For this data  the empirical average of $Y$ is 0.9\%
in the Amplicor/HC2 group and 5.7\% 
in the PreTectProofer group.  
\figref{figures/hpv_KM} shows 
that individuals in the PreTectProofer group consistently had a higher probability of subsequent testing than those in the Amplicor/HC2 group. Because this higher rate of subsequent testing is presumably due to manufacturer recommendation  \citep{roysland2025graphical}, a constant $\theta$ is considered here. 
Using a Cox proportional hazards model,
the average constant hazard ratio of subsequent testing comparing the Amplicor/HC2 group to the PreTectProofer group is estimated  to be 0.677 (95\% CI: 0.573 – 0.8). 
We note that there is no censoring in the data before the end of follow-up, therefore the average hazard ratio is well defined \citep{xu2000proportional}. 

We apply our framework by estimating the incremental causal effect of subsequent testing within the PreTectProofer group, 
and consider estimands $\psi(\theta)$ for $\theta \in [0.5, 1.1]$, representing 
an intervention that scales the hazard rate of subsequent testing in the PreTectProofer group by a factor of $\theta$. 
As in the simulation 
we consider the estimators $\hat\psi(\theta)$, $\hat\psi_{\cf}(\theta)$ and $\hat \psi_{\text{ipw}}(\theta)$. 
Since this data does not contain baseline covariates \citep{roysland2025graphical},
for $\hat\psi(\theta)$ 
we estimate $\hat\Lambda$ using log transformed Kaplan-Meier estimator and $\hat\mu$ using logistic regression for the binary $Y$. For the uniform confidence band we approximated the supreme over the interval $[0.5, 1.1]$ using about 60 
equally 0.1 spaced discrete $\theta$ values.

We display the results for $\hat\psi_{\cf}(\theta)$ in \figref{figures/hpv-psi-cf}. 
\begin{figure}[ht!]
\centering
\includegraphics[width=0.7 \textwidth]{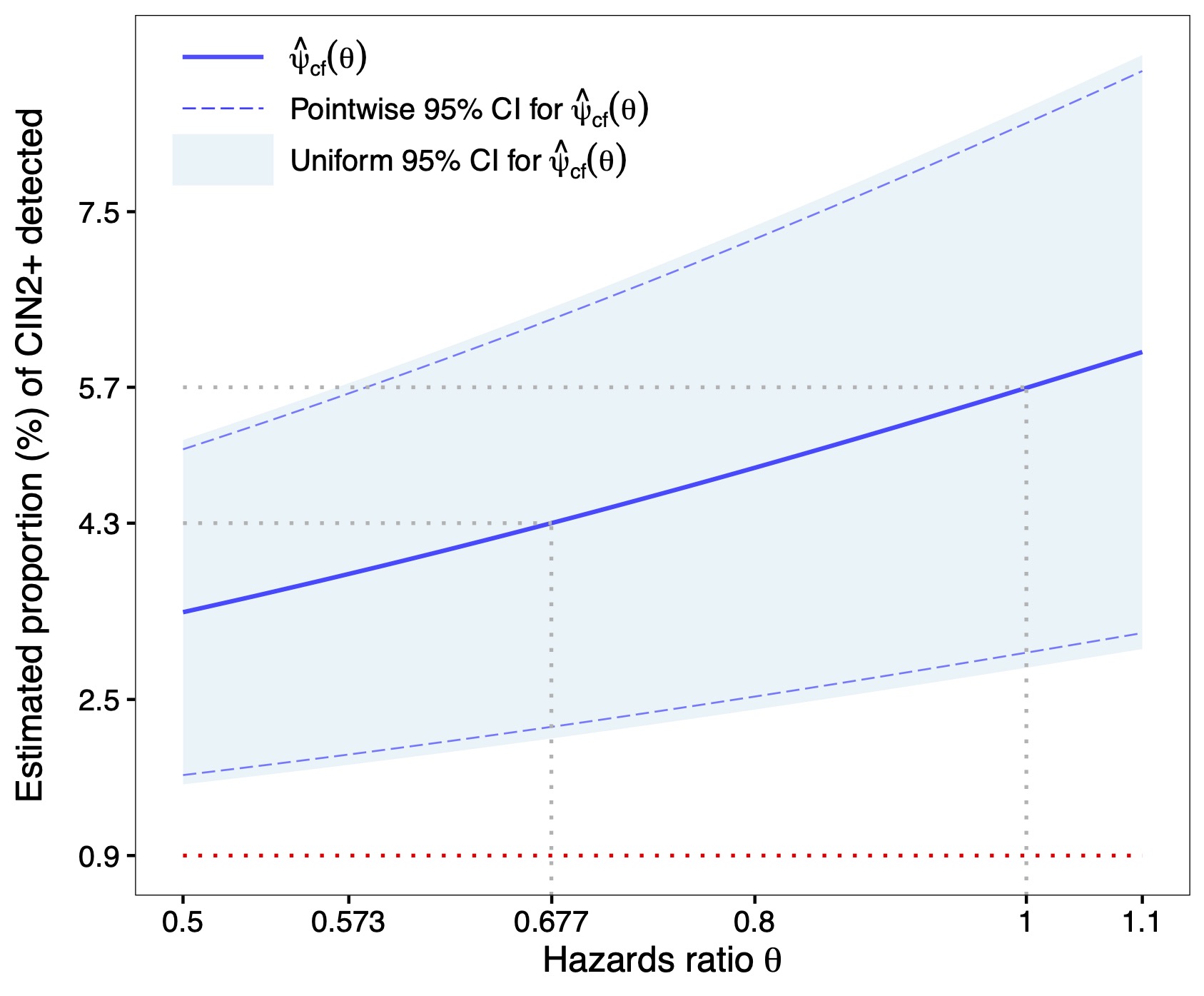}
\caption{Estimated proportion of CIN2+ detected in the PreTectProofer group under incremental intervention indexed by $\theta$.
}
\label{fig:figures/hpv-psi-cf}
\end{figure}
From \figref{figures/hpv-psi-cf}, we observe that if the hazard of the time to subsequent testing decreases, 
the proportion of CIN2+ detection also decreases from the observed 0.0569 in the PreTectProofer group; for example, 
halving the hazards would reduce the proportion 
to $\hat{\psi}_{\cf}(0.5)$= 3.4\% (95\% CI: 1.7 – 5.1\%).
Evaluating at the previously estimated hazard ratio $\theta=0.677$ gives $\hat{\psi}_{\cf}(0.677)= 4.3\%$ (95\% CI: 2.2 – 6.4\%). 
Note that the entire 95\% CI is above 0.9\%, the observed CIN2+ detection proportion in the combined Amplicor/HC2 group.
In fact, the whole confidence band in the figure is above 0.9\%. 
This suggests that even under the same subsequent testing regime, 
the CIN2+ detection rate is still significantly higher in the PreTectProofer group. 
Therefore it is likely that  the PreTectProofer test has  higher false-negative rate compared to the other two tests.  

As a side note the uniform confidence band is not much wider than the pointwise CI's mainly because the EIF in \eqref{eq:eifconst} is very smooth in $\theta$.  
\figref{figures/hpv-psi} of the Supplementary Material contains the results from  all three estimators, where the point estimates are close to each other, with $\hat\psi(\theta)$ and $\hat\psi_{\cf}(\theta)$ exhibiting slightly narrower confidence intervals than $\hat \psi_{\text{ipw}}(\theta)$. 
Also as a comparison, 
applying the GitHub \texttt{R} code from \cite{roysland2025graphical} to the same simulated data set yields an estimated detection rate of  4.2\% (95\% CI: 2.2 – 6.2\%) for the PreTectProofer test under the stochastic intervention, which is very close to our estimate of 4.3\% (95\% CI: 2.2 – 6.4\%). 
Note that \cite{roysland2025graphical} used a smoothed $\theta(t)$  over time and, in doing so, were not able to obtain any uniform confidence bands.




\

\section*{Acknowledgement}

This work was presented at the 2025 Joint Statistical Meetings and the 2026 American Causal Inference Conference.

\bibliographystyle{agsm}
\bibliography{ref-sorted}

\newpage
\suppformatappendix

\section*{Supplementary Material}

\addcontentsline{toc}{part}{Supplementary Material} 
\etocsettocstyle{\section*{Contents}}{}
\etocsetnexttocdepth{subsection}
\localtableofcontents 

\newpage

\section{Additional Material for Simulation and Application} \label{sec:add_simu}

\subsection{Additional Material for Simulation}

This section includes the tables that display the simulation results in Section \ref{sec:simu} of the main paper.

\begin{table}[ht!]
\caption{\label{tab:simu_n_200}Simulation results for different estimators. Each observed dataset has sample size 200, and 1000 datasets are simulated. Bias, SD, and SE are scaled by $10^{-2}$.}
\centering
\begin{tabular}{lllrrccc}
\toprule
$\theta(t,l)$ & $\psi(\theta)$ & Estimator & Bias & \%Bias & SD & SE & CP \\
\midrule
\multirow{4}{*}{$\theta_{1}$} & \multirow{4}{*}{1.655} & $\hat\psi_{\text{ipw}}(\theta)$ & -1.370 & -0.828 & 8.872 & 8.319 & 92.4 \\
 &  & $\hat\psi(\theta)$ & -0.605 & -0.365 & 6.592 & 6.410 & 95.1 \\
 &  & $\hat\psi_{\text{o}}(\theta)$ & -0.552 & -0.334 & 6.604 & 6.509 & 95.2 \\
 &  & $\hat\psi_{\text{cf}}(\theta)$ & -0.076 & -0.046 & 8.264 & 6.841 & 94.6 \\
\midrule
\multirow{4}{*}{$\theta_{2}$} & \multirow{4}{*}{1.546} & $\hat\psi_{\text{ipw}}(\theta)$ & -1.077 & -0.697 & 8.071 & 7.617 & 92.7 \\
 &  & $\hat\psi(\theta)$ & -0.436 & -0.282 & 6.306 & 6.190 & 95.7 \\
 &  & $\hat\psi_{\text{o}}(\theta)$ & -0.304 & -0.197 & 6.313 & 6.265 & 95.2 \\
 &  & $\hat\psi_{\text{cf}}(\theta)$ & 0.209 & 0.135 & 7.144 & 6.405 & 95.0 \\
\midrule
\multirow{4}{*}{$\theta_{3}$} & \multirow{4}{*}{1.635} & $\hat\psi_{\text{ipw}}(\theta)$ & -1.179 & -0.721 & 7.878 & 7.497 & 93.6 \\
 &  & $\hat\psi(\theta)$ & -0.655 & -0.401 & 6.512 & 6.371 & 95.3 \\
 &  & $\hat\psi_{\text{o}}(\theta)$ & -0.645 & -0.395 & 6.504 & 6.453 & 95.3 \\
 &  & $\hat\psi_{\text{cf}}(\theta)$ & -0.025 & -0.015 & 7.323 & 6.657 & 94.9 \\
\midrule
\multirow{4}{*}{$\theta_{4}$} & \multirow{4}{*}{1.505} & $\hat\psi_{\text{ipw}}(\theta)$ & -0.859 & -0.571 & 7.205 & 6.930 & 93.4 \\
 &  & $\hat\psi(\theta)$ & -0.447 & -0.297 & 6.233 & 6.164 & 95.5 \\
 &  & $\hat\psi_{\text{o}}(\theta)$ & -0.339 & -0.225 & 6.224 & 6.215 & 95.4 \\
 &  & $\hat\psi_{\text{cf}}(\theta)$ & 0.275 & 0.183 & 6.506 & 6.244 & 94.9 \\
\midrule
\multirow{4}{*}{$\theta_{5}$} & \multirow{4}{*}{1.729} & $\hat\psi_{\text{ipw}}(\theta)$ & -1.197 & -0.692 & 7.045 & 6.881 & 95.0 \\
 &  & $\hat\psi(\theta)$ & -0.799 & -0.462 & 6.795 & 6.681 & 95.2 \\
 &  & $\hat\psi_{\text{o}}(\theta)$ & -0.897 & -0.519 & 6.757 & 6.755 & 94.6 \\
 &  & $\hat\psi_{\text{cf}}(\theta)$ & -0.083 & -0.048 & 7.618 & 7.076 & 95.1 \\
\midrule
\multirow{4}{*}{$\theta_{6}$} & \multirow{4}{*}{1.774} & $\hat\psi_{\text{ipw}}(\theta)$ & -1.402 & -0.790 & 7.438 & 7.168 & 94.1 \\
 &  & $\hat\psi(\theta)$ & -0.885 & -0.499 & 6.886 & 6.722 & 94.4 \\
 &  & $\hat\psi_{\text{o}}(\theta)$ & -1.042 & -0.588 & 6.858 & 6.821 & 94.6 \\
 &  & $\hat\psi_{\text{cf}}(\theta)$ & -0.370 & -0.209 & 8.538 & 7.273 & 95.0 \\
\midrule
\multirow{4}{*}{$\theta_{7}$} & \multirow{4}{*}{1.652} & $\hat\psi_{\text{ipw}}(\theta)$ & -0.955 & -0.578 & 7.178 & 7.162 & 93.6 \\
 &  & $\hat\psi(\theta)$ & -0.620 & -0.375 & 6.929 & 6.936 & 94.3 \\
 &  & $\hat\psi_{\text{o}}(\theta)$ & -0.678 & -0.410 & 6.898 & 6.979 & 94.8 \\
 &  & $\hat\psi_{\text{cf}}(\theta)$ & 0.138 & 0.084 & 7.363 & 7.365 & 95.7 \\
\midrule
\multirow{4}{*}{$\theta_{8}$} & \multirow{4}{*}{1.553} & $\hat\psi_{\text{ipw}}(\theta)$ & -0.774 & -0.499 & 7.410 & 7.389 & 93.9 \\
 &  & $\hat\psi(\theta)$ & -0.303 & -0.195 & 6.956 & 7.025 & 94.5 \\
 &  & $\hat\psi_{\text{o}}(\theta)$ & -0.222 & -0.143 & 6.938 & 7.048 & 94.6 \\
 &  & $\hat\psi_{\text{cf}}(\theta)$ & 0.483 & 0.311 & 7.064 & 7.400 & 95.3 \\
\bottomrule
\end{tabular}
\end{table}

\newpage
\begin{table}[ht!]
\caption{\label{tab:simu_n_1000}Simulation results for different estimators. Each observed dataset has sample size 1000, and 1000 datasets are simulated. Bias, SD, and SE are scaled by $10^{-2}$.}
\centering
\begin{tabular}{lllrrccc}
\toprule
$\theta(t,l)$ & $\psi(\theta)$ & Estimator & Bias & \%Bias & SD & SE & CP \\
\midrule
\multirow{4}{*}{$\theta_{1}$} & \multirow{4}{*}{1.655} & $\hat\psi_{\text{ipw}}(\theta)$ & -0.298 & -0.180 & 3.949 & 3.840 & 94.1 \\
 &  & $\hat\psi(\theta)$ & -0.235 & -0.142 & 2.954 & 2.910 & 94.1 \\
 &  & $\hat\psi_{\text{o}}(\theta)$ & -0.213 & -0.129 & 2.935 & 2.910 & 94.2 \\
 &  & $\hat\psi_{\text{cf}}(\theta)$ & -0.098 & -0.059 & 3.352 & 3.008 & 94.4 \\
\midrule
\multirow{4}{*}{$\theta_{2}$} & \multirow{4}{*}{1.546} & $\hat\psi_{\text{ipw}}(\theta)$ & -0.249 & -0.161 & 3.620 & 3.496 & 94.2 \\
 &  & $\hat\psi(\theta)$ & -0.189 & -0.122 & 2.846 & 2.797 & 94.4 \\
 &  & $\hat\psi_{\text{o}}(\theta)$ & -0.152 & -0.098 & 2.832 & 2.797 & 94.5 \\
 &  & $\hat\psi_{\text{cf}}(\theta)$ & -0.034 & -0.022 & 3.006 & 2.848 & 94.5 \\
\midrule
\multirow{4}{*}{$\theta_{3}$} & \multirow{4}{*}{1.635} & $\hat\psi_{\text{ipw}}(\theta)$ & -0.284 & -0.173 & 3.524 & 3.428 & 94.0 \\
 &  & $\hat\psi(\theta)$ & -0.247 & -0.151 & 2.924 & 2.883 & 93.9 \\
 &  & $\hat\psi_{\text{o}}(\theta)$ & -0.232 & -0.142 & 2.908 & 2.884 & 94.5 \\
 &  & $\hat\psi_{\text{cf}}(\theta)$ & -0.086 & -0.053 & 3.043 & 2.936 & 94.3 \\
\midrule
\multirow{4}{*}{$\theta_{4}$} & \multirow{4}{*}{1.505} & $\hat\psi_{\text{ipw}}(\theta)$ & -0.231 & -0.154 & 3.251 & 3.147 & 94.2 \\
 &  & $\hat\psi(\theta)$ & -0.198 & -0.132 & 2.818 & 2.776 & 94.3 \\
 &  & $\hat\psi_{\text{o}}(\theta)$ & -0.165 & -0.109 & 2.808 & 2.776 & 94.3 \\
 &  & $\hat\psi_{\text{cf}}(\theta)$ & -0.030 & -0.020 & 2.856 & 2.793 & 94.2 \\
\midrule
\multirow{4}{*}{$\theta_{5}$} & \multirow{4}{*}{1.729} & $\hat\psi_{\text{ipw}}(\theta)$ & -0.348 & -0.201 & 3.156 & 3.101 & 93.7 \\
 &  & $\hat\psi(\theta)$ & -0.300 & -0.173 & 3.046 & 3.020 & 94.0 \\
 &  & $\hat\psi_{\text{o}}(\theta)$ & -0.302 & -0.174 & 3.031 & 3.021 & 94.1 \\
 &  & $\hat\psi_{\text{cf}}(\theta)$ & -0.116 & -0.067 & 3.084 & 3.062 & 94.5 \\
\midrule
\multirow{4}{*}{$\theta_{6}$} & \multirow{4}{*}{1.774} & $\hat\psi_{\text{ipw}}(\theta)$ & -0.345 & -0.194 & 3.310 & 3.249 & 94.0 \\
 &  & $\hat\psi(\theta)$ & -0.290 & -0.163 & 3.076 & 3.049 & 94.0 \\
 &  & $\hat\psi_{\text{o}}(\theta)$ & -0.307 & -0.173 & 3.058 & 3.051 & 94.6 \\
 &  & $\hat\psi_{\text{cf}}(\theta)$ & -0.130 & -0.073 & 3.173 & 3.118 & 94.7 \\
\midrule
\multirow{4}{*}{$\theta_{7}$} & \multirow{4}{*}{1.652} & $\hat\psi_{\text{ipw}}(\theta)$ & -0.359 & -0.217 & 3.216 & 3.216 & 94.1 \\
 &  & $\hat\psi(\theta)$ & -0.257 & -0.156 & 3.137 & 3.126 & 93.9 \\
 &  & $\hat\psi_{\text{o}}(\theta)$ & -0.254 & -0.154 & 3.128 & 3.127 & 93.7 \\
 &  & $\hat\psi_{\text{cf}}(\theta)$ & -0.089 & -0.054 & 3.158 & 3.164 & 94.5 \\
\midrule
\multirow{4}{*}{$\theta_{8}$} & \multirow{4}{*}{1.553} & $\hat\psi_{\text{ipw}}(\theta)$ & -0.326 & -0.210 & 3.314 & 3.326 & 94.5 \\
 &  & $\hat\psi(\theta)$ & -0.178 & -0.115 & 3.148 & 3.156 & 94.4 \\
 &  & $\hat\psi_{\text{o}}(\theta)$ & -0.151 & -0.097 & 3.142 & 3.157 & 94.3 \\
 &  & $\hat\psi_{\text{cf}}(\theta)$ & -0.009 & -0.006 & 3.161 & 3.193 & 94.7 \\
\bottomrule
\end{tabular}
\end{table}

\newpage
\begin{table}[ht!]
\caption{\label{tab:simu_n_5000}Simulation results for different estimators. Each observed dataset has sample size 5000, and 1000 datasets are simulated. Bias, SD, and SE are scaled by $10^{-2}$.}
\centering
\begin{tabular}{lllrrccc}
\toprule
$\theta(t,l)$ & $\psi(\theta)$ & Estimator & Bias & \%Bias & SD & SE & CP \\
\midrule
\multirow{4}{*}{$\theta_{1}$} & \multirow{4}{*}{1.655} & $\hat\psi_{\text{ipw}}(\theta)$ & -0.059 & -0.036 & 1.751 & 1.723 & 93.8 \\
 &  & $\hat\psi(\theta)$ & -0.025 & -0.015 & 1.286 & 1.305 & 95.1 \\
 &  & $\hat\psi_{\text{o}}(\theta)$ & -0.019 & -0.012 & 1.288 & 1.305 & 95.3 \\
 &  & $\hat\psi_{\text{cf}}(\theta)$ & 0.016 & 0.010 & 1.328 & 1.324 & 95.6 \\
\midrule
\multirow{4}{*}{$\theta_{2}$} & \multirow{4}{*}{1.546} & $\hat\psi_{\text{ipw}}(\theta)$ & -0.051 & -0.033 & 1.584 & 1.570 & 94.9 \\
 &  & $\hat\psi(\theta)$ & -0.018 & -0.011 & 1.232 & 1.254 & 95.6 \\
 &  & $\hat\psi_{\text{o}}(\theta)$ & -0.010 & -0.006 & 1.233 & 1.253 & 95.4 \\
 &  & $\hat\psi_{\text{cf}}(\theta)$ & 0.022 & 0.014 & 1.254 & 1.265 & 95.6 \\
\midrule
\multirow{4}{*}{$\theta_{3}$} & \multirow{4}{*}{1.635} & $\hat\psi_{\text{ipw}}(\theta)$ & -0.047 & -0.029 & 1.551 & 1.539 & 94.6 \\
 &  & $\hat\psi(\theta)$ & -0.027 & -0.017 & 1.269 & 1.293 & 95.0 \\
 &  & $\hat\psi_{\text{o}}(\theta)$ & -0.023 & -0.014 & 1.271 & 1.293 & 95.0 \\
 &  & $\hat\psi_{\text{cf}}(\theta)$ & 0.007 & 0.004 & 1.292 & 1.304 & 95.2 \\
\midrule
\multirow{4}{*}{$\theta_{4}$} & \multirow{4}{*}{1.505} & $\hat\psi_{\text{ipw}}(\theta)$ & -0.046 & -0.031 & 1.412 & 1.413 & 94.8 \\
 &  & $\hat\psi(\theta)$ & -0.028 & -0.018 & 1.218 & 1.244 & 95.9 \\
 &  & $\hat\psi_{\text{o}}(\theta)$ & -0.021 & -0.014 & 1.219 & 1.243 & 95.6 \\
 &  & $\hat\psi_{\text{cf}}(\theta)$ & 0.006 & 0.004 & 1.230 & 1.249 & 95.6 \\
\midrule
\multirow{4}{*}{$\theta_{5}$} & \multirow{4}{*}{1.729} & $\hat\psi_{\text{ipw}}(\theta)$ & -0.054 & -0.031 & 1.377 & 1.393 & 95.2 \\
 &  & $\hat\psi(\theta)$ & -0.041 & -0.024 & 1.323 & 1.354 & 95.5 \\
 &  & $\hat\psi_{\text{o}}(\theta)$ & -0.042 & -0.024 & 1.324 & 1.354 & 95.2 \\
 &  & $\hat\psi_{\text{cf}}(\theta)$ & -0.016 & -0.009 & 1.346 & 1.362 & 95.0 \\
\midrule
\multirow{4}{*}{$\theta_{6}$} & \multirow{4}{*}{1.774} & $\hat\psi_{\text{ipw}}(\theta)$ & -0.026 & -0.015 & 1.455 & 1.459 & 95.1 \\
 &  & $\hat\psi(\theta)$ & -0.010 & -0.006 & 1.341 & 1.368 & 95.4 \\
 &  & $\hat\psi_{\text{o}}(\theta)$ & -0.013 & -0.007 & 1.342 & 1.368 & 95.4 \\
 &  & $\hat\psi_{\text{cf}}(\theta)$ & 0.014 & 0.008 & 1.369 & 1.381 & 94.9 \\
\midrule
\multirow{4}{*}{$\theta_{7}$} & \multirow{4}{*}{1.652} & $\hat\psi_{\text{ipw}}(\theta)$ & -0.059 & -0.036 & 1.395 & 1.449 & 95.2 \\
 &  & $\hat\psi(\theta)$ & -0.038 & -0.023 & 1.356 & 1.401 & 95.6 \\
 &  & $\hat\psi_{\text{o}}(\theta)$ & -0.038 & -0.023 & 1.356 & 1.401 & 96.0 \\
 &  & $\hat\psi_{\text{cf}}(\theta)$ & -0.014 & -0.009 & 1.392 & 1.411 & 95.1 \\
\midrule
\multirow{4}{*}{$\theta_{8}$} & \multirow{4}{*}{1.553} & $\hat\psi_{\text{ipw}}(\theta)$ & -0.052 & -0.033 & 1.444 & 1.500 & 95.6 \\
 &  & $\hat\psi(\theta)$ & -0.016 & -0.010 & 1.366 & 1.414 & 95.7 \\
 &  & $\hat\psi_{\text{o}}(\theta)$ & -0.011 & -0.007 & 1.366 & 1.414 & 95.7 \\
 &  & $\hat\psi_{\text{cf}}(\theta)$ & 0.012 & 0.008 & 1.401 & 1.423 & 95.2 \\
\bottomrule
\end{tabular}
\end{table}

\clearpage

\subsection{Additional Material for Application} \label{sec:add_real}

\figref{figures/hpv-psi} presents the application results from \secref{real} of the main paper.
\begin{figure}[ht!]
\centering
\includegraphics[width= \textwidth]{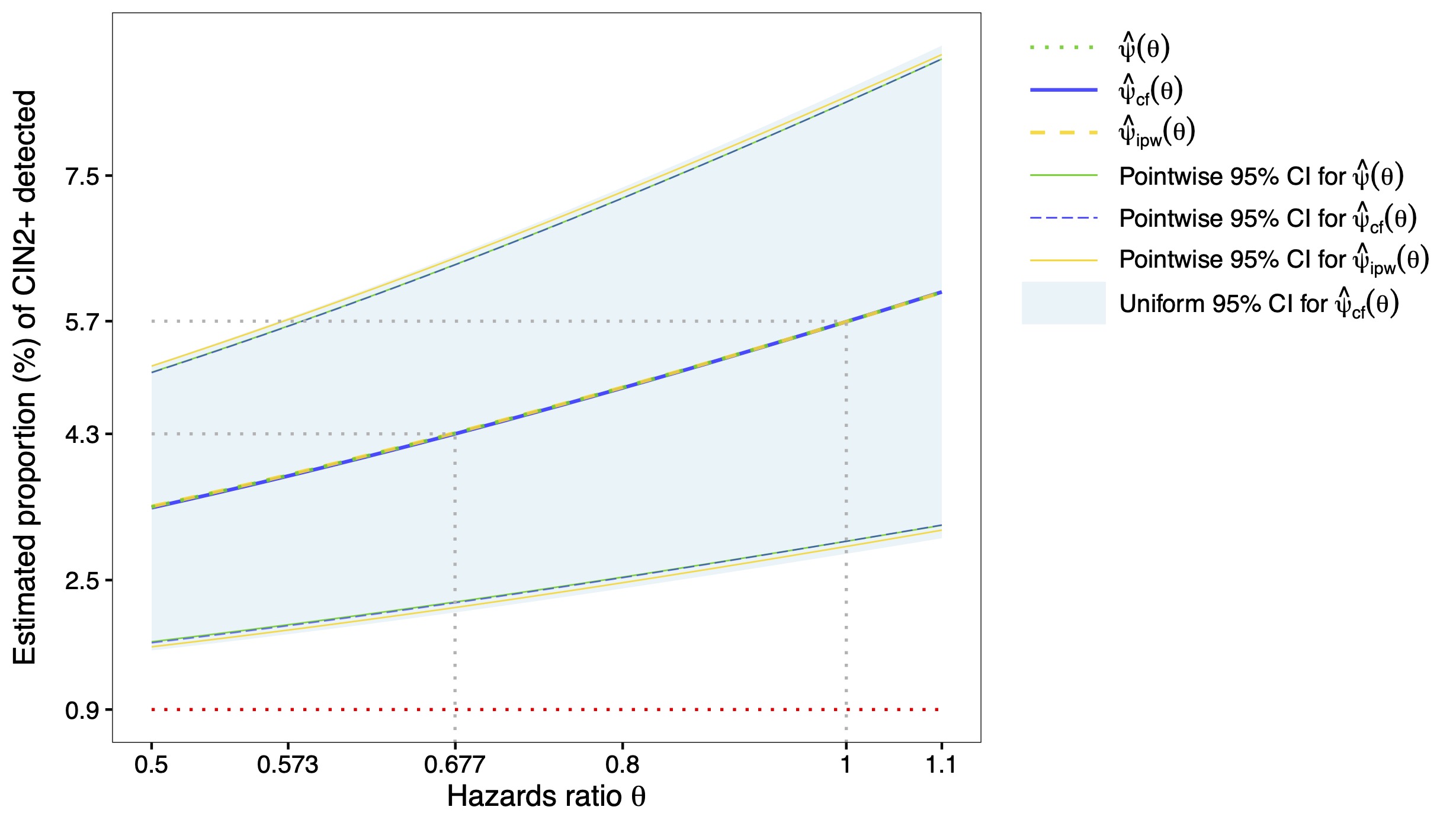}
\caption{Estimated proportion of CIN2+ detected in the PreTectProofer group under incremental intervention indexed by $\theta$.
}
\label{fig:figures/hpv-psi}
\end{figure}

\section{Proofs of Theorems} \label{sec:thm_proofs}

\subsection{Notation} \label{sec:notation}

We introduce the notation used throughout the proofs.
\begin{center}
\begin{tabular}{l @{\hspace{1cm}} l}
\toprule
Notation & Description \\
\midrule
$C$ & Generic positive constants that may vary from place to place \\
$a_n \lesssim b_n$ & For two positive sequences $a_n$ and $b_n$, $a_n \leq C b_n$ for all $n$ \\
$P$ & Expectation with respect to the observed data distribution \\
$P_n$ & Empirical average, e.g., $P_n \{\phi(\theta; \cdot, \cdot)\} = \sum_{i=1}^n \phi_i(\theta; \cdot, \cdot) / n$ \\
$\sup_{t, l} |f(t,l)|$ & For a real-valued function $f$, $\sup_{t, l} |f(t,l)| = \sup_{t \in [0, \tau], l \in \cL} |f(t,l)|$ \\
\bottomrule
\end{tabular}
\end{center}


\noindent Let $(\cF, \|\cdot\|)$ be a class of real-valued functions $f:\cX \to \bbR$, where $\|\cdot\|$ denotes the $L_2(P)$-norm with respect to a probability measure $P$ on $\cX$.

\begin{center}
\begin{tabular}{l @{\hspace{1cm}} l}
\toprule
Notation & Description \\
\midrule
$F$ & Envelope function of $\cF$ such that $|f(x)| \leq F(x)$ for all $x \in \cX$ and $f \in \cF$ \\
Bracket $[l,u]$ & Set of all functions $f$ with $l \le f \le u$ \\
$\epsilon$-$L_2(P)$-bracket & Bracket $[l, u]$ with $\|u - l\| < \epsilon$ \\
$N_{[]}(\epsilon, \cF, L_2(P))$ & $L_2(P)$-bracketing number of $\cF$, \\
& the minimum number of $\epsilon$-$L_2(P)$-brackets needed to cover $\cF$ \\
$J_{[]}(\cF)$ & $L_2(P)$-bracketing integral of $\cF$, $J_{[]}(\cF) = \int_0^\infty \sqrt{\log N_{[]}(\epsilon, \cF, L_2(P))} d\epsilon$ \\
\bottomrule
\end{tabular}
\end{center}

\subsection{Proof of \thmref{eifiden}} \label{sec:pfeifiden}

\begin{proof}
Denote $O \coloneqq (Y, U, L)$ and $\P$ as the distribution of $O$. Recall $\Delta = 1 (U < \tau)$. Note that $\theta(U, L)^\Delta = \theta(T, L)^\Delta$, and $\Delta = 1 (U < \tau) = 1 (T < \tau)$. For a fixed function $\theta$, define 
\begin{equation}
g(O, \P) \coloneqq Y \theta(U,L)^{\Delta}e^{-\int_0^{U-}\{\theta(v,L) - 1\} d\Lambda_0(v|L)} = Y \theta(T,L)^{\Delta}e^{-\int_0^{T \wedge \tau}\{\theta(t,L) - 1\}d\Lambda_0(t|L)}, \label{eq:u-}
\end{equation}
where $\Lambda_0(t|L)$ denotes the cumulative hazard function of $T$ given $L$, as defined in Section~\ref{sec:iden}. 
In \eqref{eq:u-}, using the upper limit $U-$ ensures that individuals treated exactly at $\tau$ are classified as untreated, and is essential to correctly derive the efficient influence function in the presence of a point mass at $\tau$.
Since the cumulative hazard functions of $U$ and $T$ given $L$ are identical on $[0, \tau)$, we interpret $\Lambda_0(v|L)$ in $g(O, \P)$ as the cumulative hazard function of $U$ at time $v$ given $L$.

Our estimand is 
$
\psi(\theta) 
= E_{\P} \left\{g(O, \P) \right\}.
$
Following \cite{hines2022demystifying}, to obtain the estimand's efficient influence function, we will perturb the estimand in the direction $\widetilde{\P}$ of a point mass on a single observation $\tilde o = (\tilde{y}, \tilde u, \tilde{l})$. This pertubation is parameterized via the one-dimensional mixture model
$$
\P_\gamma=\gamma \widetilde{\P}+(1-\gamma) \P,
$$
indexed by $\gamma \in [0,1]$.

The estimand's efficient influence function is $\phi(O, \P)$ such that
$$\left.\frac{d}{d \gamma} E_{\P_\gamma} \left\{g\left(O, \P_\gamma\right)\right\} \right|_{\gamma=0}
=E_{\widetilde{\P}} \left\{\phi(O, \P) \right\},$$
and \cite{hines2022demystifying} shows the following equality which simplifies the derivation of $\phi(O, \P)$, 
\begin{equation*}
\left.\frac{d}{d \gamma} E_{\P_\gamma} \left\{g\left(O, \P_\gamma\right)\right\} \right|_{\gamma=0}
= \left.E_{\P}\left\{\frac{d}{d \gamma} g(o, \P_\gamma)\right\}\right|_{\gamma=0}+g(\tilde o, \P)-E_{\P}\{g(O, \P)\}.
\end{equation*}
The goal is then to write the right-hand side (RHS) of the above as $E_{\widetilde{\P}} (\cdot)$; note that the last two terms can immediately be expressed as
\begin{equation*}
g(\tilde{o}, \P) = \tilde y \theta(\tilde u, \tilde l)^{\tilde \delta} e^{-\int_0^{\tilde u-}\{\theta(v, \tilde l)-1\} d \Lambda_0(v | \tilde l)} 
= E_{\widetilde{\P}} \left[ Y \theta(U,L)^{\Delta}e^{-\int_0^U\{\theta(v,L) - 1\} d\Lambda_0(v|L)} \right],
\end{equation*}
and
$$
E_{\P}\{g(O, \P)\} = \psi(\theta) = E_{\widetilde{\P}} \{ \psi(\theta) \}.
$$
We are left to express $\left.E_{\P}\left\{\frac{d}{d \gamma} g(o, \P_\gamma)\right\}\right|_{\gamma=0}$ in a similar way.

Below, we denote by $f(y, u, l)$ the joint density of $Y = y$, $U = u$ and $L = l$, $f(y | u, l)$ the conditional density of $Y = y$ given $U = u$ and $L = l$, by $f(u, l)$ the joint density of $U = u$ and $L = l$, by $f(u | l)$ the conditional density of $U = u$ given $L = l$, and by $f(l)$ the marginal density of $L = l$, all with respect to the distribution $\P$. Analogously, we use $f_\gamma(\cdot)$ to denote the corresponding densities under the distribution $\P_\gamma$.

Let ${\1}_{\tilde {o}}(o)$ denote the Dirac delta function with respect to $\tilde {o}$, defined as the density of a point mass at $\tilde {o}$: it is zero everywhere except at $\tilde {o}$, and its integral over the real line is equal to one.

The mixture model specifies $\f(y, u, l) = \gamma \1_{(\tilde y, \tilde u, \tilde l)}(y, u, l) + (1-\gamma) f(y, u, l)$, and integrating it with respect to $y$ and then $u$ yields $\f(u, l) = \gamma \1_{(\tilde u, \tilde l)}(u, l) + (1-\gamma) f(u,l)$ and $\f(l) = \gamma \1_{\tilde l}(l) + (1-\gamma) f(l)$. 
Immediate useful facts are $\1_{(\tilde u, \tilde l)}(u, l) = \1_{\tilde u}(u)\1_{\tilde l}(l) $, and 
$$
\frac{d}{d \gamma}f_\gamma(u, l) = \1_{(\tilde u, \tilde l)}(u, l) - f(u,l), \quad \frac{d}{d \gamma}f_\gamma(l) = \1_{\tilde l}(l) - f(l).
$$

Note that
$$
f_\gamma(u | l) = \frac{f_\gamma(u,l)}{\f(l)} = \frac{\gamma \1_{(\tilde u, \tilde l)}(u, l) + (1-\gamma) f(u,l)}{\f(l)},
$$
then
\begin{align*}
\left.\frac{d}{d \gamma}f_\gamma(u | l)\right|_{\gamma=0}
&= \left. \left[\frac{\1_{(\tilde u, \tilde l)}(u, l) - f(u,l)}{\f(l)} - \frac{\{\gamma \1_{(\tilde u, \tilde l)}(u, l) + (1-\gamma) f(u,l)\} \frac{d}{d \gamma}f_\gamma(l)}{\f(l)^2}\right] \right|_{\gamma=0} \\
&= \frac{\1_{(\tilde u, \tilde l)}(u, l) - f(u,l)}{f(l)} - \frac{f(u,l) \{\1_{\tilde l}(l) - f(l)\}}{f(l)^2} = \frac{\1_{\tilde l}(l)}{f(l)} \{\1_{\tilde u}(u) - f(u | l)\}.
\end{align*}
Denote $S_\gamma(u | l) = \P_\gamma(U \geq u |L=l) = 1-\int_0^{u-} f_\gamma(s | l) d s$. We have 
\begin{align*}
\left.\d S_\gamma(u | l)\right|_{\gamma=0} &= - \left. \int_0^{u-} \frac{d}{d \gamma}f_\gamma(s | l) d s\right|_{\gamma=0} 
= -\frac{\1_{\tilde l}(l)}{f(l)} \int_0^{u-} \{\1_{\tilde u}(s) - f(s | l)\} d s \\
&= -\frac{\1_{\tilde l}(l)}{f(l)} [1(\tilde u < u) - \{1-S(u|l)\}] = \frac{\1_{\tilde l}(l)}{f(l)} \{1(u \leq \tilde u) - S(u|l)\}.
\end{align*}
Using the two derivatives above, we obtain
\begin{align*}
\left.\frac{d}{d \gamma} \lambda_\gamma(u | l)\right|_{\gamma=0} 
&= \left.\frac{d}{d \gamma} \frac{f_\gamma(u | l)}{S_\gamma(u | l)}\right|_{\gamma=0} 
= \frac{\1_{\tilde l}(l)}{f(l)} \left[ \frac{\1_{\tilde u}(u) - f(u | l)}{S(u|l)} - \frac{f(u | l)}{S(u|l)^2} \{1(u \leq \tilde u) - S(u|l)\} \right] \\
&= \frac{\1_{\tilde l}(l)}{f(l)} \left\{ \frac{\1_{\tilde u}(u)}{S(u|l)} - \frac{f(u | l)}{S(u|l)^2} 1(u \leq \tilde u) \right\}.
\end{align*}
As defined in Section~\ref{sec:iden}, for $u\in[0,\tau)$, the hazard function of $U$ at time $u$ given $L = l$ under the distribution $\P_\gamma$ is $\lambda_\gamma(u|l) = {\f(u|l)}/{S_\gamma(u | l)}$. We have 
\begin{align}
\left.\frac{d}{d \gamma} g(o, \P_\gamma) \right|_{\gamma=0} 
&= \left.\frac{d}{d \gamma} \left[ y \theta(u, l)^\delta e^{-\int_0^{u-} \{\theta(v, l)-1\} d \Lambda_\gamma(v | l)} \right] \right|_{\gamma=0} \notag \\
&= y \theta(u, l)^\delta e^{-\int_0^u \{\theta(v, l)-1\} d \Lambda_0(v | l)} \cdot \left[ -\left.\int_0^{u-} \{\theta(v, l)-1\} \frac{d}{d \gamma} \lambda_\gamma(v | l) \right|_{\gamma=0} d v \right] \notag \\
&= y \theta(u, l)^\delta e^{-\int_0^u\{\theta(v, l)-1\} d \Lambda_0(v | l)} \cdot \int_0^{u-} \{\theta(v, l)-1\} (T_1 - T_2) d v \cdot \frac{\1_{\tilde l}(l)}{f(l)} , \label{eq:derivative}
\end{align}
where
$$
T_1 \coloneqq \frac{f(v | l)}{S(v|l)^2} 1(v \leq \tilde u) , \quad
T_2 \coloneqq \frac{\1_{\tilde u}(v)}{S(v|l)}.
$$
For the second factor in \eqref{eq:derivative}, we compute 
\begin{equation*}
\int_0^{u-}\{\theta(v, l)-1\} T_1 d v 
= \int_0^{u-} \frac{\{\theta(v, l)-1\} f(v | l)}{S(v|l)^2} 1(v \leq \tilde u) d v 
= \int_0^{\tilde u \wedge u} \frac{\{\theta(v, l)-1\} f(v | l)}{S(v|l)^2} d v ,
\end{equation*}
and
\begin{equation*}
\int_0^{u-}\{\theta(v, l)-1\} T_2 d v 
= \int_0^{u-} \frac{\theta(v, l)-1}{S(v|l)} \1_{\tilde u}(v) d v 
= \frac{\theta(\tilde{u}, l)-1}{S(\tilde{u} | l)} 1(\tilde u < u).
\end{equation*} 
The difference between the above two terms is 
\begin{align*}
\int_0^{u-}\{\theta(v, l)-1\} (T_1 - T_2) d v 
= \int_0^{\tilde u \wedge u} \frac{\{\theta(v, l)-1\} f(v | l)}{S(v|l)^2} d v 
- \frac{\{\theta(\tilde{u}, l)-1\} 1(\tilde u < u)}{S(\tilde{u} | l)} .
\end{align*}
Then 
\begin{align*}
& \left.E_{\P}\left\{\frac{d}{d \gamma} g(o,\P_\gamma)\right\}\right|_{\gamma=0} 
= E_{\P}\left\{\left.\frac{d}{d \gamma} g(o,\P_\gamma)\right|_{\gamma=0}\right\} \\
&= \int y \theta(u, l)^\delta e^{-\int_0^u\{\theta(v, l)-1\} d \Lambda_0(v | l)} 
\left[ \int_0^{\tilde u \wedge u} \frac{\{\theta(v, l)-1\} f(v | l)}{S(v|l)^2} dv \right] \frac{\1_{\tilde l}(l)}{f(l)} f(y | u, l) f(u|l) f(l) dl dy du \\
&\quad - \int y \theta(u, l)^\delta e^{-\int_0^u\{\theta(v, l)-1\} d \Lambda_0(v | l)} \frac{\{\theta(\tilde{u}, l) - 1\} 1(\tilde u < u)}{S(\tilde{u} | l)} \frac{\1_{\tilde l}(l)}{f(l)} f(y | u, l) f(u|l) f(l) dl dy du .
\end{align*}
The first term gives
\begin{align*}
& \int y \theta(u, l)^\delta e^{-\int_0^u\{\theta(v, l)-1\} d \Lambda_0(v | l)} 
\left[\int_0^{\tilde u \wedge u} \frac{\{\theta(v, l)-1\} f(v | l)}{S(v|l)^2} dv\right] \frac{\1_{\tilde l}(l)}{f(l)} f(y | u, l) f(u|l) f(l) dl dy du \\
&= \int y \theta(u, \tilde l)^\delta e^{-\int_0^u\{\theta(v, \tilde l)-1\} d \Lambda_0(v | \tilde l)} 
\left[ \int_0^{\tilde u \wedge u} \frac{\{\theta(v, \tilde l)-1\}}{S(v|\tilde l)} d \Lambda_0(v | \tilde l) \right] 
f(y | u, \tilde l) f(u |\tilde l) dy du \\
&= \int E(Y | u, \tilde l) \theta(u, \tilde l)^\delta e^{-\int_0^u\{\theta(v, \tilde l)-1\} d \Lambda_0(v | \tilde l)} \int_0^{\tilde u \wedge u} \frac{\{\theta(v, \tilde l)-1\}}{S(v|\tilde l)} d \Lambda_0(v | \tilde l) f(u |\tilde l) du \\
&= E \left[ \left. E(Y | U', \tilde l) \theta(U', \tilde l)^{\Delta'} e^{-\int_0^{U'}\{\theta(v, \tilde l)-1\} d \Lambda_0(v | \tilde l)} \int_0^{\tilde u \wedge U'} \frac{\{\theta(v, \tilde l)-1\}}{S(v|\tilde l)} d\Lambda_0(v | \tilde l) \right| \tilde l \right] \\
&= E_{\widetilde{\P}}\left[\left. E \left[ \left. E(Y | U', L) \theta(U', L)^{\Delta'} e^{-\int_0^{U'}\{\theta(v, L)-1\} d \Lambda_0(v | L)} \int_0^{u \wedge U'} \frac{\{\theta(v, L)-1\}}{S(v|L)} d\Lambda_0(v | L) \right| L \right] \right|_{u=U} \right].
\end{align*}
The second term gives 
\begin{align*}
& \int
y \theta(u, l)^\delta e^{-\int_0^u\{\theta(v, l)-1\} d \Lambda_0(v | l)} \frac{\{\theta(\tilde{u}, l) - 1\} 1(\tilde u < u)}{S(\tilde{u} | l)} \frac{\1_{\tilde l}(l)}{f(l)} f(y | u, l) f(u|l) f(l) dl dy du \\
&= \int y \theta(u, \tilde l)^\delta e^{-\int_0^u\{\theta(v, \tilde l)-1\} d \Lambda_0(v | \tilde l)} \frac{\{\theta(\tilde{u}, \tilde l) - 1\} 1(\tilde u < u)}{S(\tilde{u} | \tilde l)} f(y | u, \tilde l) f(u |\tilde l) dy du\\
&= \int E(Y | u, \tilde l) \theta(u, \tilde l)^\delta e^{-\int_0^u\{\theta(v, \tilde l)-1\} d \Lambda_0(v | \tilde l)} \frac{\{\theta(\tilde{u}, \tilde l) - 1\} 1(\tilde u < u)}{S(\tilde{u} | \tilde l)} f(u|\tilde l) du\\
&= E \left[\left.E(Y | U', \tilde l) \theta(U', \tilde l)^{\Delta'} e^{-\int_0^{U'}\{\theta(v, \tilde l)-1\} d \Lambda_0(v | \tilde l)} \frac{\{\theta(\tilde{u}, \tilde l) - 1\} 1(\tilde u < U')}{S(\tilde{u} | \tilde l)} \right|\tilde l \right]\\
&= E_{\widetilde{\P}}\left[\left. E \left[ \left. E(Y | U', L) \theta(U', L)^{\Delta'} e^{-\int_0^{U'}\{\theta(v, L)-1\} d \Lambda_0(v | L)} \frac{\{\theta(u, L) - 1\}}{S(u | L)} 1(u < U') \right| L \right] \right|_{u=U} \right].
\end{align*}
Combining these two terms with $g(\tilde o, \P)-E_{\P}\{g(O, \P)\}$ yields the efficient influence function as
\begin{align*}
& Y \theta(U,L)^{\Delta} e^{-\int_0^U\{\theta(v,L) - 1\}d\Lambda_0(v|L)} \\
& + \left. E \left[ \left. E(Y | U', L) \theta(U', L)^{\Delta'} e^{-\int_0^{U'}\{\theta(v, L)-1\} d \Lambda_0(v | L)} \int_0^{u \wedge U'} \frac{\{\theta(v, L)-1\}}{S(v|L)} d\Lambda_0(v | L) \right| L \right] \right|_{u=U} \\
& - \left. E \left[ \left. E(Y | U', L) \theta(U', L)^{\Delta'} e^{-\int_0^{U'}\{\theta(v, L)-1\} d \Lambda_0(v | L)} \frac{\{\theta(u, L) - 1\}}{S(u | L)} 1(u < U') \right| L \right] \right|_{u=U}
-\psi(\theta),
\end{align*}
where $U'$ is an independent copy of $U$ given $L$, $\Delta' = 1(U' < \tau)$, and $S(u|l) = e^{- \Lambda_0(u|l)}$. The outermost expectation $E$ is taken with respect to $U'$ conditional on $L$. This function can also be expressed as $\phi(\theta; \Lambda_0, \mu_0) -\psi(\theta)$, where 
\begin{align*}
\phi(\theta; \Lambda_0, \mu_0) =& Y \theta(U,L)^{\Delta} e^{-\int_0^U\{\theta(v,L) - 1\}d\Lambda_0(v|L)} \\
& + \int_0^\tau \mu_0(u, L) \theta(u, L)^\delta e^{-\int_0^u\{\theta(v, L)-1\} d \Lambda_0(v | L)} \left[\int_0^{U \wedge u} \left\{\theta(v, L)-1\right\} d e^{\Lambda_0(v | L)} \right] d F_0(u|L) \\
& - \frac{\theta(U, L) - 1}{e^{-\Lambda_0(U | L)}} \int_{U+}^\tau \mu_0(u, L) \theta(u, L)^\delta e^{-\int_0^u \{\theta(v, L)-1\} d \Lambda_0(v | L)} d F_0(u|L), 
\end{align*}
where $\delta = 1(u < \tau)$ and $ F_0(u|l) = 1 -e^{- \Lambda_0(u|l)}$.

Following \cite{hines2022demystifying}, we verify that $\phi(\theta;\Lambda_0,\mu_0)$ has finite variance by showing that $E\{\phi(\theta;\Lambda_0,\mu_0)^2\} < \infty$. 
We first derive several bounds for real-valued functions of $t,u \in [0, \tau]$, $l \in \cL$, and $\delta\in\{0, 1\}$ under \assumpref{bound}. First,
\begin{align*}
|\theta(t, l)|^\delta e^{-\int_0^t \{\theta(v, l) - 1\} d \Lambda_0(v| l)} 
& \le \max\cur{1, \suptl |\theta(t, l)|} \cdot e^{\abs{\int_0^t \{\theta(v, l) - 1\} d \Lambda_0(v| l)} } \\
& \le \max\cur{1, \suptl |\theta(t, l)|} \cdot e^{\cur{\suptl |\theta(t, l)| + 1} \cdot \TV\cur{\Lambda_0(\cdot|l)} } \le C_1.
\end{align*}
The last inequality holds because, for each fixed $l \in \cL$, $\Lambda_0(0|l)=0$ and $\Lambda_0(t|l)$ is nondecreasing on $[0,\tau]$, so $\TV\cur{\Lambda_0(\cdot|l)} = \Lambda_0(\tau|l)$. Here $C_1 < \infty$ is a constant that depends only on the uniform bounds of $\theta(t,l)$ and $\Lambda_0(t|l)$ over $t \in [0,\tau]$ and $l \in \cL$.

Second,
\begin{align*}
 & \abs{ \int_0^\tau \mu_0(t, l) \left[\int_0^{u \wedge t} \left\{\theta(v, l)-1\right\} d e^{\Lambda_0(v | l)} \right] d e^{-\int_0^t \theta(v, l) d \Lambda_0(v | l)} } \\
 & \le \suptl |\mu_0(t, l)| \cdot \suptl \left| \int_0^t \{\theta(v, l) - 1\} d e^{\Lambda_0(v|l)} \right| \cdot \TV\cur{e^{-\int_0^\cdot \theta(v, l) d \Lambda_0(v | l)} } \\
 & \le \suptl |\mu_0(t, l)| \cdot \cur{\suptl |\theta(t, l)| + 1} \cdot \TV\cur{e^{\Lambda_0(\cdot|l)}} \cdot \TV\cur{e^{-\int_0^\cdot \theta(v, l) d \Lambda_0(v | l)} } \le C_2,
\end{align*}
The last inequality holds because $\TV\cur{e^{\Lambda_0(\cdot|l)}} = e^{\Lambda_0(\tau|l)} - 1$, and $\TV\cur{e^{-\int_0^\cdot \theta(v, l) d \Lambda_0(v | l)} } \leq 1$, since $t \mapsto e^{-\int_0^t \theta(v, l) d \Lambda_0(v | l)}$ is nonincreasing on $[0,\tau]$ and takes values in $[0,1]$. Here $C_2 < \infty$ is a constant that depends only on the uniform bounds of $\theta(t,l)$, $\Lambda_0(t|l)$, and $\mu_0(t,l)$ over $t \in [0,\tau]$ and $l \in \cL$.

Third,
\begin{align*}
& \abs{\frac{\theta(u,l) - 1}{e^{-\Lambda_0(u | l)}} \int_{u+}^\tau \mu_0(t,l) d e^{-\int_0^t \theta(v, l) d \Lambda_0(v | l)}} \\
& \le \cur{\suptl |\theta(t, l)| + 1 } \cdot e^{\suptl |\Lambda_0(t|l)|} \cdot \suptl |\mu_0(t, l) | \cdot \TV\cur{e^{-\int_0^\cdot \theta(v, l) d \Lambda_0(v | l)} }
\le C_3,
\end{align*}
where $C_3 < \infty$ is a constant that depends only on the uniform bounds of $\theta(t,l)$, $\Lambda_0(t|l)$, and $\mu_0(t,l)$ over $t \in [0,\tau]$ and $l \in \cL$.

Therefore, by the representation in \eqref{eq:pf_eif}, $|\phi(\theta; \Lambda_0, \mu_0)| \le C_1 |Y| + C_2 + C_3$. Hence, $E\{\phi(\theta; \Lambda_0, \mu_0)^2\} \le E\{ 2 C_1^2 Y^2 + 2(C_2 + C_3)^2 \} < \infty$, since $E(Y^2) < \infty$ under \assumpref{bound}.
\end{proof}

\subsection{Equivalence of the Efficient Influence Function Representations} \label{sec:eifeq}

We show that  \eqref{eq:org_eif} is equivalent to \eqref{eq:pf_eif} below: 
\begin{align}
\phi(\theta; \Lambda_0, \mu_0) =& Y \theta(U,L)^{\Delta} e^{-\int_0^U\{\theta(v,L) - 1\}d\Lambda_0(v|L)} \notag \\
& - \int_0^\tau \mu_0(u, L) \left[\int_0^{U \wedge u} \left\{\theta(v, L)-1\right\} d e^{\Lambda_0(v | L)} \right] d e^{-\int_0^u \theta(v, L) d \Lambda_0(v | L)} \notag \\
& + \frac{\theta(U, L) - 1}{e^{-\Lambda_0(U | L)}} \int_{U+}^\tau \mu_0(u, L) d e^{-\int_0^u \theta(v, L) d \Lambda_0(v | L)}. \label{eq:pf_eif}
\end{align}

\begin{proof}
If $u < \tau$, 
\begin{align*}
d e^{-\int_0^u \theta(v, l) d \Lambda_0(v | l)}
&= -\theta(u, l) e^{-\int_0^u \theta(v, l) d \Lambda_0(v | l)} d \Lambda_0(u | l) \\
&= -\theta(u, l) e^{-\int_0^u \{\theta(v, l)-1\} d \Lambda_0(v | l)} e^{-\Lambda_0(u | l)} d \Lambda_0(u | l) \\
&= -\theta(u, l) e^{-\int_0^u \{\theta(v, l)-1\} d \Lambda_0(v | l)} d \cur{1-e^{-\Lambda_0(u | l)}} \\
&= -\theta(u, l) e^{-\int_0^u \{\theta(v, l)-1\} d \Lambda_0(v | l)} d F_0(u|l).
\end{align*}
If $u = \tau$, 
\begin{align*}
    d e^{-\int_0^u \theta(v, l) d \Lambda_0(v | l)} \coloneqq 0 - e^{-\int_0^u \theta(v, l) d \Lambda_0(v | l)} 
    &= - e^{-\int_0^u \{\theta(v, l)-1\} d \Lambda_0(v | l)} e^{-\int_0^u d \Lambda_0(v | l)} \\
    &= - e^{-\int_0^u \{\theta(v, l)-1\} d \Lambda_0(v | l)} \cdot \cur{1-F_0(u-|l)} \\
    &= - e^{-\int_0^u \{\theta(v, l)-1\} d \Lambda_0(v | l)} d F_0(u|l).
\end{align*}
Combining these two cases gives
\begin{equation*}
    d e^{-\int_0^u \theta(v, l) d \Lambda_0(v | l)} = -\theta(u, l)^\delta e^{-\int_0^u \{\theta(v, l)-1\} d \Lambda_0(v | l)} d F_0(u|l),
\end{equation*}
where $\delta = 1(u <\tau)$. Using this fact, the equivalence follows.
\end{proof}

\subsection{Proof of \corref{eifiden}} \label{sec:pfeifidencor}

\begin{proof}
When $\theta(t,l) \equiv \theta$, we have
$$
e^{-\int_0^u\{\theta(v, L)-1\} d \Lambda_0(v | L)} 
= e^{-(\theta - 1) \Lambda_0(u | L)},
$$
and
$$
\int_0^{u \wedge U'} \frac{\{\theta(v, L)-1\}}{S(v|L)} d\Lambda_0(v | L) = \left. \frac{\theta - 1}{S(v | L)}\right|_0^{u \wedge U'} = (\theta - 1)\left\{\frac{1(u < U')}{S(u | L)} + \frac{1(u \geq U')}{S(U' | L)} - 1 \right\},
$$
which simplifies the efficient influence function in \thmref{eifiden} to
\begin{align*}
& Y \theta^\Delta e^{-(\theta - 1) \Lambda_0(U | L)} \\
& + (\theta - 1) \left. E \left\{ \left. E(Y | U', L) \theta^{\Delta'} e^{-(\theta - 2) \Lambda_0(U' | L)} 1(u \geq U') \right| L \right\} \right|_{u=U} \\
& - (\theta - 1) E \left\{ \left. E(Y | U, L) \theta^\Delta e^{-(\theta - 1) \Lambda_0(U | L)} \right| L \right\}
- \psi(\theta).
\end{align*}
This function can also be expressed as $\phi(\theta; \Lambda_0, \mu_0) -\psi(\theta)$, where 
\begin{align*}
\phi(\theta; \Lambda_0, \mu_0) = & Y \theta^\Delta e^{-(\theta - 1) \Lambda_0(U | L)} \\
& + (\theta - 1) \int_0^U \mu_0(u, L) \theta^\delta e^{-(\theta - 2)\Lambda_0(u|L)} d F_0(u|L) \\
& - (\theta - 1) \int_0^\tau \mu_0(u, L) \theta^\delta e^{-(\theta - 1)\Lambda_0(u|L)} d F_0(u|L).
\end{align*}
\end{proof}

\subsection{Proof of \thmref{eifcons}} \label{sec:pfeifcons}

We derive several bounds for real-valued functions of $t \in [0,\tau]$ and $l \in \cL$ that will be used in the proof of \thmref{eifcons}.
We will use the fact that the Riemann–Stieltjes integral $\int_a^b f(x) dg(x)$ exists 
{if both $f$ and $g$ are of bounded variation and share no common discontinuities \citep{young1936inequality}. }
This integral admits integration by parts in the form $\int_a^b f(x) dg(x) = f(b)g(b) - f(a)g(a) - \int_a^b g(x) df(x)$
\citep{apostol1974mathematical}. 

First, under \assumpref{bound}, we have the bound
\begin{align}
    & \abs{e^{\hat \Lambda(t|l)} - e^{\Lambda_0(t|l)}} 
    \leq \supt \squ{e^{\max \cur{\hat \Lambda(t|l),\ \Lambda_0(t|l)}} \abs{\hat\Lambda(t|l) - \Lambda_0(t|l)} } \notag \\
    & \lesssim e^{\max \cur{\hat \Lambda(\tau|l),\ \Lambda_0(\tau|l)}} \cdot \suplaml
    \lesssim e^{\hat \Lambda(\tau|l)} \cdot \suplaml, \label{eq:bound_elam}
\end{align}
where the first inequality holds since for any $a\leq b$, by the mean value theorem, $|e^a-e^b|=e^c |a-b| \leq e^{\max\cur{|a|,|b|}} |a-b|$ for some $c\in[a, b]$, the second follows from the monotonicity of the exponential and (estimated) cumulative hazard functions, and the last one follows from $e^{\max\cur{|a|,|b|}} \leq e^{|a|+|b|}$ and \assumpref{bound:nuisance}.

For $(t,l) \in [0,\tau]\times\cL$, define
\begin{equation}
    G(t,l;\Lambda) = \int_0^t \{\theta(v,l)-1\} de^{\Lambda(v|l)}, 
    \quad
    S(t,l;\Lambda) = e^{-\int_0^t \theta(v,l) d\Lambda(v|l)}. \label{eq:G_S_def}
\end{equation}
Under \assumpref{bound}, using integration by parts, by \eqref{eq:bound_elam},
\begin{align}
    & \abs{G(t,l;\hat\Lambda) - G(t,l;\Lambda_0)} = \abs{\int_0^{t}\{\theta(v,l) - 1\}d \cur{e^{\hat \Lambda(v|l)} - e^{\Lambda_0(v|l)} }} \notag \\
    &= \abs{\left.\{\theta(v,l) - 1\} \cur{e^{\hat \Lambda(v|l)} - e^{\Lambda_0(v|l)} } \right|_0^{t} - \int_0^{t} \cur{e^{\hat \Lambda(v|l)} - e^{\Lambda_0(v|l)} } d \{\theta(v,l) - 1\} } \notag \\
    &\leq \abs{\theta(t,l)-1} \cdot \abs{e^{\hat\Lambda(t|l)} - e^{\Lambda_0(t|l)}} + \sup_{v\in [0,t]} \abs{e^{\hat \Lambda(v|l)} - e^{\Lambda_0(v|l)}} \cdot \TV\cur{\theta(\cdot,l) - 1}, \notag \\
    &\lesssim \supt \abs{e^{\hat \Lambda(t|l)} - e^{\Lambda_0(t|l)}}
    \lesssim e^{\hat \Lambda(\tau|l)} \cdot \suplaml. \label{eq:bound_int_delam}
\end{align}
Likewise, under \assumpref{bound}, using integration by parts,
\begin{equation}
    \abs{\int_0^{t}\{\theta(v,l) - 1\}d \cur{ \hat \Lambda(v|l) - \Lambda_0(v|l) }} \lesssim \suplaml \label{eq:bound_int_dlam}.
\end{equation}
Then, under \assumpref{bound}, by the mean value theorem and \eqref{eq:bound_int_dlam},
\begin{align}
    & \abs{S(t,l;\hat\Lambda) - S(t,l;\Lambda_0) } 
    = \abs{e^{-\int_0^{t} \theta(v,l) d \hat \Lambda(v|l)} - e^{-\int_0^{t} \theta(v,l) d \Lambda_0(v|l)}} \notag \\
    & \leq e^{\max \squ{ \abs{\int_0^{t} \theta(v,l) d \hat \Lambda(v|l)},\ \abs{\int_0^{t} \theta(v,l) d \Lambda_0(v|l)} } } \cdot \abs{\int_0^{t} \theta(v,l) d \cur{ \hat \Lambda(v|l) - \Lambda_0(v|l) }} \notag \\
    & \lesssim e^{\sup_{t,l} |\theta(t,l)| \cdot \hat \Lambda(\tau|l)} \cdot \suplaml. \label{eq:bound_exp_theta}
\end{align}
Likewise, under \assumpref{bound}, we can show that
\begin{equation}
    \abs{e^{-\int_0^{t}\{\theta(v,l) - 1\}d \hat \Lambda(v|l)} - e^{-\int_0^{t}\{\theta(v,l) - 1\}d \Lambda_0(v|l)}}
    \lesssim e^{\sup_{t,l} |\theta(t,l)-1| \cdot \hat \Lambda(\tau|l)} \cdot \suplaml. \label{eq:bound_exp_theta_1}
\end{equation}

Additionally, as mentioned in the proof of \thmref{eifiden}, $\TV\cur{e^{-\int_0^\cdot \theta(v, l) d \Lambda_0(v | l)} } \leq 1$. Likewise, $\TV\cur{e^{-\int_0^\cdot \theta(v, l) d \hat \Lambda(v | l)} } \leq 1$. Consequently, $\TV\cur{e^{-\int_0^\cdot \theta(v, l) d \hat \Lambda(v | l)} - e^{-\int_0^\cdot \theta(v, l) d \Lambda_0(v | l)}} \leq \TV\cur{e^{-\int_0^\cdot \theta(v, l) d \hat \Lambda(v | l)} } + \TV\cur{e^{-\int_0^\cdot \theta(v, l) d \Lambda_0(v | l)} } \leq 2$.

\begin{proof}
We have
\begin{align}
& \left|\hat \psi(\theta) - \psi(\theta) \right| = \abs{ P_n \{\phi(\theta; \hat \Lambda, \hat\mu)\} - P \{\phi(\theta; \Lambda_0, \mu_0)\}} \notag \\
&= \Big| P_n \left[ \left\{ \phi(\theta; \hat \Lambda, \hat \mu) - \phi(\theta; \Lambda_0, \hat \mu) \right\} - \left\{ \phi(\theta; \hat \Lambda, \mu_0) - \phi(\theta; \Lambda_0, \mu_0) \right\} \right] \notag \\
&\phantom{= |} + P_n \cur{\phi(\theta; \hat \Lambda, \mu_0) - \phi(\theta; \Lambda_0, \mu_0) }
+ P_n \cur{\phi\bra{\theta; \Lambda_0, \hat\mu} - \phi\bra{\theta; \Lambda_0, \mu_0} }
+ P_n \{\phi(\theta; \Lambda_0, \mu_0)\} - P \{\phi(\theta; \Lambda_0, \mu_0)\} \Big| \notag \\
&\leq \abs{P_n \squ{ \cur{ \phi(\theta; \hat \Lambda, \hat \mu) - \phi(\theta; \Lambda_0, \hat \mu) } - \cur{ \phi(\theta; \hat \Lambda, \mu_0) - \phi(\theta; \Lambda_0, \mu_0) } } } \notag \\
&\phantom{\leq |} + \abs{P_n \cur{\phi(\theta; \hat \Lambda, \mu_0) - \phi(\theta; \Lambda_0, \mu_0) }} 
+ \abs{P_n \cur{\phi\bra{\theta; \Lambda_0, \hat\mu} - \phi\bra{\theta; \Lambda_0, \mu_0} }} 
+ o_p(1),  \label{eq:pfcons_diff}
\end{align}
where the $o_p(1)$ term follows from the Weak Law of Large Numbers. To show $\abs{\hat \psi(\theta) - \psi(\theta)} = o_p(1)$, it suffices to show that the first three terms in \eqref{eq:pfcons_diff} are both $o_p(1)$.

Consider the first term in \eqref{eq:pfcons_diff}. We have
\begin{equation*}
\abs{P_n \squ{ \cur{ \phi(\theta; \hat \Lambda, \hat \mu) - \phi(\theta; \Lambda_0, \hat \mu) } - \cur{ \phi(\theta; \hat \Lambda, \mu_0) - \phi(\theta; \Lambda_0, \mu_0) } } } \leq A_{11} + A_{12} + A_{13} + A_{14} + A_{15} + A_{16},
\end{equation*}
where
\begin{align*}
A_{11} =& \frac{1}{n} \sum_{i=1}^n \bigg| \int_0^\tau \{\hat \mu (u, L_i) - \mu_0 (u, L_i)\} \\
&\phantom{\frac{1}{n} \sum_{i=1}^n \bigg|} \cdot \int_0^{U_i \wedge u} \left\{\theta(v, L_i)-1\right\} d e^{\Lambda_0(v | L_i)} d\left\{e^{-\int_0^u \theta(v, L_i) d \hat\Lambda(v | L_i)} - e^{-\int_0^u \theta(v, L_i) d \Lambda_0(v | L_i)}\right\} \bigg| , \\
A_{12} =& \frac{1}{n} \sum_{i=1}^n \bigg| \int_0^\tau \{\hat \mu (u, L_i) - \mu_0 (u, L_i)\}
\int_0^{U_i \wedge u} \left\{\theta(v, L_i)-1\right\} d \left\{e^{\hat \Lambda(v | L_i)} - e^{\Lambda_0(v | L_i)} \right\} d e^{-\int_0^u \theta(v, L_i) d \Lambda_0(v | L_i)} \bigg| , \\
A_{13} =& \frac{1}{n} \sum_{i=1}^n \bigg| \int_0^\tau \{\hat \mu (u, L_i) - \mu_0 (u, L_i)\}  \\  
&\phantom{\frac{1}{n} \sum_{i=1}^n \bigg|} \cdot \int_0^{U_i \wedge u} \left\{\theta(v, L_i)-1\right\} d \left\{e^{\hat \Lambda(v | L_i)} - e^{\Lambda_0(v | L_i)} \right\} d\left\{e^{-\int_0^u \theta(v, L_i) d \hat\Lambda(v | L_i)} - e^{-\int_0^u \theta(v, L_i) d \Lambda_0(v | L_i)}\right\} \bigg| , \\
A_{14} =& \frac{1}{n} \sum_{i=1}^n \bigg| \{\theta(U_i, L_i) - 1\} \int_{U_i+}^\tau \{\hat \mu (u, L_i) - \mu_0 (u, L_i)\}
e^{\Lambda_0(U_i | L_i)} d\left\{e^{-\int_0^u \theta(v, L_i) d \hat\Lambda(v | L_i)} - e^{-\int_0^u \theta(v, L_i) d \Lambda_0(v | L_i)}\right\} \bigg|, \\
A_{15} =& \frac{1}{n} \sum_{i=1}^n \abs{\{\theta(U_i, L_i) - 1\} \int_{U_i+}^\tau \{\hat \mu (u, L_i) - \mu_0 (u, L_i)\}  \left\{e^{\hat \Lambda(U_i | L_i)} - e^{\Lambda_0(U_i | L_i)} \right\} d e^{-\int_0^u \theta(v, L_i) d \Lambda_0(v | L_i)} } , \\
A_{16} =& \frac{1}{n} \sum_{i=1}^n \bigg| \{\theta(U_i, L_i) - 1\} \int_{U_i+}^\tau \{\hat \mu (u, L_i) - \mu_0 (u, L_i)\}  \\
&\phantom{\frac{1}{n} \sum_{i=1}^n \bigg|} \cdot \left\{e^{\hat \Lambda(U_i | L_i)} - e^{\Lambda_0(U_i | L_i)} \right\} d\left\{e^{-\int_0^u \theta(v, L_i) d \hat\Lambda(v | L_i)} - e^{-\int_0^u \theta(v, L_i) d \Lambda_0(v | L_i)}\right\} \bigg|.
\end{align*}
For $A_{11}$, under Assumptions~\ref{assump:bound} and \ref{assump:nuiest},
\begin{align*}
    E(A_{11}) =& E \bigg\{ \bigg| \int_0^\tau \{\hat \mu (u, L_i) - \mu_0 (u, L_i)\}  \\
    &\phantom{E \bigg\{ \bigg|} \cdot \int_0^{U_i \wedge u} \left\{\theta(v, L_i)-1\right\} d e^{\Lambda_0(v | L_i)} d\cur{e^{-\int_0^u \theta(v, L_i) d \hat\Lambda(v | L_i)} - e^{-\int_0^u \theta(v, L_i) d \Lambda_0(v | L_i)}} \bigg| \bigg\} \\
    \lesssim & E \cur{\supmui \cdot \TV\cur{e^{-\int_0^u \theta(v, L_i) d \hat\Lambda(v | L_i)} - e^{-\int_0^u \theta(v, L_i) d \Lambda_0(v | L_i)}}} \\
    \lesssim & E \cur{\supmui} \lesssim E \cur{\supmui^2}^{1/2} \lesssim \nsupmu = o(1).
\end{align*}
For $A_{12}$, under Assumptions~\ref{assump:bound} and \ref{assump:nuiest}, by \eqref{eq:bound_int_delam},
\begin{align*}
    E(A_{12}) =& E \bigg\{ \bigg| \int_0^\tau \{\hat \mu (u, L_i) - \mu_0 (u, L_i)\}
    \int_0^{U_i \wedge u} \left\{\theta(v, L_i)-1\right\} d \left\{e^{\hat \Lambda(v | L_i)} - e^{\Lambda_0(v | L_i)} \right\} d e^{-\int_0^u \theta(v, L_i) d \Lambda_0(v | L_i)} \bigg| \bigg\} \\
    \lesssim & E \Bigg\{ \supmui \cdot e^{\hat \Lambda(\tau|L_i)} \cdot \suplami
    \cdot \TV\cur{e^{-\int_0^u \theta(v, L_i) d \Lambda_0(v | L_i)}} \Bigg\} \\
    \lesssim & E \cur{\supmui \cdot \suplami} \\
    \lesssim & E \cur{\supmui^2}^{1/2} E \cur{\suplami^2}^{1/2} \\
    \lesssim & \nsupmu \cdot \nsuplam = o(1).
\end{align*}
Similarly, under Assumptions~\ref{assump:bound} and \ref{assump:nuiest}, by \eqref{eq:bound_int_delam} and \eqref{eq:bound_elam}, we can show that
\begin{align*}
    E(A_{13}) &\lesssim \nsupmu \cdot \nsuplam = o(1), \quad
    E(A_{14}) \lesssim \nsupmu = o(1), \\
    E(A_{15}) &\lesssim \nsupmu \cdot \nsuplam = o(1), \quad
    E(A_{16}) \lesssim \nsupmu \cdot \nsuplam = o(1).
\end{align*}
Therefore,
\begin{align*}
    &E\squ{ \abs{P_n \squ{ \cur{ \phi(\theta; \hat \Lambda, \hat \mu) - \phi(\theta; \Lambda_0, \hat \mu) } - \cur{ \phi(\theta; \hat \Lambda, \mu_0) - \phi(\theta; \Lambda_0, \mu_0) } } } } \\
    &\leq E\bra{A_{11} + A_{12} + A_{13} + A_{14} + A_{15} + A_{16}} = o(1).
\end{align*}
By Markov's inequality, 
\begin{equation*}
    \abs{P_n \squ{ \cur{ \phi(\theta; \hat \Lambda, \hat \mu) - \phi(\theta; \Lambda_0, \hat \mu) } - \cur{ \phi(\theta; \hat \Lambda, \mu_0) - \phi(\theta; \Lambda_0, \mu_0) } } } = o_p(1).
\end{equation*}

Consider the second term in \eqref{eq:pfcons_diff}. We have
\begin{equation*}
\abs{P_n \{\phi(\theta; \hat \Lambda, \mu_0)\} - P_n \{\phi(\theta; \Lambda_0, \mu_0)\}} \leq A_{21} + A_{22} + A_{23} + A_{24} + A_{25},
\end{equation*}
where
\begin{align*}
A_{21} = & \frac{1}{n} \sum_{i=1}^n \abs{Y_i \theta(U_i,L_i)^{\Delta_i}} \abs{e^{-\int_0^{U_i}\{\theta(v,L_i) - 1\}d \hat \Lambda(v|L_i)} - e^{-\int_0^{U_i}\{\theta(v,L_i) - 1\}d \Lambda_0(v|L_i)}}, \\
A_{22} = & \frac{1}{n} \sum_{i=1}^n \int_0^\tau \abs{\mu_0 (u, L_i) } \abs{\int_0^{U_i \wedge u} \{\theta(v, L_i)-1\}d \cur{e^{\hat \Lambda(v|L_i)} - e^{\Lambda_0(v|L_i)} }} d e^{-\int_0^u \theta(v, L) d \hat \Lambda(v | L)}, \\
A_{23} = & \frac{1}{n} \sum_{i=1}^n \abs{\int_0^\tau \mu_0 (u, L_i)  \int_0^{U_i \wedge u} \{\theta(v, L_i)-1\}d e^{\Lambda_0(v|L_i)} d\left\{e^{-\int_0^u \theta(v, L_i) d \hat\Lambda(v | L_i)} - e^{-\int_0^u \theta(v, L_i) d \Lambda_0(v | L_i)}\right\}}, \\
A_{24} = & \frac{1}{n} \sum_{i=1}^n \abs{\theta(U_i, L_i) - 1} \int_{U_i+}^\tau \abs{\mu_0 (u, L_i) } \abs{e^{\hat \Lambda(U_i|L_i)} - e^{\Lambda_0(U_i|L_i)}} d e^{-\int_0^u \theta(v, L) d \hat \Lambda(v | L)}, \\
A_{25} = & \frac{1}{n} \sum_{i=1}^n \abs{\theta(U_i, L_i) - 1} \abs{\int_{U_i+}^\tau \mu_0 (u, L_i)  e^{\Lambda_0(U_i|L_i)} d\left\{e^{-\int_0^u \theta(v, L_i) d \hat\Lambda(v | L_i)} - e^{-\int_0^u \theta(v, L_i) d \Lambda_0(v | L_i)}\right\}}.
\end{align*}
For $A_{21}$, under Assumptions~\ref{assump:bound} and \ref{assump:nuiest}, by \eqref{eq:bound_exp_theta_1}, we have
\begin{align*}
    E(A_{21}) 
    & = E\squ{\abs{Y_i \theta(U_i,L_i)^{\Delta_i}} \abs{e^{-\int_0^{U_i}\{\theta(v,L_i) - 1\}d \hat \Lambda(v|L_i)} - e^{-\int_0^{U_i}\{\theta(v,L_i) - 1\}d \Lambda_0(v|L_i)}}} \\
    & \lesssim E \cur{Y_i^2 \theta(U_i,L_i)^{2\Delta_i}}^{1/2} E\cur{e^{\sup_{t,l} |\theta(t,l)-1| \cdot \hat \Lambda(\tau|L_i)} \cdot 
    \suplami^2}^{1/2} \\
    & \lesssim \nsuplam = o(1).
\end{align*}
For $A_{22}$, under Assumptions~\ref{assump:bound} and \ref{assump:nuiest}, by \eqref{eq:bound_int_delam}, we have
\begin{align*}
    E(A_{22}) &= E \squ{\int_0^\tau \abs{\mu_0 (u, L_i) } \abs{\int_0^{U_i \wedge u} \{\theta(v, L_i)-1\}d \left\{e^{\hat \Lambda(v|L_i)} - e^{\Lambda_0(v|L_i)} \right\}} d e^{-\int_0^u \theta(v, L) d \hat \Lambda(v | L)}} \\
    &\lesssim E \squ{e^{\hat \Lambda(\tau|L_i)} \cdot \suplami \cdot \TV\cur{ e^{-\int_0^\cdot \theta(v, l) d \hat \Lambda(v | l)} }} \\
    & \lesssim E \cur{\suplami^2}^{1/2} \lesssim \nsuplam = o(1).
\end{align*}
For $A_{23}$, under Assumptions~\ref{assump:bound} and \ref{assump:nuiest}, by \eqref{eq:bound_exp_theta} and integration by parts, we have
\begin{align*}
    E(A_{23}) &\lesssim E \squ{ e^{\sup_{t,l} |\theta(t,l)| \cdot \hat \Lambda(\tau|L_i) } \cdot \suplami \cdot \squ{ \supt \abs{\mu_0(t,L_i)} + \TV\cur{\mu_0(\cdot,L_i)}}} \\
    & \lesssim E \cur{\suplami^2}^{1/2} \lesssim \nsuplam = o(1).
\end{align*}
Similarly, under Assumptions~\ref{assump:bound} and \ref{assump:nuiest}, by \eqref{eq:bound_elam} and \eqref{eq:bound_exp_theta}, we can show that
$$
E(A_{24}) + E(A_{25}) \lesssim \nsuplam = o(1).
$$
Therefore, 
\begin{equation*}
    E \squ{\abs{P_n \{\phi(\theta; \hat \Lambda, \mu_0)\} - P_n \{\phi(\theta; \Lambda_0, \mu_0)\}}}
    \leq E(_{21} + A_{22} + A_{23} + A_{24} + A_{25}) = o(1).
\end{equation*}
By Markov's inequality,
\begin{equation*}
    \abs{P_n \{\phi(\theta; \hat \Lambda, \mu_0)\} - P_n \{\phi(\theta; \Lambda_0, \mu_0)\}} = o_p(1).
\end{equation*}

Consider the third term in \eqref{eq:pfcons_diff}. Under Assumptions~\ref{assump:bound} and \ref{assump:nuiest}, similarly we have
\begin{align*}
& E\squ{\abs{P_n \cur{\phi(\theta; \Lambda_0, \hat\mu)\} - P_n \{\phi(\theta; \Lambda_0, \mu_0)}}} \\
&\leq E\squ{\abs{\int_0^\tau \{\hat \mu (u, L_i) - \mu_0 (u, L_i) \}  \left[ \int_0^{U_i \wedge u} \{\theta(v, L_i)-1\} d e^{\Lambda_0(v|L_i)} \right] d e^{-\int_0^u \theta(v, L_i) d \Lambda_0(v|L_i)}}} \\
&\phantom{\leq} + E \squ{\abs{ \frac{\theta(U_i, L_i) - 1}{e^{-\Lambda_0(U_i | L_i)}} \int_{U_i+}^\tau \{\hat \mu (u, L_i) - \mu_0 (u, L_i) \}  d e^{-\int_0^u \theta(v, L_i) d \Lambda_0(v|L_i)}} } \\
&\lesssim E \cur{\supmui^2}^{1/2} \lesssim \nsupmu = o(1).
\end{align*}
Therefore, by Markov's inequality,
\begin{equation*}
    \abs{P_n \cur{\phi(\theta; \Lambda_0, \hat\mu)\} - P_n \{\phi(\theta; \Lambda_0, \mu_0)}} = o_p(1).
\end{equation*}
This completes the proof.
\end{proof}

\subsection{Proof of \thmref{cf}} \label{sec:cf_proof}

The following is an introduction and a lemma related to the Gateaux derivative. Following \cite{bickel1993efficient}, let $f$ be a function from a linear space $X$ to another linear space $Y$. The Gateaux derivative of $f$ at $x \in X$ in the direction $h \in X$ is defined as $$Df(x)[h] \coloneqq \lim_{t \to 0} \{f(x + t h) - f(x)\} / t,$$ whenever the limit exists for all sufficiently small $t$ such that $x + t h \in X$. 

We consider the space $X \coloneqq \{ g(t,l): [0,\tau]\times \cL \to \bbR \}$. 
We can show that the Gateaux derivatives of $e^{\Lambda(t|l)}$, $G(t,l;\Lambda)$ and $S(t,l;\Lambda)$ defined in \eqref{eq:G_S_def}, at $\Lambda \in X$ in the direction $h \in X$ are 
\begin{align}
D \{e^{\Lambda(t|l)}\}[h] &= e^{\Lambda(t|l)} h(t,l), \label{eq:Dexp}\\
D G(t,l;\Lambda)[h] &= \int_0^t \{\theta(v,l)-1\} d \cur{e^{\Lambda(v|l)} h(v,l)} , \label{eq:DG} \\
D S(t,l;\Lambda)[h] &= - e^{-\int_0^t \theta(v,l) d\Lambda(v|l)} \int_0^t \theta(v,l) d h(v,l)  
= -S(t,l;\Lambda) \int_0^t \theta(v,l) d h(v,l). \label{eq:DS} 
\end{align}

\begin{lem} \label{lem:sup_tv_norm_bound}
Consider the cumulative hazard function $\Lambda_0(t|l)$ and its estimate  $\hat\Lambda(t|l)$. Under Assumptions~\ref{assump:bound} and \ref{assump:nuiest}, 
the following  hold for any $l\in\cL$:
\begin{align*}
\supt \abs{ G(t,l; \hat\Lambda) - G(t,l; \Lambda_0) }
& \lesssim e^{\hat \Lambda(\tau|l)} \cdot \suplaml, \\
\supt \abs{ S(t,l; \hat\Lambda) - S(t,l; \Lambda_0)}
& \lesssim e^{\sup_{t,l} |\theta(t,l)| \cdot \hat \Lambda(\tau|l) } \cdot \suplaml, \\
\supt \abs{DG(t,l; \hat\Lambda)[h] - DG(t,l; \Lambda_0)[h]}
& \lesssim e^{\hat \Lambda(\tau|l)}  \cdot \suplaml \cdot \supt |h(t,l)|, \\
\supt \abs{DS(\cdot,l; \hat\Lambda)[h] - DS(\cdot,l; \Lambda_0)[h]} 
& \lesssim e^{\sup_{t,l} |\theta(t,l)| \cdot \hat \Lambda(\tau|l)} \cdot \suplaml \cdot \supt |h(t,l)|,
\end{align*}
where the implicit constants are determined by the bounds in Assumption~\ref{assump:bound}.
\end{lem}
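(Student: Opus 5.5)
The first two bounds are already essentially in hand: taking the supremum over $t\in[0,\tau]$ in \eqref{eq:bound_int_delam} gives the bound on $G$, and \eqref{eq:bound_exp_theta} gives the bound on $S$. Neither right-hand side depends on $t$, so the supremum is immediate. The plan is therefore to concentrate on the two Gateaux-derivative bounds, using the explicit formulas \eqref{eq:DG} and \eqref{eq:DS} together with integration by parts (Riemann--Stieltjes, valid because $\theta(\cdot,l)$ has bounded variation by \assumpref{bound:tv}).

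For $DG$, we write
\begin{align*}
DG(t,l;\hat\Lambda)[h]-DG(t,l;\Lambda_0)[h]=\int_0^t\{\theta(v,l)-1\}\,d\bigl[\{e^{\hat\Lambda(v|l)}-e^{\Lambda_0(v|l)}\}h(v,l)\bigr],
\end{align*}
and integrate by parts exactly as in \eqref{eq:bound_int_delam}. This gives a boundary term $|\theta(t,l)-1|\,|e^{\hat\Lambda(t|l)}-e^{\Lambda_0(t|l)}|\,|h(t,l)|$ plus
\begin{align*}
\sup_v\bigl|\{e^{\hat\Lambda(v|l)}-e^{\Lambda_0(v|l)}\}h(v,l)\bigr|\cdot\TV\{\theta(\cdot,l)\}.
\end{align*}
Both pieces are bounded by $\supt|e^{\hat\Lambda}-e^{\Lambda_0}|\cdot\supt|h|$, and \eqref{eq:bound_elam} then yields the stated bound. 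The point is that after integrating by parts, only the sup of $h$ appears and never its variation.

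For $DS$, we write $DS(t,l;\Lambda)[h]=-S(t,l;\Lambda)\,I(t,l)$ with $I(t,l)=\int_0^t\theta(v,l)\,dh(v,l)$. The difference is then $-\{S(t,l;\hat\Lambda)-S(t,l;\Lambda_0)\}I(t,l)$, and the second bound handles the $S$ factor. It remains to control $I$ by $\supt|h|$. Integrating by parts,
\begin{align*}
I(t,l)=\theta(t,l)h(t,l)-\theta(0,l)h(0,l)-\int_0^t h\,d\theta,
\end{align*}
so $|I|\le\{2C+\TV(\theta(\cdot,l))\}\supt|h|$.

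The main obstacle is justifying these Stieltjes integrals for a general direction $h$. The integrals require $h$ (or $e^{\Lambda}h$) and $\theta$ to have no common discontinuities, and the definition of $DS$ implicitly requires $\int\theta\,dh$ to make sense. I would restrict $h$ to bounded-variation directions of the form $\hat\Lambda-\Lambda_0$ or convex combinations thereof, which is what is needed later. Alternatively, one can define $\int\theta\,dh$ via the integration-by-parts formula, so that only $\supt|h|$ enters. With that convention, all implicit constants depend only on $C$, $\sup\TV\{\theta(\cdot,l)\}$ and the bound on $\Lambda_0$, as claimed.
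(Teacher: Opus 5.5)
Your proposal is correct and matches the paper's proof step for step. The first two bounds are read off from \eqref{eq:bound_int_delam} and \eqref{eq:bound_exp_theta}; the $DG$ bound follows by integrating by parts, so that only $\sup|h|$ and $\TV\{\theta(\cdot,l)\}$ appear, and then applying \eqref{eq:bound_elam}; and the $DS$ bound comes from factoring the difference as $\{S(\hat\Lambda)-S(\Lambda_0)\}\int_0^t\theta\,dh$ and controlling the integral by integration by parts. Your extra remark, restricting $h$ to bounded-variation directions such as $\hat\Lambda-\Lambda_0$ so the Stieltjes integrals are well defined, sensibly tightens a point the paper leaves implicit.
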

The proof of \lemref{sup_tv_norm_bound} is given in \secref{pf_sup_tv_norm_bound}.
\lemref{cond_exp} below uses the Gateaux derivatives in its proof. 
Recall that the efficient influence function $\phi(\theta;\Lambda_0, \mu_0)$   defined in \eqref{eq:pf_eif} is a function of $O=(Y,U,L)$. Let $\phi(\theta; \hat \Lambda, \hat \mu)$ denote its plug-in version, where the nuisance estimators are obtained from a sample $O'$ that is independent of $O$. 
\begin{lem} \label{lem:cond_exp}
Under Assumptions~\ref{assump:bound} and \ref{assump:nuiest}, 
\begin{eqnarray*}
    \abs{E\{\phi(\theta; \hat \Lambda, \hat \mu) - \phi(\theta; \Lambda_0, \mu_0) | O'\}} 
    &\lesssim& 
    \dgsupnmu \cdot \dgsupnlam + \dgsupnlam^2 + \|\hat\Lambda-\Lambda_0\|_{\dagger,\sup,4}^2.
\end{eqnarray*}
where the implicit constant in the upper bound depends on $\theta$ only through $\suptl \theta(t,l)$. In addition, if $Y$ is bounded, the bound reduces to $\dgsupnmu \cdot \dgsupnlam + \dgsupnlam^2$.
\end{lem}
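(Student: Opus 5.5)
We would prove Lemma~\ref{lem:cond_exp} by the usual second-order remainder decomposition. Conditioning on $O'$ and writing $E$ for $\edg$, we split
\begin{align*}
E\{\phi(\theta;\hat\Lambda,\hat\mu)-\phi(\theta;\Lambda_0,\mu_0)\}
=& E\{\phi(\theta;\hat\Lambda,\hat\mu)-\phi(\theta;\hat\Lambda,\mu_0)-\phi(\theta;\Lambda_0,\hat\mu)+\phi(\theta;\Lambda_0,\mu_0)\}\\
&+E\{\phi(\theta;\Lambda_0,\hat\mu)-\phi(\theta;\Lambda_0,\mu_0)\}
+E\{\phi(\theta;\hat\Lambda,\mu_0)-\phi(\theta;\Lambda_0,\mu_0)\},
\end{align*}
and call the three pieces $R_1$, $R_2$ and $R_3$.

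\emph{$R_2$ vanishes.} We use representation \eqref{eq:pf_eif}. The function $\phi$ is affine in $\mu$, so $R_2$ equals the $\mu$-part evaluated at $h=\hat\mu-\mu_0$. We take the conditional expectation of this part over $U$ given $L$. By Fubini,
\[
E\{G(U\wedge u,L;\Lambda_0)\mid L\}=\int_0^u\{\theta(v,L)-1\}e^{-\Lambda_0(v-|L)}\,de^{\Lambda_0(v|L)}
=\int_0^{u}\{\theta(v,L)-1\}\,d\Lambda_0(v|L),
\]
where $U$ has survivor function $e^{-\Lambda_0}$ on $[0,\tau)$. Doing the same for the last term, $E\{(\theta(U,L)-1)e^{\Lambda_0(U|L)}1(U<u)\mid L\}$ gives the same integral. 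The two contributions therefore cancel for every $h$, and $R_2=0$. This is just the statement that $\mu$ enters only through a mean-zero augmentation.

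\emph{$R_1$ is a product term.} $R_1$ is exactly the $A_{11}$--$A_{16}$ combination from the proof of Theorem~\ref{thm:eifcons}, now taken in expectation under $\edg$. We bound each piece by $\sup_t|\hat\mu-\mu_0|\cdot e^{C\hat\Lambda(\tau|L)}\sup_t|\hat\Lambda-\Lambda_0|$, using Lemma~\ref{lem:sup_tv_norm_bound} for the $G$- and $S$-differences and the bound $\TV\le2$ for the $S$-differences. Assumption~\ref{assump:nuiest:bound} makes $e^{C\hat\Lambda(\tau|L)}$ bounded, and Cauchy--Schwarz then gives $R_1\lesssim\dgsupnmu\cdot\dgsupnlam$.

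\emph{$R_3$ and the main obstacle.} Set $\Lambda_s=\Lambda_0+s(\hat\Lambda-\Lambda_0)$, $h=\hat\Lambda-\Lambda_0$ and $m(s)=E\,\phi(\theta;\Lambda_s,\mu_0)$. Then
\[
R_3=m'(0)+\int_0^1\{m'(s)-m'(0)\}\,ds .
\]
The Gateaux derivative $m'(s)$ is obtained by differentiating the three terms of \eqref{eq:pf_eif} using \eqref{eq:Dexp}--\eqref{eq:DS}. The first term contributes $-Y\theta^\Delta e^{-\int_0^U(\theta-1)d\Lambda_s}\int_0^U(\theta-1)\,dh$.

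The main obstacle is showing $m'(0)=0$, i.e.\ Neyman orthogonality in $\Lambda$. We would first take the expectation over $Y$ given $(U,L)$, which replaces $Y$ by $\mu_0$. We would then integrate over $U$ given $L$, keeping track separately of the atom at $U=\tau$ (where $\delta=0$) and the continuous part on $[0,\tau)$. Finally, we would integrate by parts in $u$ using $dS=-\theta^\delta e^{-\int(\theta-1)d\Lambda_0}dF_0$ from Section~\ref{sec:eifeq}. The aim is to show that the derivatives of the $G$- and $S$-terms cancel exactly against the derivative of the weighting term. Equivalently, this is the identity $E[\int_0^\tau H(v,L)\{dN(v)-1(U\ge v)\,d\Lambda_0(v|L)\}\mid L]=0$, obtained after rewriting the derivative in martingale form.

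Given $m'(0)=0$, we bound $|m'(s)-m'(0)|$ using the difference bounds for $DG$ and $DS$ in Lemma~\ref{lem:sup_tv_norm_bound}. For the exponential weight we use the mean value bound \eqref{eq:bound_exp_theta_1}. Each piece is then at most $e^{C\hat\Lambda(\tau|L)}\sup_t|\hat\Lambda-\Lambda_0|^2$, times $|Y|$ for the first term of \eqref{eq:pf_eif} and times $\sup|\mu_0|+\TV(\mu_0)$ for the others. The non-$Y$ pieces give $\dgsupnlam^2$. For the $Y$ piece, Cauchy--Schwarz gives
\[
E\bigl(|Y|\sup_t|\hat\Lambda-\Lambda_0|^2\bigr)\le E(Y^2)^{1/2}\,\|\hat\Lambda-\Lambda_0\|_{\dagger,\sup,4}^2,
\]
which is finite by Assumption~\ref{assump:bound:y}. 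When $Y$ is bounded, $|Y|$ is pulled out directly and this piece is also $\lesssim\dgsupnlam^2$. Throughout, $\theta$ enters only through $\sup_{t,l}|\theta|$ in the exponents and through $\TV\{\theta(\cdot,l)\}$, which gives the stated dependence of the implicit constant.
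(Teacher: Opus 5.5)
The gap is in your treatment of $R_1$. Two of the six pieces you carry over from the proof of \thmref{eifcons}, $A_{11}$ and $A_{14}$, depend on $\hat\Lambda$ only through the integrator. Writing $h=\hat\mu-\mu_0$, they are
\[
\int_0^\tau h(u,L)\,G(U\wedge u,L;\Lambda_0)\,d\{S(u,L;\hat\Lambda)-S(u,L;\Lambda_0)\}
\quad\text{and}\quad
\{\theta(U,L)-1\}e^{\Lambda_0(U|L)}\int_{U+}^\tau h(u,L)\,d\{S(u,L;\hat\Lambda)-S(u,L;\Lambda_0)\}.
\]
The bound $\TV\le 2$ gives only $\sup_t|h|$ here, with no factor of $\sup_t|\hat\Lambda-\Lambda_0|$. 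That is why the consistency proof gets nothing better than $E(A_{11}),E(A_{14})\lesssim\nsupmu$. The sup-norm bound on $S(\cdot;\hat\Lambda)-S(\cdot;\Lambda_0)$ from \lemref{sup_tv_norm_bound} can be used only after integrating by parts onto $h\,G$ (or $h$). That requires $\TV\{\hat\mu(\cdot,l)-\mu_0(\cdot,l)\}$ to be controlled and small, and nothing in Assumptions~\ref{assump:bound}--\ref{assump:nuiest} provides this.

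This is a real obstruction, not just a loose bound. $S(\cdot;\hat\Lambda)-S(\cdot;\Lambda_0)$ is uniformly small but typically has $O(1)$ variation, e.g.\ for a step-function $\hat\Lambda$. An $h$ of size $\epsilon$ whose sign follows its increments then makes these pieces of order $\epsilon$, however small $\sup_t|\hat\Lambda-\Lambda_0|$ is. Even their sum has no product structure for fixed $(U,L)$; the cancellation appears only after averaging over $U$.

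The fix is the Fubini computation you already did for $R_2$, now with $\hat\Lambda$ in $G$, in $S$, and in the weight $e^{\hat\Lambda(U|L)}$. Since $\Lambda_0$ is continuous and $\p(U\ge v\mid L)=e^{-\Lambda_0(v|L)}$,
\[
E\{G(U\wedge u,L;\hat\Lambda)\mid L\}=\int_0^u\{\theta(v,L)-1\}e^{-\Lambda_0(v|L)}\,de^{\hat\Lambda(v|L)},
\]
\[
E[\{\theta(U,L)-1\}e^{\hat\Lambda(U|L)}1(U<u)\mid L]=\int_0^u\{\theta(v,L)-1\}e^{\hat\Lambda(v|L)-\Lambda_0(v|L)}\,d\Lambda_0(v|L).
\]
The product rule then gives
\[
R_1=-\edg\Bigl[\int_0^\tau h(u,L_\dagger)\,K(u,L_\dagger)\,dS(u,L_\dagger;\hat\Lambda)\Bigr],\qquad
K(u,l)=\int_0^u\{\theta(v,l)-1\}\,d\,e^{\hat\Lambda(v|l)-\Lambda_0(v|l)}.
\]
The $\Lambda_0$-part contributes nothing, because its integrator is constant; this is your $R_2$ identity. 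Integrating $K$ by parts against $\theta-1$ and using \assumpref{bound:tv} gives $\sup_u|K(u,l)|\lesssim e^{\hat\Lambda(\tau|l)}\suplaml$. Together with $\TV\{S(\cdot,l;\hat\Lambda)\}\le1$, this yields $|R_1|\lesssim\edg(\dgsupmu\cdot\dgsuplam)\le\dgsupnmu\cdot\dgsupnlam$, with no regularity of $\hat\mu$ required. The paper handles this term (its $D_3$) by the Gateaux route used for $D_1$ and omits the details; done pointwise, that route runs into the same obstruction.

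The rest of your plan is sound. Your $R_2$ check is correct and more explicit than the paper's appeal to affinity, and your $R_3$ is the paper's $D_1$ argument. One caution applies equally to the paper's \eqref{eq:G_diff}. In $m'(s)-m'(0)$, the piece $\int_0^\tau\mu_0(u,L)\{G(U\wedge u,L;\Lambda_s)-G(U\wedge u,L;\Lambda_0)\}\,d\,DS(u,L;\Lambda_s)[\hat\Lambda-\Lambda_0]$ pairs two uniformly small functions whose variations are only $O(1)$. There \lemref{sup_tv_norm_bound} alone yields one factor of $\suplam$, not two. A further argument, using that both factors come from the same $\hat\Lambda-\Lambda_0$, is needed.
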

The proof of \lemref{cond_exp} is given in \secref{pf_cond_exp}.

\bigskip
Finally for completeness we state several definitions used in the proof below. 
Following \cite{van1996weak}, let $(\cF, \|\cdot\|)$ 
be a class of measurable real-valued functions $f:\cX \to \bbR$, where $\cX$ is the sample space and $\|\cdot\|$ denotes the $L_2(P)$-norm with respect to the probability measure $P$ on $\cX$. An \emph{envelope function} of $\cF$ is any function $x \mapsto F(x)$ such that $|f(x)| \leq F(x)$ for all $x \in \cX$ and $f \in \cF$. The minimal envelope function is $x \mapsto \sup_{f\in\cF} |f(x)|$. 

A \emph{bracket} $[l,u]$ is the set of all functions $f$ satisfying $l(x) \le f(x) \le u(x)$ for all $x \in \cX$. 
An $\epsilon$-$L_2(P)$-bracket is a bracket $[l, u]$ with $\|u - l\| < \epsilon$. 
We write $N_{[]}(\epsilon, \cF, L_2(P))$ for the $L_2(P)$-\emph{bracketing number} of $\cF$, that is, the minimum number of $\epsilon$-$L_2(P)$-brackets needed to cover $\cF$. 
For a class $\cF$ with a square integrable envelope function $F$, the $L_2(P)$-\emph{bracketing integral} of $\cF$ is defined as
$J_{[]}(\cF) = \int_0^1 \sqrt{1 + \log N_{[]}(\epsilon \|F\|, \cF, L_2(P))} d\epsilon$.

\begin{proof}
Define the processes
\begin{align}
    \widehat{\Psi}_n(\theta) &= \sqrt{n}\{\hat{\psi}_{\cf}(\theta) - \psi(\theta)\}/\hat{\sigma}_{\cf}(\theta), \notag \\
    \widetilde{\Psi}_n(\theta) &= \sqrt{n}\{\hat{\psi}_{\cf}(\theta) - \psi(\theta)\}/\sigma(\theta), \notag \\
    \Psi_n(\theta) &= G_n \{\phi(\theta;\Lambda_0, \mu_0) / \sigma(\theta)\} \equiv G_n\{\tilde{\phi}(\theta;\Lambda_0, \mu_0)\}, \label{eq:Psi_theta}
\end{align}
where $G_n = \sqrt{n}(P_n - P)$ is the empirical process on the full sample. Recall that $P_n$ denotes the empirical average over all units, and $P$ denotes the expectation with respect to the observed data distribution.

Let $\|f\|_{\Theta} = \sup_{\theta \in \Theta} |f(\theta)|$ denote the supremum norm of a function $f$ over $\Theta$.  
In this proof, we show that
\begin{equation}
    \Psi_n(\cdot) \rightsquigarrow G(\cdot) \text{ in } \ell^{\infty}(\Theta), \label{eq:cf_const_donsker}
\end{equation}
and
\begin{equation}
    \nsuptheta{ \widehat{\Psi}_n - \Psi_n } = o_p(1), \label{eq:cf_const_sup} 
\end{equation}
which together completes the proof of \thmref{cf} by Slutsky’s theorem.  


To establish \eqref{eq:cf_const_donsker}, it suffices to show that the class $\cF_0 \coloneqq \{\phi(\theta; \Lambda_0, \mu_0)/\sigma(\theta): \theta \in \Theta\}$ is Donsker. 
By \assumpref{cf:sigma} and Corollary 2.10.15 of \cite{van1996weak},
$\cF_0$ is Donsker if the class $\cF_{\Lambda_0, \mu_0} \coloneqq \{\phi(\theta; \Lambda_0, \mu_0): \theta \in \Theta\}$ is Donsker with a square integrable envelope function. 
Under \assumpref{cf:donsker}, \eqref{eq:cf_const_donsker} follows by verifying that $\cF_{\Lambda_0, \mu_0}$ has a square integrable envelope function $F$.

Take the envelope of the class $\cF_{\Lambda_0, \mu_0}$ to be $F = \suptheta |\phi(\theta; \Lambda_0, \mu_0)|$. 
It is straightforward to verify that under the conditions of \thmref{cf}, $F$ is square integrable. This completes the proof of \eqref{eq:cf_const_donsker}. 

Consider \eqref{eq:cf_const_sup}. 
First we show that $\|\hat{\sigma}_{\cf}/\sigma - 1\|_{\Theta} = o_p(1)$ under Assumptions~\ref{assump:bound} - \ref{assump:cf}. Similar to the proof of \thmref{eifcons}, one can show that $\suptheta \abs{\hat\sigma_{\cf}(\theta) - \sigma(\theta)} = o_p(1)$.
Therefore, under \assumpref{cf:sigma}, $\|\hat{\sigma}_{\cf}/\sigma - 1\|_{\Theta} = \suptheta\abs{\cur{\hat\sigma_{\cf}(\theta) - \sigma(\theta)} / \sigma(\theta)} \leq  \suptheta \abs{\hat\sigma_{\cf}(\theta) - \sigma(\theta)} \cdot \suptheta \abs{1/\sigma(\theta)} = o_p(1)$.

Similarly to the proof in \cite{kennedy2019nonparametric}, we have
\begin{align}
    \nsuptheta{ \widehat{\Psi}_n - \Psi_n} &= \nsuptheta{(\widetilde{\Psi}_n - \Psi_n)(\sigma/\hat{\sigma}) + \Psi_n(\sigma - \hat{\sigma})/\hat{\sigma}}  \notag \\
    &\le \nsuptheta{\widetilde{\Psi}_n - \Psi_n} \nsuptheta{\sigma/\hat{\sigma}} + \nsuptheta{\sigma/\hat{\sigma} - 1} \nsuptheta{\Psi_n}  \notag \\
    &\lesssim \nsuptheta{\widetilde{\Psi}_n - \Psi_n} + o_p(1), \label{eq:bridge}
\end{align}
where the last inequality follows from the fact that $\nsuptheta{\hat{\sigma}/\sigma - 1} = o_p(1)$ and, by Theorem 2.14.17 of \cite{van1996weak}, $\nsuptheta{ \Psi_n } = O_p(1)$. 
To show \eqref{eq:cf_const_sup}, it remains to show that $\|\widetilde{\Psi}_n - \Psi_n\|_{\Theta} = o_p(1)$. 

Without loss of generality, assume $n = N\cdot K$, where $N$ is the sample size of each fold $k = 1, \dots, K$ in the cross-fitting algorithm in \secref{est}. 
Define the empirical process for group $k$ by $G_n^k = \sqrt{N}(P_n^k - P^k)$, where $P_n^k$ is the empirical average over units in fold-$k$ and $P^k$ denotes the expectation with respect to the in-fold-$k$ data distribution conditional on the out-of-fold-$k$ data. 
Let $O_k'$ denote the out-of-fold-$k$ data used to construct the nuisance estimators $\hat\Lambda_{-k}$ and $\hat\mu_{-k}$. 
Then
\begin{align*}
    \widetilde{\Psi}_n(\theta) - \Psi_n(\theta) &= \frac{\hat{\psi}(\theta) - \psi(\theta)}{\sigma(\theta)/\sqrt{n}} - G_n\{\tilde{\phi}(\theta; \Lambda_0, \mu_0)\} \\
    &= \frac{\sqrt{n}}{\sigma(\theta)} \frac{1}{K} \sum_{k=1}^K \left[ P_n^k\{\phi(\theta; \hat\Lambda_{-k}, \hat\mu_{-k})\} - \psi(\theta) - (P_n - P)\phi(\theta; \Lambda_0, \mu_0) \right] \\
    &= \frac{\sqrt{n}}{K\sigma(\theta)} \sum_{k=1}^K \left[ \frac{1}{\sqrt{N}} G_n^k \{\phi(\theta; \hat\Lambda_{-k}, \hat\mu_{-k}) - \phi(\theta; \Lambda_0, \mu_0)\} + P^k \{\phi(\theta; \hat\Lambda_{-k}, \hat\mu_{-k}) - \phi(\theta; \Lambda_0, \mu_0)\} \right] \\
    &= B_{n,1}(\theta) + B_{n,2}(\theta), 
\end{align*}
where the third equality follows by rearranging terms and noting that $\psi(\theta) = P\{\phi(\theta; \Lambda_0, \mu_0)\} =P^k \{\phi(\theta; \Lambda_0, \mu_0)\} $
and $\sum_k P_n^k\{\phi(\theta; \Lambda_0, \mu_0)\} = 
\sum_k P_n\{\phi(\theta; \Lambda_0, \mu_0)\}$, and
\begin{align}
    B_{n,1}(\theta) &= \frac{1}{\sqrt K\sigma(\theta)} \sum_{k=1}^K G_n^k \{\phi(\theta; \hat\Lambda_{-k}, \hat\mu_{-k}) - \phi(\theta; \Lambda_0, \mu_0)\} , \label{eq:B_n1} \\
    B_{n,2}(\theta) &= \frac{\sqrt{n}}{K\sigma(\theta)} \sum_{k=1}^K P^k \{\phi(\theta; \hat\Lambda_{-k}, \hat\mu_{-k}) - \phi(\theta; \Lambda_0, \mu_0)\}. \label{eq:B_n2}
\end{align}
Now, showing that $\|B_{n,1}(\theta)\|_{\Theta}$ and $\|B_{n,2}(\theta)\|_{\Theta}$ are both $o_p(1)$ completes the proof. 

For $B_{n,1}(\theta)$, by the triangle inequality and since $K$ is fixed and independent of the sample size $n$, we have
\begin{align*}
    \|B_{n,1}(\theta)\|_{\Theta} &= \sup_{\theta \in \Theta} \left| \frac{1}{\sqrt{K}\sigma(\theta)} \sum_{k=1}^K G_n^k \{\phi(\theta; \hat\Lambda_{-k}, \hat\mu_{-k}) - \phi(\theta; \Lambda_0, \mu_0)\} \right|  \\
    &\lesssim \max_k \suptheta \abs{G_n^k \{\phi(\theta; \hat\Lambda_{-k}, \hat\mu_{-k}) - \phi(\theta; \Lambda_0, \mu_0)\} } ,
\end{align*}

By Section 4.3 of \citet{hines2022demystifying}, a direct application of Chebyshev's inequality implies that $\abs{G_n^k \{\phi(\theta; \hat\Lambda_{-k}, \hat\mu_{-k}) - \phi(\theta; \Lambda_0, \mu_0)\} } =o_p(1)$ if $E\cur{\left. \abs{\phi(\theta; \hat\Lambda_{-k}, \hat\mu_{-k}) - \phi(\theta; \Lambda_0, \mu_0)}^2\right|O_k'} = o_p(1)$. Following the same bounding arguments used in the proof of \thmref{eifcons}, we can verify that
\begin{equation*}
    \abs{\phi(\theta; \hat\Lambda_{-k}, \hat\mu_{-k}) - \phi(\theta; \Lambda_0, \mu_0)} 
    \lesssim 
    \supmuk + (|Y|+1)\suplamk,
\end{equation*}
where the implicit constant depends on uniform bounds for $\theta$, $\mu_0$, $\Lambda_0$, and $\hat\Lambda$.
Therefore, under Assumptions~\ref{assump:bound} - \ref{assump:cf},
\begin{align*}
    & E\cur{\left. \abs{\phi(\theta; \hat\Lambda_{-k}, \hat\mu_{-k}) - \phi(\theta; \Lambda_0, \mu_0)}^2 \right| O_k'} \\
    & \lesssim E \Bigg\{\supmuk^2 + (|Y|+1)^2 \suplamk^2 \\
    & \phantom{\lesssim E \Bigg\{} \left.+ 2 (|Y|+1) \supmuk \suplamk \right| O_k' \Bigg\} \\
    & \lesssim \dgsupnmu^2 + \cur{E\bra{Y^4}}^{1/2} \cdot \|\hat\Lambda-\Lambda_0\|_{\dagger,\sup,4}^2 \\
    & \phantom{\lesssim \|} + \cur{E\bra{Y^2}}^{1/2} \cdot \dgsupnmu + \dgsupnmu \cdot \dgsupnlam = o_p(1).
\end{align*}
This concludes that $\|B_{n,1}(\theta)\|_{\Theta} = o_p(1)$.

For $B_{n,2}(\theta)$, by \lemref{cond_exp}, for any $k \in \{1,\dots,K\}$,
\begin{align}
    & \abs{P^k \{\phi(\theta; \hat\Lambda_{-k}, \hat\mu_{-k}) - \phi(\theta; \Lambda_0, \mu_0)\} }
    \equiv \abs{E \cur{ \left.\phi(\theta; \hat\Lambda_{-k}, \hat\mu_{-k}) - \phi(\theta; \Lambda_0, \mu_0) \right| O_k'} } \\
    & \lesssim \dgsupnmuk \cdot \dgsupnlamk + \dgsupnlamk^2 + \|\hat\Lambda_{-k}-\Lambda_0\|_{\dagger,\sup,4}^2,
\end{align}
where the implicit constant in the upper bound depends on $\theta$ only through $\suptl \theta(t,l)$. If $Y$ is bounded, the bound reduces to $\dgsupnmuk \cdot \dgsupnlamk + \dgsupnlamk^2$.

Then, under the conditions of \thmref{cf},
$$
\sup_{\theta \in \Theta} \abs{P^k \{\phi(\theta; \hat\Lambda_{-k}, \hat\mu_{-k}) - \phi(\theta; \Lambda_0, \mu_0)\}} = \opr, \text{ for any } k \in \{1,\dots,K\}.
$$
Finally, 
\begin{align*}
    \|B_{n,2}(\theta)\|_{\Theta} &= \sup_{\theta \in \Theta} \left| \frac{\sqrt{n}}{K\sigma(\theta)} \sum_{k=1}^K P^k \{\phi(\theta; \hat\Lambda_{-k}, \hat\mu_{-k}) - \phi(\theta; \Lambda_0, \mu_0)\} \right|  \\
    &\lesssim \sqrt n  \max_k \sup_{\theta \in \Theta} \abs{P^k \{\phi(\theta; \hat\Lambda_{-k}, \hat\mu_{-k}) - \phi(\theta; \Lambda_0, \mu_0)\}} = o_p(1),
\end{align*}
which concludes the proof.
\end{proof}


\subsection{Verification of \assumpref{cf} for Special Cases} \label{sec:verify}

We verify that Assumptions \ref{assump:cf:sigma} and \ref{assump:cf:donsker} hold when $\theta(t, l) \equiv \theta \in \cD = [c, C]$, where $0 < c \leq C < \infty$. The verification can be extended to a more general class of functions $\theta(t,l;\gamma)$ parameterized by a vector of parameters $\gamma \in \Gamma \subseteq \bbR^d$, where $1 \leq d < \infty$ and $\Gamma$ is compact, provided that
$\theta(t,l;\gamma)$ is continuous in $\gamma$ and its partial derivative with respect to each component of $\gamma$ exists and is uniformly bounded over $(t,l)$.

\begin{proof}
First consider the case where $\theta(t, l) \equiv \theta \in \cD = [c, C]$ with $0 < c \leq C < \infty$.
\assumpref{cf:sigma} holds because $\sigma(\theta)$ is positive and continuous (because $\phi(\theta; \Lambda_0, \mu_0)$ is continuous in $\theta$) on the compact interval $\cD$, and hence $\supth \abs{1/\sigma(\theta)} < \infty$. 

Theorem 2.5.6 of \cite{van1996weak} states that a class of measurable functions with a finite $L_2(P)$-bracketing integral is Donsker. \assumpref{cf:donsker} follows by verifying that $\cF_{\Lambda_0, \mu_0} \coloneqq \{\phi(\theta; \Lambda_0, \mu_0): \theta \in \cD\}$ has a square integrable envelope function $F$ and $\int_0^1 \sqrt{1 + \log N_{[]}(\epsilon \|F\|, \cF_{\Lambda_0, \mu_0}, L_2(P))} d\epsilon < \infty$. 
The former was shown in the proof of \thmref{cf} in \secref{cf_proof}. The latter holds since
according to Theorem 2.7.17 in \cite{van1996weak}, the $L_2(P)$-bracketing number $N_{[]}(\epsilon \|F\|, \cF_{\Lambda_0, \mu_0}, L_2(P))$ is bounded by the covering number $N (\epsilon/2, \cD, |\cdot|)$, 
which is the minimum number of balls of radius $\epsilon/2$ needed to cover the set $\cD$ under the Euclidean metric $|\cdot|$.
Consequently,
\begin{align}
    \int_0^1 \sqrt{1 + \log N_{[]}(\epsilon \|F\|, \cF_{\Lambda_0, \mu_0}, L_2(P))} d\epsilon 
    & \leq \int_0^1 \sqrt{1 + \log N \left(\frac{\epsilon}{2}, \cD, |\cdot| \right)} d\epsilon \notag \\
    & \lesssim \int_0^1 \sqrt{1 + \log \frac{1}{\epsilon}} d\epsilon < \infty, \label{eq:finite_bra_int}
\end{align}
which completes the verification of \assumpref{cf:donsker}.

The same arguments also extend to the general parametrized class described above.
\end{proof}

\subsection{Proof of \thmref{ui_const}} \label{sec:ui_const_proof}

\begin{proof}
Without loss of generality, assume $n = N\cdot K$, where $N$ is the sample size of each fold $k = 1, \dots, K$ in the cross-fitting algorithm in \secref{est}. We adopt the same notation as in the proof of \thmref{cf} in \secref{cf_proof}.

Denote by $\widehat\Psi_n^*(\theta)$ the multiplier process defined in \eqref{eq:multi_pro} of the main paper. Then
\begin{align}
    \widehat\Psi_n^*(\theta) &= \frac{1}{\sqrt{n}} \sum_{k=1}^K \sum_{i\in \cI_k} \xi_i \cur{\frac{\phi_i(\theta; \hat \Lambda_{-k}, \hat \mu_{-k}) - \hat{\psi}_{\cf}(\theta)}{\hat \sigma_{\cf}(\theta)}} 
    \equiv \frac{\sqrt n}{K} \sum_k P_n^k \squ{ \xi \cur{\frac{\phi(\theta; \hat \Lambda_{-k}, \hat \mu_{-k}) - \hat{\psi}_{\cf}(\theta)}{\hat \sigma_{\cf}(\theta)}} } \notag \\
    &= \frac{\sqrt n}{K} \sum_k \bra{P_n^k - P^k} \squ{ \xi\cur{\frac{\phi(\theta; \hat \Lambda_{-k}, \hat \mu_{-k}) - \hat{\psi}_{\cf}(\theta)}{\hat \sigma_{\cf}(\theta)}} } \label{eq:xi_ind} \\
    &= \frac{1}{\sqrt K} \sum_k  G_n^k \squ{ \xi \cur{\frac{\phi(\theta; \hat \Lambda_{-k}, \hat \mu_{-k}) - \hat{\psi}_{\cf}(\theta)}{\hat \sigma_{\cf}(\theta)}} }, \notag
\end{align}
where \eqref{eq:xi_ind} follows from the independence of the multipliers and the data together with $E(\xi)=0$, which yields $P^k \squ{\xi\cur{\phi(\theta; \hat \Lambda_{-k}, \hat \mu_{-k}) - \hat{\psi}_{\cf}(\theta)}/\hat \sigma_{\cf}(\theta) } =0$.

Define the processes
\begin{align*}
    \widetilde \Psi_n^*(\theta) &= \frac{1}{\sqrt K} \sum_k  G_n^k \squ{ \xi \cur{\frac{\phi(\theta; \hat \Lambda_{-k}, \hat \mu_{-k}) - \hat{\psi}_{\cf}(\theta)}{\sigma(\theta)}} }, \\
    \Psi_n^*(\theta) &= \frac{1}{\sqrt K} \sum_k  G_n^k \squ{ \xi \cur{\frac{\phi(\theta; \Lambda_0, \mu_0) - \psi(\theta)}{\sigma(\theta)}} } .
\end{align*}

Following the proof of Theorem 4 in \citet{kennedy2019nonparametric}, which relies on the Gaussian approximation results of \citet{chernozhukov2014gaussian}, and using the result of \thmref{cf}, it suffices to verify
\begin{equation*}
    \nsuptheta{\widehat{\Psi}_n^* - \Psi_n^* } = o_p(1)
\end{equation*}
in order to establish \thmref{ui_const}.
Similar to the argument used to obtain \eqref{eq:bridge}, under the assumptions of \thmref{ui_const}, it further suffices to show that
$$
\nsuptheta{ \widetilde{\Psi}_n^* -\Psi_n^* }  = o_p(1).
$$

We have
\begin{align}
    \nsuptheta{\widetilde{\Psi}_n^* -\Psi_n^*} =& \nsuptheta{\frac{1}{\sqrt K} \sum_k  G_n^k \squ{ \xi \cur{\frac{\phi(\theta; \hat \Lambda_{-k}, \hat \mu_{-k}) - \hat{\psi}_{\cf}(\theta)}{\sigma(\theta)} - \frac{\phi(\theta; \Lambda_0, \mu_0) - \psi(\theta)}{\sigma(\theta)} } }} \notag \\
    \leq& \nsuptheta{ \frac{1}{\sqrt K \sigma(\theta)} \sum_k  G_n^k \squ{ \xi \cur{ \phi(\theta; \hat \Lambda_{-k}, \hat \mu_{-k}) - \phi(\theta; \Lambda_0, \mu_0) } } } \label{eq:ui_df_1} \\
    &+ \nsuptheta{ \frac{1}{\sqrt K \sigma(\theta)} \sum_k  G_n^k \squ{ \xi \cur{ \hat{\psi}_{\cf}(\theta) - \psi(\theta) } } }.  \label{eq:ui_df_2}
\end{align}

For \eqref{eq:ui_df_1}, following the argument used to show $\|B_{n,1}\|_{\Theta} = o_p(1)$ in the proof of \thmref{cf}, and using the fact that $\xi$ is independent of the data with unit variance, we obtain 
$$ \nsuptheta{ \frac{1}{\sqrt K \sigma(\theta)} \sum_k  G_n^k \squ{ \xi \cur{ \phi(\theta; \hat \Lambda_{-k}, \hat \mu_{-k}) - \phi(\theta; \Lambda_0, \mu_0) } } } =o_p(1).$$

For \eqref{eq:ui_df_2}, following the proof of \thmref{eifcons}, under Assumptions~\ref{assump:bound} - \ref{assump:cf}, we can show that $\suptheta \abs{\hat{\psi}_{\cf}(\theta) - \psi(\theta)} = o_p(1)$. Then
$$
\nsuptheta{ \frac{1}{\sqrt K \sigma(\theta)} \sum_k  G_n^k \squ{ \xi \cur{ \hat{\psi}_{\cf}(\theta) - \psi(\theta) } } } \lesssim \sum_k \abs{G_n^k \xi} \cdot \suptheta \abs{\hat{\psi}_{\cf}(\theta) - \psi(\theta)} = o_p(1),
$$
since $\abs{G_n^k \xi} = O_p(1)$ for each $k$ by the central limit theorem.

Therefore, $\nsuptheta{\widetilde{\Psi}_n^* -\Psi_n^*}  = o_p(1)$. This completes the proof.
\end{proof}

\section{Proofs of Lemmas} \label{sec:pf_lem}

\subsection{Proof of \lemref{sup_tv_norm_bound}} \label{sec:pf_sup_tv_norm_bound}

\begin{proof}
The first two bounds are implied by \eqref{eq:bound_int_delam} and \eqref{eq:bound_exp_theta}, respectively.

For the third bound, using integration by parts we have
\begin{align*}
&\supt \abs{ D G(t,l;\hat\Lambda)[h] - D G(t,l;\Lambda_0)[h] }
= \supt \abs{\int_0^t \{\theta(v,l)-1\} d \cur{e^{\hat\Lambda(v|l)} - e^{\Lambda_0(v|l)}} h(v,l)} \\
&=\supt \abs{\left.\{\theta(v,l)-1\} \cur{e^{\hat\Lambda(v|l)} - e^{\Lambda_0(v|l)}} h(v,l) \right|_0^t - \int_0^t \cur{e^{\hat\Lambda(v|l)} - e^{\Lambda_0(v|l)}} h(v,l) d \{\theta(v,l)-1\}} \\
&\lesssim \supt \abs{\cur{e^{\hat\Lambda(t|l)} - e^{\Lambda_0(t|l)}} h(t,l)} \\
&\lesssim e^{\hat \Lambda(\tau|l)} \cdot \suplaml \cdot \supt \abs{h(t,l)},
\end{align*}
where  the first and second inequalities follow from \assumpref{bound} and \eqref{eq:bound_elam}, respectively.

For the last bound, we have
\begin{align*}
& \supt \abs{DS(t,l;\hat\Lambda)[h] - DS(t,l;\Lambda_0)[h]} \\
& = \supt \abs{S(t,l;\hat\Lambda) \int_0^t \theta(v,l) d h(v,l) - S(t,l;\Lambda_0) \int_0^t \theta(v,l) d h(v,l)} \\
& \leq \supt \abs{S(t,l;\hat\Lambda) - S(t,l;\Lambda_0)} \cdot \supt \abs{\int_0^t \theta(v,l) d h(v,l)} \\
& \lesssim e^{\sup_{t,l} |\theta(t,l)| \cdot \hat \Lambda(\tau|l)} \cdot \suplaml \cdot \supt \abs{h(t,l)},
\end{align*}
where the last inequality follows from \eqref{eq:bound_exp_theta}, applying integration by parts to $\int_0^t \theta(v,l) d h(v,l)$, and \assumpref{bound}.
\end{proof}

\subsection{Proof of \lemref{cond_exp}} \label{sec:pf_cond_exp}

\begin{proof}
Similar to the decomposition in \eqref{eq:pfcons_diff} used in the proof of \thmref{eifcons}, we consider
\begin{equation}
    E\cur{\left.\phi(\theta; \hat \Lambda, \hat \mu) - \phi(\theta; \Lambda_0, \mu_0) \right| O'} = D_1 + D_2 + D_3, \label{eq:D_decomposition}
\end{equation}
where
\begin{align*}
D_1 &= E \left\{ \left. \phi(\theta; \hat \Lambda, \mu_0) - \phi(\theta; \Lambda_0, \mu_0) \right| O' \right\}, \\
D_2 &= E \left\{ \left. \phi(\theta; \Lambda_0, \hat \mu) - \phi(\theta; \Lambda_0, \mu_0) \right| O' \right\}, \\
D_3 &= E \left[ \left. \left\{ \phi(\theta; \hat \Lambda, \hat \mu) - \phi(\theta; \Lambda_0, \hat \mu) \right\} - \left\{ \phi(\theta; \hat \Lambda, \mu_0) - \phi(\theta; \Lambda_0, \mu_0) \right\} \right| O' \right].
\end{align*}
We leverage the Gateaux derivative to bound each term above. 

For $D_1$, define $F(\Lambda) \coloneqq E\{\phi(\theta;\Lambda,\mu_0)\}$.  
We have
$$
D_1 = E\cur{\left. \phi(\theta; \hat \Lambda, \mu_0) - \phi(\theta; \Lambda_0, \mu_0) \right| O'} = F(\hat\Lambda) - F(\Lambda_0).
$$
Write 
\begin{equation}
\phi(\theta; \Lambda, \mu_0) 
= \phi_1(\theta; \Lambda, \mu_0) - \phi_2(\theta; \Lambda, \mu_0) + \phi_3(\theta; \Lambda, \mu_0), \label{eq:d1_phi}
\end{equation}
where
\begin{align*}
\phi_1(\theta; \Lambda, \mu_0) &= Y \theta(U,L)^{\Delta} e^{-\int_0^U\{\theta(v,L) - 1\}d\Lambda(v|L)}, \\
\phi_2(\theta; \Lambda, \mu_0) &= \int_0^\tau \mu_0(u, L) \left[\int_0^{U \wedge u} \left\{\theta(v, L)-1\right\} d e^{\Lambda(v | L)} \right] d e^{-\int_0^u \theta(v, L) d \Lambda(v | L)}, \\
\phi_3(\theta; \Lambda, \mu_0) &= \{\theta(U, L) - 1\} e^{\Lambda(U | L)} \int_{U+}^\tau \mu_0(u, L) d e^{-\int_0^u \theta(v, L) d \Lambda(v | L)}.
\end{align*}
For simplicity of notation, we suppress the dependency of $\phi, \phi_1, \phi_2$ and $\phi_3$ on $\theta$ and $\mu_0$. We can verify that the Gateaux derivative of $\phi$ at $\Lambda$ in the direction $h$ is
\begin{equation}
D\phi(\Lambda)[h] \coloneqq D\phi(\theta;\Lambda,\mu_0)[h] = D\phi_1(\Lambda)[h] - D\phi_2(\Lambda)[h] + D\phi_3(\Lambda)[h], \label{eq:G_phi_Lambda}
\end{equation}
where
\begin{align*}
D\phi_1(\Lambda)[h] &= -Y \theta(U,L)^\Delta e^{-\int_0^U\{\theta(v,L)-1\} d\Lambda(v|L)} \int_0^U\{\theta(v,L)-1\} dh(v|L), \\
D\phi_2(\Lambda)[h] &= \int_0^\tau \mu_0(u,L) \cur{DG(U \wedge u,L;\Lambda)[h] dS(u,L;\Lambda) + G(U \wedge u,L;\Lambda) d DS(u,L;\Lambda)[h]}, \\
D\phi_3(\Lambda)[h] &= \{\theta(U, L) - 1\} e^{\Lambda(U | L)} \\
&\phantom{= \{} \cdot \cur{h(U|L) \int_{U+}^\tau \mu_0(u,L) dS(u,L;\Lambda) 
+ \int_{U+}^\tau \mu_0(u,L) d DS(u,L;\Lambda)[h]}.
\end{align*}
By \eqref{eq:bound_elam}, \eqref{eq:bound_exp_theta_1}, Lemma \ref{lem:sup_tv_norm_bound}, and integration by parts, it is straightforward to show that under Assumptions~\ref{assump:bound} - \ref{assump:nuiest}, 
\begin{align}
|D\phi(\hat\Lambda)[h] - D\phi(\Lambda_0)[h]| 
\lesssim & |Y| \cdot \suplam \cdot \suph \notag \\
& + \suplam \cdot \suph, \label{eq:G_diff}
\end{align}
where the implicit constant depends only on the uniform bounds of $\theta(t,l)$, $\mu_0(t,l)$, $\Lambda_0(t,l)$, and $\hat\Lambda(t,l)$ over $t \in [0,\tau]$ and $l \in \cL$.

Substituting $\Lambda_0 + t(\hat\Lambda-\Lambda_0)$ for $\hat\Lambda$, setting $h=\hat\Lambda-\Lambda_0$,  we have
\begin{align}
    &  \abs{ D\phi(\Lambda_0 + t(\hat\Lambda-\Lambda_0))[\hat\Lambda-\Lambda_0] - D\phi(\Lambda_0)[\hat\Lambda-\Lambda_0] } \notag \\
     \lesssim & \squ{ |Y| \suplam^2+ \suplam^2} \cdot t. \label{eq:G_diff2}
\end{align}

Following Theorem 51 in \cite{vainberg1964variational}, we have 
$$\frac{d}{dt} F(\Lambda_0 + t(\hat\Lambda-\Lambda_0)) = DF(\Lambda_0 + t(\hat\Lambda-\Lambda_0))[\hat\Lambda-\Lambda_0],$$
$\forall t\in[0,1]$. 
Integrating both sides with respect to $t$ gives
\begin{align*}
F(\hat\Lambda) - F(\Lambda_0) 
=& \int_0^1 \frac{d}{dt} F(\Lambda_0 + t(\hat\Lambda-\Lambda_0)) dt 
= \int_0^1 DF(\Lambda_0 + t(\hat\Lambda-\Lambda_0))[\hat\Lambda-\Lambda_0] dt \\
=& \int_0^1 \cur{DF(\Lambda_0 + t(\hat\Lambda-\Lambda_0))[\hat\Lambda-\Lambda_0] - DF(\Lambda_0)[\hat\Lambda-\Lambda_0]} dt,
\end{align*}
where the last equality follows because $ DF(\Lambda_0)[\hat\Lambda-\Lambda_0]=\left.\left. {\partial} E\cur{\phi\bra{\theta; \Lambda_0 + t(\hat\Lambda - \Lambda_0)}} \right/ {\partial t} \right|_{t=0} = 0$ due to Neyman orthogonality  
of the efficient influence function. 

Here we show that the order of Gateaux differentiation and the expectation in $F$ can be exchanged. 
Under Assumptions~\ref{assump:bound} - \ref{assump:nuiest},  using arguments similar to those in the proof of \thmref{eifcons}, we can verify that for $t \in \bbR$,
$$
\abs{\frac{1}{t} \cur{ \phi(\Lambda_0 + t(\hat\Lambda - \Lambda_0))-\phi(\Lambda_0)} } \lesssim (|Y|+1) \suplam,
$$
where the RHS is integrable. By dominated convergence theorem, 
\begin{align*}
    D F(\Lambda_0)[\hat\Lambda - \Lambda_0] 
    &= \lim_{t \to 0} \frac{1}{t} \cur{F(\Lambda_0 + t (\hat\Lambda - \Lambda_0)) - F(\Lambda_0) } \\
    &= \lim_{t \to 0} E \squ{ \frac{1}{t} \cur{ \phi(\Lambda_0 + t (\hat\Lambda - \Lambda_0)) - \phi(\Lambda_0)} } \\
    &= E \squ{ \lim_{t \to 0} \frac{1}{t} \cur{ \phi(\Lambda_0 + t (\hat\Lambda - \Lambda_0)) - \phi(\Lambda_0)} } 
    = E \cur{ D\phi(\Lambda_0)[\hat\Lambda - \Lambda_0]}.
\end{align*}
Likewise, $DF(\Lambda_0 + t(\hat\Lambda-\Lambda_0))[\hat\Lambda-\Lambda_0] = E\cur{ D\phi(\Lambda_0 + t(\hat\Lambda-\Lambda_0))[\hat\Lambda-\Lambda_0] }$. 

Thus,
\begin{align}
    & \abs{DF(\Lambda_0 + t(\hat\Lambda-\Lambda_0))[\hat\Lambda-\Lambda_0] - DF(\Lambda_0)[\hat\Lambda-\Lambda_0]} \notag \\
    & = \abs{E\cur{D\phi(\Lambda_0 + t(\hat\Lambda-\Lambda_0))[\hat\Lambda-\Lambda_0] - D\phi(\Lambda_0)[\hat\Lambda-\Lambda_0]}} \notag \\
    & \lesssim E \squ{ |Y| \suplam^2+ \suplam^2} \cdot t, \label{eq:DF_bound}
\end{align}
where \eqref{eq:DF_bound} follows from \eqref{eq:G_diff2}.
Therefore,
\begin{align}
|D_1| &= |F(\hat\Lambda) - F(\Lambda_0)| \notag\\
&\leq \int_0^1 \abs{DF(\Lambda_0 + t(\hat\Lambda-\Lambda_0))[\hat\Lambda-\Lambda_0] - DF(\Lambda_0)[\hat\Lambda-\Lambda_0]} dt \notag \\
&\lesssim \int_0^1 E\squ{|Y| \suplam^2} t dt
+ \int_0^1 E\squ{\suplam^2} t dt \label{eq:D2_condition} \\
& \lesssim E\bra{Y^2}^{\frac{1}{2}} \cdot E\squ{\suplam^4}^{\frac{1}{2}}
+ E\squ{\suplam^2} \label{eq:cs_sum} \\
&\lesssim \|\hat\Lambda-\Lambda_0\|_{\dagger,\sup,4}^2 + \dgsupnlam^2, \label{eq:d1_bound}
\end{align}
where \eqref{eq:D2_condition}, \eqref{eq:cs_sum}, and \eqref{eq:d1_bound} follow from \eqref{eq:DF_bound}, the Cauchy–Schwarz inequality, and $E\bra{Y^2} < \infty$ under \assumpref{bound}, respectively.
If $Y$ is bounded, then 
$$E\cur{|Y| \suplam^2} \lesssim E\cur{\suplam^2}$$ 
in \eqref{eq:D2_condition}; consequently, the bound in \eqref{eq:d1_bound} reduces to
\begin{equation}
    |D_1|\lesssim \dgsupnlam^2 \label{eq:d1_bound_relaxed}.
\end{equation}

For $D_2$, we apply the same Gateaux derivative approach to establish that
\begin{equation}
    D_2 = E\cur{ \left.\phi(\theta; \Lambda_0, \hat \mu) - \phi(\theta; \Lambda_0, \mu_0) \right|O'}= 0 \label{eq:d2}.
\end{equation}
More specifically, since $\phi(\theta; \Lambda_0, \mu)$ is affine in $\mu$, we can show that its Gateaux derivatives at different $\mu$ values are identical, i.e.~$D\phi(\mu)[h] = D\phi(\mu')[h]$, for any $\mu,\mu'$ and direction $h$.
Define $H(\mu) \coloneqq E \cur{\phi(\theta;\Lambda_0,\mu)}$, where we suppress the dependency of $\phi$ on $\theta$ and $\Lambda_0$. Then, similar to the argument leading to \eqref{eq:d1_bound},  
\begin{align*}
    |D_2| &= |H(\hat\mu) - H(\mu_0)| \\
    &\leq \int_0^1 \abs{DH(\mu_0 + t(\hat\mu-\mu_0))[\hat\mu-\mu_0] - DH(\mu_0)[\hat\mu-\mu_0]} dt \\
    &\leq \int_0^1 \abs{E\cur{D\phi(\mu_0 + t(\hat\mu-\mu_0))[\hat\mu-\mu_0] - D\phi(\mu_0)[\hat\mu-\mu_0]}} dt = 0.
\end{align*}
This concludes \eqref{eq:d2}. 

For $D_3$, since $\phi(\theta; \cdot, \mu)$ is affine in $\mu$, this term has a structure similar to $D_1$: 
\begin{align*}
    D_3 &= E \left[ \left. \left\{ \phi(\theta; \hat \Lambda, \hat \mu) - \phi(\theta; \Lambda_0, \hat \mu) \right\} - \left\{ \phi(\theta; \hat \Lambda, \mu_0) - \phi(\theta; \Lambda_0, \mu_0) \right\} \right| O' \right] \\
    &= E\cur{\left. -\phi_2(\theta; \hat \Lambda, \hat\mu - \mu_0) + \phi_3(\theta; \Lambda_0, \hat\mu - \mu_0) \right| O'},
\end{align*}
where $\phi_2$ and $\phi_3$ are defined in \eqref{eq:d1_phi}.
By applying the same bounding procedure as for $D_1$, we omit the analogous details and obtain
\begin{equation}
    D_3 \lesssim \dgsupnmu \cdot \dgsupnlam. \label{eq:d3_bound}
\end{equation}
Intuitively, $D_1$ depends only on the estimation error of $\hat\Lambda$, while $D_3$ additionally involves the estimation error of $\hat\mu$.

Combining \eqref{eq:D_decomposition} with \eqref{eq:d1_bound}, \eqref{eq:d2}, and \eqref{eq:d3_bound}, and applying the triangle inequality, yields
\begin{align}
& \abs{E\cur{ \left.\phi(\theta; \hat \Lambda, \hat \mu) - \phi(\theta; \Lambda_0, \mu_0) \right| O'}} \notag \\
& \lesssim \dgsupnmu \cdot \dgsupnlam + \dgsupnlam^2
+ \|\hat\Lambda-\Lambda_0\|_{\dagger,\sup,4}^2, \label{eq:cf_const_remainder_rate}
\end{align}
where the implicit constant 
in the upper bound depends on $\theta$ only through $\sup_{t,l} \theta(t,l)$. 
If $Y$ is bounded, by \eqref{eq:d1_bound_relaxed},
the bound reduces to $ \dgsupnmu \cdot \dgsupnlam + \dgsupnlam^2$. 
\end{proof}


\end{document}